\documentclass[11pt,a4paper]{amsart}
\usepackage[utf8]{inputenc}
\usepackage[english]{babel}
\usepackage[T1]{fontenc}
\usepackage{amsmath}
\usepackage{amsthm}
\usepackage{amsfonts}
\usepackage{amssymb}
\usepackage{lmodern}
\usepackage{mathrsfs}
\usepackage{physics}
\usepackage[dvipsnames]{xcolor}
\usepackage[colorlinks=true,citecolor=blue,linkcolor=BrickRed]{hyperref}
\usepackage{graphicx}
\usepackage[left=2.5cm,right=2.5cm,top=2.5cm,bottom=2.5cm]{geometry}
\usepackage{enumerate}
\usepackage{tikz-cd}

\usepackage{bm}
\usetikzlibrary{backgrounds}
\usetikzlibrary{calc}
\usetikzlibrary{hobby}
\usetikzlibrary{decorations.markings}
\usetikzlibrary{arrows.meta}
\usetikzlibrary{patterns}
\usepackage{caption}
\usepackage{subcaption}
\usepackage{array}
\usepackage{float}
\usepackage{afterpage}
\usepackage{hhline}
\usepackage{ytableau}

\numberwithin{equation}{section}

\DeclareMathOperator{\Ram}{Ram}
\DeclareMathOperator{\Irr}{Irr}
\DeclareMathOperator{\Slope}{Slope}

\DeclareMathOperator{\Id}{Id}

\DeclareMathOperator{\GL}{GL}

\newcommand{\cir}[1]{\langle #1 \rangle}

\begin{document}

\renewcommand{\proofname}{Proof}
\renewcommand{\Re}{\operatorname{Re}}
\renewcommand{\Im}{\operatorname{Im}}
\renewcommand{\labelitemi}{$\bullet$}

\newtheorem{theorem}{Theorem}[section]
\newtheorem{proposition}[theorem]{Proposition}
\newtheorem{lemma}[theorem]{Lemma}
\newtheorem{corollary}[theorem]{Corollary}
\newtheorem{conjecture}[theorem]{Conjecture}

\theoremstyle{definition}
\newtheorem{definition}[theorem]{Definition}
\newtheorem{notation}[theorem]{Notation}
\newtheorem{example}[theorem]{Example}
\newtheorem{remark}[theorem]{Remark}
\newtheorem*{claim}{Claim}


\title[Fourier transform of irregular connections and AD theories]{Fourier transform of irregular connections on $\mathbb P^1$ and classification of Argyres--Douglas theories}

\author{Jean Douçot}

\address[J.~Douçot]{`Simion Stoilow' Institute of Mathematics of the Romanian Academy,
	Calea Griviței 21,
	010702-Bucharest, 
	Sector 1, 
	Romania}
	\email{jeandoucot@gmail.com}

\maketitle

\begin{abstract}
We give a mathematical interpretation of the dualities between type $A$ Argyres--Douglas theories recently obtained by Beem, Martone, Sacchi, Singh and Stedman, building on work of Xie. Using the fact that, via the wild nonabelian Hodge correspondence, the data defining such a theory amount to singularity data for irregular connections on $\mathbb P^1$ of a specific form, we show that these dualities can all be realized as compositions of two types of more basic operations acting on such irregular connections: the Fourier transform and a Möbius transformation exchanging zero and infinity. The proof relies on the stationary phase formula giving explicit expressions for the singularity data of the Fourier transform. We also clarify the relation between the quivers describing the 3d mirrors of type $A$ Argyres--Douglas theories and the nonabelian Hodge diagrams defined in work of Boalch--Yamakawa and of the author: the 3d mirror corresponds to the unique nonabelian Hodge diagram with no negative edges/loops among those of singularity data in the corresponding orbit under basic operations.
\end{abstract}

\tableofcontents

\section{Introduction}

\subsection{Background and motivation} The goal of this work is to investigate the links between the classification problems for two different types of objects: moduli spaces of irregular connections on $\mathbb P^1$, and type $A$ Argyres--Douglas theories. 

\subsubsection{Moduli spaces in 2d gauge theory} 

Moduli spaces of irregular connections on complex algebraic curves are very rich mathematical objects. One aspect that makes them particularly interesting is that they are just one side (the de Rham side) of a larger picture. Indeed, as differentiable manifolds, they are isomorphic to moduli spaces for other types of objects defined on algebraic curves: meromorphic Higgs bundles (the Dolbeault side), via the irregular version of the nonabelian Hodge correspondence \cite{sabbah1999harmonic, biquard2004wild}; and generalized monodromy data, known as Stokes data (the Betti side) via the irregular Riemann--Hilbert correspondence. The Betti moduli spaces are called \emph{wild character varieties} since they generalize the usual character varieties, i.e. the moduli spaces of representations of the fundamental group of the underlying Riemann surface.  Furthermore, these \emph{nonabelian Hodge spaces} admit hyperkähler and symplectic structures, and they are phase spaces for (isospectral and isomonodromic) integrable systems, encompassing many systems of interest in mathematical physics, such as all Painlevé equations (see for instance \cite{boalch2018wild} for a survey.)

To obtain such a moduli space, one has to specify some initial data. They take the form of an \emph{irregular curve with boundary data}: this means a tuple $\bm\Sigma=(\Sigma, \mathbf a,\bm\Theta,\bm{\mathcal C})$, where $\Sigma$ is a smooth complex projective curve (the base curve on which connections are considered), $\mathbf a =\{a_1, \dots, a_n\}\subset \Sigma$ is a finite set of points in $\Sigma$ (the singularities of the connections), and the pair $(\bm\Theta, \bm{\mathcal C})$ encodes the types of the singularities of the connections at the points $a_i$. More precisely, on the Betti side, the \emph{irregular curve} $(\Sigma, \mathbf a,\bm\Theta)$ determines a wild character variety $\mathcal M_ B(\Sigma, \mathbf a,\bm\Theta)$, which admits an algebraic Poisson structure \cite{boalch2014geometry, boalch2015twisted}. Then any choice of \emph{boundary data}  determines a symplectic leaf $\mathcal M_B(\Sigma, \mathbf a,\bm\Theta,\bm{\mathcal C})$ of $\mathcal M_ B(\Sigma, \mathbf a,\bm\Theta)$. Here we will be concerned with the symplectic versions of wild character varieties. 

In genus zero, for $\Sigma=\mathbb P^1$, it turns out that there exist some highly non-trivial isomorphisms between moduli spaces corresponding to connections with different ranks, number of singularities and pole orders. At the level of the corresponding integrable systems, this is reflected by the fact that they admit several \emph{Lax representations}. This raises the question of the classification of such moduli spaces.\\

\subsubsection{Towards a classification in genus zero: diagrams and Fourier transform} A circle of ideas that has played an important role in making progress towards this problem is that moduli spaces of irregular connections on $\mathbb P^1$ are closely related to quiver varieties. An important motivation for this line of thought has been the work of Okamoto \cite{okamoto1992painleve}, who showed in the 90s that the Painlevé equations admit symmetry groups that are Weyl groups for some affine Dynkin diagrams. 

In the case of regular singularities, works of Nakajima and Crawley-Boevey have established that moduli spaces of connections with regular singularities on trivial bundles on $\mathbb P^1$ are isomorphic to quiver varieties associated to star-shaped quivers \cite{nakajima1994instantons, crawley2003matrices}. These results have then been extended by Boalch and Hiroe--Yamakawa to certain irregular cases, involving one (unramified) irregular singularity together with regular singularities \cite{boalch2008irregular, boalch2012simply, hiroe2014moduli}: the corresponding moduli spaces of irregular connections on trivial bundles on $\mathbb P^1$ are isomorphic to quiver varieties for so-called \emph{supernova quivers}. The moduli spaces corresponding to the Painlevé equations with number VI, V, IV, and II fit into this framework, and the quivers that appear are exactly the Dynkin diagrams found by Okamoto. In the \emph{simply-laced} case, i.e. when the irregular singularity is a pole of order at most 3, remarkably, this gives a clean graphical way to identify many isomorphisms between different moduli spaces of irregular connections: the same simply-laced supernova quiver can be obtained from connections with different ranks and number of singularities, and these different `readings' of the quiver correspond to isomorphisms between the moduli spaces \cite{boalch2012simply, boalch2016global}.

Most such isomorphisms are induced by a class of \emph{basic operations} on irregular connections on $\mathbb P^1$, including notably the \emph{Fourier transform} (or Laplace, or Fourier--Laplace transform). It acts in a complicated way, changing the number of singularities, the ranks, and the types of singularities (for instance it transforms a connection with regular singularities into one with an irregular singularity at infinity). Other types of basic operations include Möbius transformations (changes of global coordinate on $\mathbb P^1$), and twists (tensoring by a given rank one connection). In particular, it was shown by Deligne and Arinkin \cite{deligne2006letter, arinkin2010rigid}, extending work of Katz \cite{katz1996rigid}, that these operations allow one to obtain a complete classification of moduli spaces of irregular connections of $\mathbb P^1$ in the \emph{rigid} case, i.e. when the moduli space is reduced to a point. Some simple non-rigid irregular cases have been recently considered by the author \cite{doucot2026simplification}.

More recently, this picture has been partially generalized to connections on $\mathbb P^1$ with arbitrary singularities by Boalch--Yamakawa as well as the author \cite{boalch2020diagrams, doucot2021diagrams, doucot2024basic}: it is possible to associate a \emph{diagram} (i.e. a generalized graph, with possibly negative edge/loop multiplicities) to any irregular connection on $\mathbb P^1$, in such a way that the diagram is invariant under Fourier transform, so that we will have several readings of it, as in the simply-laced case, which are again expected to correspond to isomorphisms between the corresponding wild character varieties\footnote{In general finding explicit isomorphisms of wild character varieties induced by the Fourier transform is a difficult problem \cite{malgrange1991equations, mochizuki2010note, mochizuki2018stokes}. Some explicit isomorphisms are known in certain cases \cite{balser1981reduction,sabbah2016differential,boalch2016global,dagnolo2020topological, hohl2022d_modules,doucot2025topological}.}. However, the diagrams are not invariant under Möbius transformations, due to the fact that they depend on the choice of point at infinity on the Riemann sphere. This framework applies in particular to the moduli spaces associated to the remaining Painlevé equations.

\subsubsection{Classification of type $A$ Argyres--Douglas theories}

On the other hand, understanding the landscape of superconformal quantum field theories in 4 dimensions with $\mathcal N=2$ supersymmetry has been an active area of research in theoretical physics. Among those, several classes of theories of interest can be constructed from the initial data of a Riemann surface with punctures and some singularity data there, i.e. essentially from an irregular curve with boundary data in our language. 

A first class of such theories consists of theories of \emph{class $\mathcal S$}, introduced by Gaiotto, Moore, and Neitzke \cite{gaiotto2013wall}. They are defined by performing a twisted compactification of a 6d $\mathcal N=(2,0)$ superconformal field theory on a Riemann surface with punctures. The corresponding Hitchin system plays an important role in the study of these theories: the Seiberg--Witten curve corresponds to the spectral curve, the base of the Hitchin system is the Coulomb branch of the 4d theory, and the Dolbeault moduli space corresponds to the Coulomb branch of the theory compactified on a circle \cite{neitzke2015hitchin}. Another rich class of 4d $\mathcal N=2$ SCFTs is given by the Argyres--Douglas theories \cite{argyres1995new}. By definition, they are theories where Coulomb branch operators have fractional scaling dimensions. Many of them can be constructed by taking some limits of class $\mathcal S$ theories, and as a consequence can be defined from the datum of a Riemann surface with punctures of a particular form \cite{xie2013general}. They have been studied extensively in the last decade, see e.g. \cite{buican2015superconformal,wang2016classification, song2017vertex,creutzig2017w-algebras, benvenuti2017lagrangians,xie2021chiral}. 

An important aspect of the study of quantum field theories has been to identify dualities between various theories, and 4d $\mathcal N=2$ theories have been a fertile ground for this search. Specifically, in this work, we will be interested in a class of dualities for type $A$ Argyres--Douglas theories recently found by Beem, Martone, Sacchi, Singh, and Stedman \cite{beem2025simplifying}, providing a wide-ranging generalization of some previously known particular cases. For that class of theories, the initial data take the form of irregular Higgs bundles on $\mathbb P^1$, with an irregular (typically ramified) singularity at infinity, and a regular singularity at zero.

\subsubsection{Relating the classifications}

Given that the initial data for both Argyres--Douglas theories coming from class $\mathcal S$ and wild nonabelian Hodge spaces take the form  of a Riemann surface with some singularity data, it is not surprising that various aspects of their respective study turn out to be related. There are indeed many known instances where mathematical objects related to 2d gauge theory play a prominent role in the study of some 4d $\mathcal N=2$ theories, besides the Hitchin system. For example, the Painlevé equations, which correspond to some of the simplest nontrivial examples of nonabelian Hodge spaces, have been related to certain $\mathcal N=2$ theories, including Argyres--Douglas ones \cite{bonelli2017painleve,bonelli2025refined, grassi2019argyres}. 
The star-shaped quivers, the supernova quivers, as well as some of the more recent twisted wild nonabelian diagrams encoding certain moduli spaces of connections on $\mathbb P^1$, also have a physics interpretation, as the quivers describing the 3d mirrors of some class $\mathcal S$ and Argyres--Douglas theories \cite{benini2010mirrors,xie2013general, delzotto2015complete,xie20213d}. 

One can thus expect that the dictionary between the two pictures can be extended further: it is natural to conjecture that isomorphisms between moduli spaces of connections on $\mathbb P^1$, and in particular the Fourier transform, should be related to dualities between class $\mathcal S$ or Argyres--Douglas theories. Relations between the Fourier transform of irregular connections and dualities between some quantum field theories have already been investigated by Luu \cite{luu2015duality,luu2016fourier}, in the somewhat different context of minimal model 2d quantum gravity (see also the recent work by Alameddine, Marchal and Hayford \cite{alameddine2025painleve} on the relation between minimal models and the Painlevé I hierarchy.)

In the Argyres--Douglas context, one can readily observe that some known dualities match with the Fourier transform of irregular connections, under the natural identification of initial data given by the wild nonabelian Hodge correspondence: for instance the well-known duality between the $(A_{N-1}, A_{k-1})$ and $(A_{k-1}, A_{N-1})$  theories \cite{cecotti2010r-twisting} amounts to the same transformation of the parameters $k,N$ as the Fourier transform applied to connections on $\mathbb P^1$ with an irregular singularity at $\infty$ such that all Stokes circles have slope $\frac{k+N}{N}$. Similarly, the irregular realizations for class $\mathcal S$ theories discussed in \cite[§6.3.1]{xie2013general} correspond to a particular case of the so-called Harnad duality \cite{adams1990dual, harnad1994dual,yamakawa2011middle}, which is an avatar of the Fourier transform.

\subsection{Main results: Argyres--Douglas dualities from operations on connections}~

In this work, we show that this correspondence between isomorphisms on both sides holds in a much more general context: we can interpret all dualities between type $A$ Argyres--Douglas theories found in \cite{beem2025simplifying} in terms of compositions of basic operations on irregular connections on $\mathbb P^1$, namely of the Fourier transform, a Möbius transformation exchanging zero and infinity, and some twists. This gives a simple derivation of the dualities, and a further consistency check: in particular this should imply that the corresponding nonabelian Hodge spaces are isomorphic. From a purely mathematical point of view, our results essentially give a classification of wild nonabelian Hodge spaces of a specific form under basic operations, i.e. they can be seen as another case study of a non-rigid Katz--Deligne--Arinkin algorithm.

The proof relies on the stationary phase formula \cite{malgrange1991equations,fang2009calculation,sabbah2008explicit,graham2013calculation} which provides a way to determine explicitly the singularity data of the Fourier transform of an irregular connection on $\mathbb P^1$.

\subsubsection{Irregular curves with boundary data of (generalized) AD-$A$ type}

The first step is to translate the data defining a type $A$ Argyres--Douglas theory into singularity data for certain irregular connections on $\mathbb P^1$, i.e. in our language a genus zero irregular curve with boundary data $\bm\Sigma=(\mathbb P^1,\mathbf a, \bm\Theta, \bm{\mathcal C})$. 
Since in the framework of \cite{beem2025simplifying}, the Argyres--Douglas theories are defined in terms of some irregular Higgs bundles on $\mathbb P^1$, the corresponding connections are obtained by the wild nonabelian Hodge correspondence \cite{biquard2004wild}. We will say that the irregular curves with boundary data that arise in this way are of \emph{standard AD-$A$ type}. However, to account for the intermediate steps appearing when realizing the dualities of \cite{beem2025simplifying} as a composition of simple operations, we will have to consider slightly more general irregular curves with boundary data: we will say that those are of \emph{generalized AD-$A$ type}. 

The basic features of these data are as follows (see §\ref{sec:irreg_curves} for the detailed definitions).
Recall that a global irregular class $\bm\Theta$ is a collection of \emph{Stokes circles} $\cir{q_i}$ encoding the exponential terms $e^{q_i}$ appearing in the horizontal sections of the connection around its singularities, counted with multiplicities, and the boundary data $\bm{\mathcal C}$ consists of a collection of conjugacy classes $\mathcal C_{\cir{q_i}}$ encoding the formal monodromies of the connections.

An irregular curve with boundary data $\bm\Sigma$ of generalized AD-$A$ type has underlying curve $\mathbb P^1$, a regular singularity at zero and an irregular singularity at $\infty$. At the irregular singularity, the wild Stokes circles all have the same \emph{slope} $k$. In the general case, the local irregular class at infinity also has a regular part; the \emph{type I} case corresponds to the case where this regular part is not present. In particular, $\bm\Sigma$ depends on a quadruple of relevant parameters $\mathcal T=(m, k,\mathcal C_0, \mathcal C_\infty)$ that we call the \emph{AD-$A$ parameter} of $\bm\Sigma$. Here:
\begin{itemize}
\item $m\geq 1$ is an integer, the number of wild Stokes circles of $\bm\Sigma$.
\item $k\in \mathbb Q_{>0}$ is the \emph{slope} of each of the $m$ wild Stokes circles. If we write $k=\frac{s}{r}$, with $s, r$ coprime, $s$ is the irregularity of the Stokes circles, and $r$ their ramification order.
\item $\mathcal C_0\subset \mathrm{GL}_N(\mathbb C)$ is the conjugacy class of the (formal) monodromy at zero. The integer $N$ corresponds to the rank of the connections. 
\item $\mathcal C_\infty\subset \mathrm{GL}_{N-mr}(\mathbb C)$  is the conjugacy class of the regular part of the formal monodromy at infinity.
\end{itemize}
In the standard case, $\mathcal C_0$ is unipotent, corresponding to a Young diagram $[Y]$, and $\mathcal C_\infty$ is regular semisimple (of rank zero in the type I case), so the relevant parameters just consist of a triple $\mathscr T=(m, k,[Y])$, that we call a \emph{reduced AD-$A$ parameter}. The corresponding Argyres--Douglas theory is $D^b_p(\mathfrak{sl}_N, [Y])$, with $p=ms$, $b=mr$, $N=\mathrm{rk}(Y)$, in the notations of \cite{beem2025simplifying} (see Fig. \ref{fig:dictionary_notations} for a more complete dictionary between notations).

\subsubsection{Elementary AD-$A$ operations}

The second step is to observe that some simple compositions of basic operations preserve the set of irregular curves with boundary data of generalized AD-$A$ type.

Let $F$ be the Fourier transform and $M$ the Möbius transformation $z\mapsto 1/z$ exchanging $0$ and $\infty$ on $\mathbb P^1$. In the type I case these \emph{elementary AD-$A$ operations} are $F$ itself and the compositions $\widetilde F^+:=FM$, $\widetilde F^-:=MF$. In the general case, we have to consider variants $F_\alpha$,$\widetilde F^+_\alpha$, $\widetilde F^-_\alpha$, where $\alpha\in \mathbb C^*$, obtained by conjugating the type I ones by a \emph{Kummer twist} $T_\alpha$ (see §\ref{sec:dualities_general_case}). 
If $\bm\Sigma$ is an irregular curve with boundary data of generalized AD-$A$ type, depending on the slope $k$ not all these elementary operations transform $\bm\Sigma$ into an irregular curve with boundary data of generalized AD-$A$ type; when this is the case we say that the operation is \emph{allowed}.

Our first main result is an explicit description of the transformation of parameters induced by allowed elementary AD-$A$ operations. In type I, we have:

\begin{theorem}[Prop. \ref{prop:transformation_parameter_elementary_operation_type_I}]
Let $\bm\Sigma$ be an irregular curve with boundary data of generalized type I AD-$A$ type, with parameter $\mathcal T=(m,k,[Y^0], [Y^\infty])$. Then, if $O$ is an allowed elementary AD-$A$-operation on $\mathcal T$, the irregular curve with boundary data $O\cdot \bm\Sigma$ is of generalized AD-$A$ type. 

Furthermore, in each case, we have the following explicit expression for the parameter $O\cdot \mathcal T$ of $O\cdot\bm\Sigma$, where $[\tau(Y)]$ denotes the Young diagram obtained from $[Y]$ by deleting its first column, and $h_1(Y)$ denotes the height of the first column of $[Y]$:
\begin{enumerate}
\item If $k>1$, then the irregular curve with boundary data $F\cdot \bm\Sigma$ is  of generalized type I AD-$A$ type, with parameter
\begin{equation}
F\cdot \mathcal T=\left(m,\frac{s}{s-r},[ms-h_1(Y^0),Y^{\infty}], [\tau(Y^{0})]\right) 
\end{equation}
\item The irregular curve with boundary data $\widetilde F^+\cdot \bm\Sigma$ is of generalized type I AD-$A$ type, with parameter
\begin{equation}
\widetilde F^+\cdot\mathcal T=\left(m,\frac{s}{s+r},[ms-h_1(Y^\infty),Y^{0}], \tau(Y^{\infty})\right) 
\end{equation}
\item If $k<1$, the irregular curve with boundary data $\widetilde F^-\cdot \bm\Sigma$ is of  generalized type I AD-$A$ type, with parameter
\begin{equation}
\widetilde F^-\cdot\mathcal T=\left(m,\frac{s}{r-s},[\tau(Y^{0})], [ms-h_1(Y^0),Y^{\infty}]\right) 
\end{equation}
\end{enumerate}
\end{theorem}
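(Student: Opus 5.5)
The plan is to compute everything with the stationary phase formula, treating the three operations one after another and relying on the fact that $\widetilde F^\pm$ is $F$ composed with $M$. The setup is as follows. In type I, $\bm\Sigma$ has a regular singularity at $0$ with unipotent class $\mathcal C_0$ of Jordan type $[Y^0]$, and its irregular class at $\infty$ consists of $m$ Stokes circles $\cir{q_i}$ of slope $k=s/r$, so $q_i = a_i z^{s/r}+\dots$. These circles have total rank $mr$, and each carries a formal monodromy. The regular part at $\infty$ has rank $N-mr$ and unipotent class $[Y^\infty]$. (In the type I generalized setting I read the formal monodromy on the wild circles as fixed, e.g. trivial after the normalisation of §\ref{sec:irreg_curves}.)

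First I will recall how the Fourier transform acts on formal local data; this is the stationary phase formula \cite{malgrange1991equations,fang2009calculation,sabbah2008explicit,graham2013calculation}.
\begin{itemize}
\item A regular singularity at a finite point $c$, with monodromy $\mathcal C_c$, becomes the circle $\cir{c\zeta}$ at $\infty$. Its rank equals the rank of the non-unipotent-trivial part: for unipotent $\mathcal C_0$ of type $[Y]$, the local Fourier transform $\mathcal F^{(0,\infty)}$ sends the unipotent Jordan blocks of size $h$ to blocks of size $h-1$. This means deleting the first column of $[Y]$ and leaving a rank-$0$ contribution of rank $h_1(Y)$ (the blocks of size $1$ disappear).
\item A wild circle at $\infty$ of slope $k>1$ is sent by $\mathcal F^{(\infty,\infty)}$ to a wild circle at $\infty$ of slope $s/(s-r)$ and rank $s-r$. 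Its formal monodromy picks up a sign/twist, which in type I normalisation stays trivial.
\item A wild circle of slope $k<1$ is sent by $\mathcal F^{(\infty,0)}$ to a circle at $0$ of slope $s/(r-s)$ and rank $r-s$.
\item The regular part at $\infty$, with unipotent monodromy $[Y^\infty]$, is sent by $\mathcal F^{(\infty,0)}$ to a regular singularity at $0$. Its class is obtained by adding a first column: the Jordan blocks grow by one, the rank grows by the number of blocks, and this is what combines with the rank count.
\end{itemize}

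Next comes a global rank count, via the formula rank$(F E)=$ irregularity contributions. This yields new rank $ms + \mathrm{rk}(Y^\infty)$-type quantities. The cleanest route is to check that the new monodromy at $0$ (respectively the regular part at $\infty$) is unipotent with the stated diagram. Concretely, the regular part at $\infty$ of $F\cdot\bm\Sigma$ has rank $N-h_1(Y^0)$ and type $[\tau(Y^0)]$. The new regular singularity at $0$ gets Jordan blocks $Y^\infty$ shifted, together with extra size-one blocks forced by the total rank $N' = ms$ (case (1): wild part $m(s-r)$ plus regular part). I then identify the diagram at $0$ as $[ms-h_1(Y^0),Y^\infty]$, i.e. the diagram $Y^\infty$ with a new first column of height $ms-h_1(Y^0)$. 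The height of the added column must equal the rank of the vanishing-cycle piece, which I obtain as $N'$ minus the ranks already accounted for. I need to verify that this is $\geq h_1(Y^\infty)$ so that it is a valid first column; that check is part of why the operation is called allowed.

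For (2) and (3) I conjugate by $M$. Applying $M$ swaps the roles of $0$ and $\infty$ and sends slope $k$ at $\infty$ to slope $k$ at $0$, so $F M$ turns the wild part at $0$ into a wild part at $\infty$ via $\mathcal F^{(0,\infty)}$ with slope $s/(s+r)$. This explains why (2) is always allowed, and the formula follows by exchanging $Y^0\leftrightarrow Y^\infty$ in the same bookkeeping. Case (3) is $M$ applied after case $\mathcal F^{(\infty,0)}$, and amounts to swapping the two last entries of the analogue of (1) with slope $s/(r-s)$.

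The main obstacle is the precise bookkeeping of the regular parts: showing that the vanishing-cycle contribution of the regular singularity at $0$ produces exactly the pure unipotent class $[\tau(Y^0)]$ at $\infty$, and that the new class at $0$ is unipotent with the column of height $ms-h_1(Y^0)$ prepended. I will first try to do this using the Katz middle-convolution/local Fourier description of Jordan blocks together with the Euler-characteristic rank formula.
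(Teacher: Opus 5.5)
Your overall route is the paper's: pass to vanishing-cycle (modified) data at the finite regular singularity, transport everything by the stationary phase bijection on Stokes circles, read off the new rank from the data at $\infty$, and recover the actual formal monodromy at $0$ by prepending one column (Lemma \ref{lemma:reconstruction_conjugacy_class_from_truncation}). However, the bookkeeping step, which you yourself flag as the main obstacle, is wrong as written, and your numbers do not produce the stated formula.

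\textbf{The rank.} In case (1) the new rank is not $ms$; it is
\[
N'=m(s-r)+\bigl(N-h_1(Y^0)\bigr)=ms+\mathrm{rk}(Y^\infty)-h_1(Y^0),
\]
using $N=mr+\mathrm{rk}(Y^\infty)$. With $N'=ms$, your recipe ``$N'$ minus what is already accounted for'' gives a column of height $ms-\mathrm{rk}(Y^\infty)$, not $ms-h_1(Y^0)$.

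\textbf{The class at $0$.} Your fourth bullet misdescribes $\mathcal F^{(\infty,0)}$ on the tame part. It does not make every Jordan block grow by one; that would prepend a column of height $h_1(Y^\infty)$. What the stationary phase formula gives at $0$ is the modified (vanishing-cycle) class, which is $\mathcal C_\infty$ itself, of type $[Y^\infty]$, since the sign $(-1)^{\Irr(0)}$ is trivial. Let $A$ represent the new class at $0$. The added column then has height $\dim\ker(A-1)=N'-\mathrm{rk}(Y^\infty)=ms-h_1(Y^0)$. This counts all Jordan blocks at $0$, old ones and new size-one ones; it is not ``the rank of the vanishing-cycle piece''.

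\textbf{The inequality and ``allowed''.} The inequality $ms-h_1(Y^0)\geq h_1(Y^\infty)$ has nothing to do with the operation being allowed. ``Allowed'' is only the slope condition ($k>1$ for $F$, $k<1$ for $\widetilde F^-$), which decides where the wild circles land. The inequality is automatic: $\bm\Sigma$ is effective, so $F\cdot\bm\Sigma$ is the formal data of an actual connection and the reconstructed class exists.

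\textbf{Cases (2) and (3).} The rank counts are not literally those of (1) with $Y^0\leftrightarrow Y^\infty$ exchanged.
\begin{itemize}
\item In (2), after $M$ only the tame block at $0$ is truncated, so $N'=m(s+r)+\mathrm{rk}(Y^\infty)-h_1(Y^\infty)$ and the column is $N'-\mathrm{rk}(Y^0)$.
\item In (3), the wild circles move to $0$, so $N'=N-h_1(Y^0)$ and the column is $N'-m(r-s)-\mathrm{rk}(Y^\infty)$.
\end{itemize}
Both give $ms-h_1$, but only after these separate counts.

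Your assumption that the formal monodromies on the wild circles are trivial is unjustified. It is harmless, since the parameter does not record them.
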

There is a similar statement in the general case, see Prop. \ref{prop:transf_parameter_elementary_transf_twist}.

In brief, elementary AD-$A$ transformations preserve the irregularity $s$, but change the ramification order $r$ of the wild Stokes circles at infinity, and they change the Young diagrams $[Y^0]$ and $[Y^\infty]$ by removing the first column of one of them (which one depends on the case), and transferring the `complement' to $ms$ of that column to the other diagram, cf Fig. \ref{fig:elementary_operation_young_diagram}.

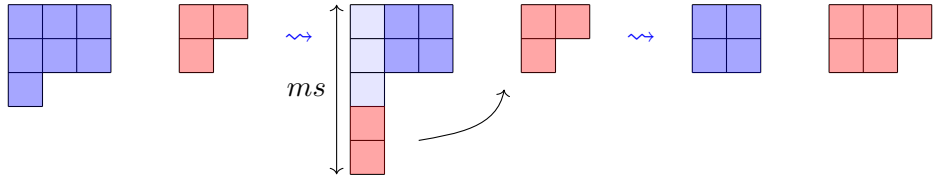
\begin{figure}[h]
\centering
\begin{tikzpicture}[scale=0.45]
\begin{scope}
\draw (0,0)--(3,0);
\draw (0,-1)--(3,-1);
\draw (0,-2)--(3,-2);
\draw (0,-3)--(1,-3);

\draw (0,0)--(0,-3);
\draw (1,0)--(1,-3);
\draw (2,0)--(2,-2);
\draw (3,0)--(3,-2);

\fill[blue, opacity=0.35] (0,0)--(3,0)--(3,-2)--(1,-2)--(1,-3)--(0,-3)--cycle;

\draw[blue] (8.5, -1) node {$\rightsquigarrow$};
\end{scope}

\begin{scope}[xshift=5cm]
\draw (0,0)--(2,0);
\draw (0,-1)--(2,-1);
\draw (0,-2)--(1,-2);

\draw (0,0)--(0,-2);
\draw (1,0)--(1,-2);
\draw (2,0)--(2,-1);

\fill[red, opacity=0.35] (0,0)--(2,0)--(2,-1)--(1,-1)--(1,-2)--(0,-2)--cycle;

\end{scope}

\begin{scope}[xshift=10cm]
\draw (0,0)--(3,0);
\draw (0,-1)--(3,-1);
\draw (0,-2)--(3,-2);
\draw (0,-3)--(1,-3);
\draw (0,-4)--(1,-4);
\draw (0,-5)--(1,-5);

\draw (0,0)--(0,-5);
\draw (1,0)--(1,-5);
\draw (2,0)--(2,-2);
\draw (3,0)--(3,-2);

\fill[blue, opacity=0.35] (1,0)--(3,0)--(3,-2)--(1,-2)--(1,-3)--cycle;

\fill[blue, opacity=0.1] (0,0)--(1,0)--(1,-3)--(0,-3)--cycle;

\fill[red, opacity=0.35] (0,-3)--(1,-3)--(1,-5)--(0,-5)--cycle;
\draw[->] (2,-4) to[out=10, in=-100] (4.5,-2.5);

\draw[<->] (-0.4,0) -- node[midway, left]{$ms$} (-0.4, -5);

\draw[blue] (8.5, -1) node {$\rightsquigarrow$};
\end{scope}

\begin{scope}[xshift=15cm]
\draw (0,0)--(2,0);
\draw (0,-1)--(2,-1);
\draw (0,-2)--(1,-2);

\draw (0,0)--(0,-2);
\draw (1,0)--(1,-2);
\draw (2,0)--(2,-1);

\fill[red, opacity=0.35] (0,0)--(2,0)--(2,-1)--(1,-1)--(1,-2)--(0,-2)--cycle;
\end{scope}

\begin{scope}[xshift=20cm]
\draw (0,0)--(2,0);
\draw (0,-1)--(2,-1);
\draw (0,-2)--(2,-2);

\draw (0,0)--(0,-2);
\draw (1,0)--(1,-2);
\draw (2,0)--(2,-2);

\fill[blue, opacity=0.35] (0,0)--(2,0)--(2,-2)--(0,-2)--cycle;
\end{scope}

\begin{scope}[xshift=24cm]
\draw (0,0)--(3,0);
\draw (0,-1)--(3,-1);
\draw (0,-2)--(2,-2);

\draw (0,0)--(0,-2);
\draw (1,0)--(1,-2);
\draw (2,0)--(2,-2);
\draw (3,0)--(3,-1);

\fill[red, opacity=0.35] (0,0)--(3,0)--(3,-1)--(2,-1)--(2,-2)--(0,-2)--cycle;
\end{scope}
\end{tikzpicture}
\caption{Effect of an elementary AD-$A$ operation on the Young diagrams of a type I  AD-$A$ parameter $\mathcal T=(m,k, [Y^0], [Y^\infty])$: it removes the column of one of the diagrams (represented in blue), and add its `complement' to the other diagram (represented). Which one of $[Y^0], [Y^\infty]$ is the blue/red diagram depends on the case.}
\label{fig:elementary_operation_young_diagram}
\end{figure}

This allows us to determine explicitly the structure of the `orbit' $\mathcal O(\mathcal T)$ of a generalized AD-$A$ parameter $\mathcal T$ under elementary operations, see Fig. \ref{fig:structure_orbit} and Fig. \ref{fig:structure_orbit_s=1}.

\subsubsection{Recovering the Argyres--Douglas dualities}

The three types of dualities found in \cite{beem2025simplifying}, given by equations (1.2), (1.3) and (1.4) in \emph{loc.~cit.}, can then be interpreted as compositions of elementary AD-$A$ operations:

\begin{theorem}[Prop. \ref{prop:duality_I}, \ref{prop:duality_II_complement_diagram}, \ref{prop:duality_III_gen_case}]
Let $\bm\Sigma$ be an irregular curve with boundary data of standard AD-$A$ type, with reduced parameter $\mathscr T=(m,k,[Y])$. Let us write $k=\frac{s}{r}>1$, with $s, r$ coprime.
\begin{itemize}
\item Assume that $\mathscr T$ is of type I. For any integer $l\geq 0$, $(\widetilde{F}^+)^l\cdot\bm\Sigma
$ is also of standard type I AD-$A$ type, with reduced parameter
\begin{equation}
(\widetilde{F}^+)^l\cdot\mathscr T:=\left(m, \frac{s}{ls+r}, [(ms)^l,Y]\right)
\end{equation}
\item Assume that $\mathscr T$ is of type I. Let $L$ be the number of columns of $[Y]$, and let us write $r=\kappa s +\rho$, with $1\leq \rho\leq s-1$. Assume that $s>1$ and $Ls>r$, i.e. $L>\kappa$.

Then the irregular curve with boundary data  $(\widetilde{F}^+)^{(L-1)}F\cdot\bm\Sigma
$ is also of type I AD-$A$ type, with reduced parameter
\begin{equation}
\label{eq:duality_II_intro}
(\widetilde{F}^+)^{L-1-\kappa} F (\widetilde{F}^-)^{\kappa} \cdot\mathscr T=\left(m, \frac{s}{Ls-r}, [Y^c]\right)
\end{equation}
where $[Y^c]$ is defined as follows: if $h_1\geq \dots\geq h_L$ are the heights of the columns of $[Y]$ from left to right, then $[Y^c]$ has $L$ columns with heights from left to right given by $(ms-h_L)\geq \dots \geq (ms-h_1)$.
\item Assume we are in the non type I case. Let $\mathcal T=(m, k, \mathcal C_0, \mathcal C_\infty)$, be a non-reduced parameter associated to $\mathscr T$. Let $\kappa=\mathrm{rk}(\mathcal C_0)-rm=\mathrm{rk}(\mathcal C_\infty)$, and $\beta_1, \dots, \beta_\kappa$ be the eigenvalues of $\mathcal C_\infty$. Then $\widetilde{F}^+_{\beta_\kappa}\dots\widetilde{F}^+_{\beta_1}\cdot \bm\Sigma$ has a parameter of the form
\begin{equation}
\widetilde{F}^+_{\beta_\kappa}\dots\widetilde{F}^+_{\beta_1}\cdot \mathcal T=\left(m, \frac{s}{\kappa s+r}, \mathcal C'_0, \{ \;\} \right)
\end{equation}
where $\{\;\}$ denotes the trivial (rank 0) conjugacy class, the nilpotent part of $\mathcal C'_0$ only depends on $\mathscr T$, and is given by the Young diagram
\begin{equation}
[(ms-1)^\kappa, Y].
\end{equation}
\end{itemize}
\end{theorem}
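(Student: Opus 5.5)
The plan is to derive all three items by iterating the explicit transformation rules of the previous theorem (Prop. \ref{prop:transformation_parameter_elementary_operation_type_I} in type I, and Prop. \ref{prop:transf_parameter_elementary_transf_twist} in the general case). A standard type I parameter $\mathscr T=(m,k,[Y])$ is written as the generalized parameter $\mathcal T=(m,k,[Y],[\,])$, with $[Y^\infty]$ empty. Everything then reduces to bookkeeping on slopes and Young diagrams, together with a check at each step that the operation applied is allowed.

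\textbf{First item.} Induct on $l$. The slope evolves as $s/r\mapsto s/(s+r)$, so after $l$ steps it is $s/(ls+r)$. Part (2) of the previous theorem applies to $\widetilde F^+$ for every slope, so each step is allowed. For the diagrams, I claim that after $j$ steps the parameter has $[Y^0]=[(ms)^j,Y]$ and $[Y^\infty]=[\,]$. To see this, note that $h_1(\emptyset)=0$ and $\tau(\emptyset)=\emptyset$. One more $\widetilde F^+$ therefore prepends a column of height $ms-0=ms$ to $Y^0$ and leaves $Y^\infty$ empty. Since $Y^\infty$ stays empty, the result is of standard type I, with $Y^0$ unipotent.

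\textbf{Second item.} I would evaluate the right-hand composite $(\widetilde F^+)^{L-1-\kappa}F(\widetilde F^-)^{\kappa}$ step by step.
\begin{itemize}
\item Each $\widetilde F^-$ needs slope $<1$. Starting from $s/r$ with $r=\kappa s+\rho$, the slope runs through $s/(r-s),\dots,s/(r-\kappa s)=s/\rho$. Each step before the last starts from a slope $<1$, because $r-js>s$ for $j<\kappa$.
\item The $j$-th application of $\widetilde F^-$ removes the first column $h_j$ of $Y^0$ and adds a column of height $ms-h_j$ to $Y^\infty$. Since the $h_j$ decrease, the heights $ms-h_j$ increase. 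Placing each new column first therefore keeps $Y^\infty$ a valid Young diagram, and gives $Y^\infty=[ms-h_\kappa,\dots,ms-h_1]$.
\item Next comes $F$, allowed since $s/\rho>1$ (this uses $\rho\ge1$ and $s>1$). It takes the slope to $s/(s-\rho)$, deletes $h_{\kappa+1}$ from $Y^0$, and sets $Y^0=[ms-h_{\kappa+1},\,ms-h_\kappa,\dots,ms-h_1]$, while $Y^\infty$ becomes $[h_{\kappa+2},\dots,h_L]$.
\item Finally, each of the $L-1-\kappa$ applications of $\widetilde F^+$ removes the first column of $Y^\infty$ and prepends its complement to $Y^0$. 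This empties $Y^\infty$ and yields $Y^0=[ms-h_L,\dots,ms-h_1]=[Y^c]$.
\end{itemize}
The slope becomes $s/(s-\rho+(L-1-\kappa)s)=s/(Ls-r)$. The hypothesis $L>\kappa$ is exactly what makes $L-1-\kappa\ge0$.

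The remaining point is the equality of operators $(\widetilde F^+)^{L-1}F=(\widetilde F^+)^{L-1-\kappa}F(\widetilde F^-)^\kappa$ on $\bm\Sigma$. Using $M^2=\mathrm{id}$, we have $\widetilde F^+\widetilde F^-=FMMF=F^2$, and $F^2$ acts as $z\mapsto -z$, i.e.\ trivially on these data, since both the regular singularity at $0$ and the singularity at $\infty$ are preserved. This lets us commute and cancel factors, giving $(\widetilde F^+)^{\kappa}F(\widetilde F^-)^{\kappa}\simeq F$. I expect this to be the main delicate point: one must justify $F^2\simeq[-1]^*$ and check that the sign does not affect the parameter. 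Because $\mathcal C_0$ and $\mathcal C_\infty$ are only conjugacy classes and Stokes circles of a given slope are permuted, this should be fine.

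\textbf{Third item.} Apply Prop. \ref{prop:transf_parameter_elementary_transf_twist} inductively, using the twisted operation $\widetilde F^+_{\beta_i}=T_{\beta_i}^{-1}\widetilde F^+T_{\beta_i}$. The twist $T_{\beta_i}$ shifts the eigenvalue $\beta_i$ of $\mathcal C_\infty$ to $1$, where it becomes a trivial Jordan block of size $1$ playing the role of $Y^\infty$. Then $\widetilde F^+$ removes it and adds a column of height $ms-1$ to the $\beta_i$-related part of $\mathcal C_0$; after untwisting, this lands in the unipotent part. Each step lowers $\mathrm{rk}(\mathcal C_\infty)$ by one and sends the slope $s/r'$ to $s/(s+r')$. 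After $\kappa$ steps $\mathcal C_\infty$ is empty, the slope is $s/(\kappa s+r)$, and the nilpotent part is $[(ms-1)^\kappa,Y]$. The care needed here is in tracking which eigenvalue sectors of $\mathcal C_0$ are touched, since only the eigenvalue $1$ sector carries $Y$. This should follow from the explicit form of Prop. \ref{prop:transf_parameter_elementary_transf_twist}.
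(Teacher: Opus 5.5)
Your strategy is the paper's: everything comes from iterating Prop.~\ref{prop:transformation_parameter_elementary_operation_type_I} (resp.\ Prop.~\ref{prop:transf_parameter_elementary_transf_twist}). Your bookkeeping for the first two items, including the intermediate diagrams, agrees with Lemma~\ref{thm:intermediate_duality_type_I}. However, the step you single out as delicate in the second item is resolved with a false identity. $(\widetilde F^+)^\kappa F(\widetilde F^-)^\kappa$ is not $\simeq F$: as a word it equals $(FM)^\kappa F(MF)^\kappa=(FM)^{2\kappa}F$. For $\kappa=1$ and $s<r<2s$ this gives slope $s/(3s-r)$ at $\infty$, whereas $F$ sends the wild circles to $0$. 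What you actually need is $(\widetilde F^+)^\kappa F=F(\widetilde F^-)^\kappa$, i.e.\ $(FM)^\kappa F=F(MF)^\kappa$. This is just a regrouping of the same word, so $F^2\simeq[-1]^*$ plays no role. Moreover, under the hypothesis $k=s/r>1$ of the statement one has $r<s$, hence $\kappa=0$, and the two composites are literally equal.

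The third item contains a genuine bookkeeping error: the new columns do not ``land in the unipotent part''. A convention point first: $T_\alpha$ multiplies both classes by $\alpha$, and $\widetilde F^+_\beta=T_\beta\widetilde F^+T_{\beta^{-1}}$, so it is $T_{\beta_i^{-1}}$ that moves $\beta_i$ to $1$. At step $i$ the class at $0$ is $\{(1,[Y]),(\beta_1,[ms-1]),\dots,(\beta_{i-1},[ms-1])\}$. After twisting by $\beta_i^{-1}$, none of its eigenvalues equals $1$, because the $\beta_j$ are pairwise distinct and different from $1$. The $1$-sector at $\infty$ is $[1]$, so $h=1$. (Your use of $h=1$ from $\mathcal C_\infty$ is the right reading, even though Prop.~\ref{prop:transf_parameter_elementary_transf_twist} as printed takes $h$ from $\mathcal C_0$.) Hence $\widetilde F^+$ creates a \emph{fresh} eigenvalue-$1$ sector $[ms-1]$ rather than prepending a column to $Y$, and untwisting sends that sector to eigenvalue $\beta_i$. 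The result is $\mathcal C'_0=\{(1,[Y]),(\beta_1,[ms-1]),\dots,(\beta_\kappa,[ms-1])\}$, which is not unipotent. The diagram $[(ms-1)^\kappa,Y]$ arises by juxtaposing the columns of the different eigenvalue sectors, not as the Jordan type of a unipotent class. This is exactly why the statement only speaks of the ``nilpotent part'', and why Remark~\ref{remark:hypermultiplets} is needed. With your description the result would be of standard type I, which it is not.
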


Indeed, via the dictionary of Fig. \ref{fig:dictionary_notations}, the transformations of the parameters match with the dualities of \emph{loc. cit}.

The situation can be understood nicely in terms of the structure of the orbit $\mathcal O(\mathcal T)$ given by Fig. \ref{fig:structure_orbit} (for $s>1$): the first duality amounts to following $l$ times a red arrow starting from $\mathcal T$, the second one amounts to following $L$ times a blue arrow, and the third one to following $\kappa$ times a red arrow (twisting by eigenvalues of the conjugacy class at infinity). In particular, for the second duality, we can understand explicitly the parameters corresponding to the intermediate steps, see Fig \ref{fig:duality_complement_intermediate_step}: the Young diagram $[Y]$ is transferred to the complement $[Y^c]$ one column at a time. Note that we can also interpret why extra hypermultiplets appear for the third duality in \emph{loc.~cit.}, see Remark \ref{remark:hypermultiplets}.

\subsubsection{3d mirrors as nonabelian Hodge diagrams} Finally, we clarify the relation between the quivers describing the 3d mirrors of type $A$ Argyres--Douglas theories and the wild nonabelian Hodge diagrams introduced in \cite{boalch2020diagrams, doucot2021diagrams} in the twisted case. These works introduce a construction which to any irregular connection $(E,\nabla)$ on $\mathbb P^1$ associates a diagram $\Gamma(E,\nabla)$, and (given a choice of marking), a dimension vector $\mathbf d$ for $\Gamma(E,\nabla)$. In the standard AD-$A$ case, the diagram and the dimension vector only depend on the reduced parameter $\mathscr T$.

We find that the 3d mirrors are indeed nonabelian Hodge diagrams, but there is an important subtlety: given a reduced AD-$A$  parameter $\mathscr T$, the 3d mirror is in general \emph{not} the diagram $\Gamma(\mathscr T)$. The problem is that  when $k<1$ the diagram $\Gamma(\mathscr T)$ is `bad', since it has negative edges and loops. However, if we consider the diagrams $\Gamma(\mathcal T')$ where $\mathscr T'$ is in the orbit $\mathcal O(\mathscr T)$ of $\mathscr T$ under elementary AD-$A$ operations, we can find a `good' diagram with nonnegative edges. Remarkably, this diagram is essentially unique, and corresponds precisely to the 3d mirrors found in the physics literature:

\begin{theorem}[Prop. \ref{prop:nonnegative_diagram_type_I}, \ref{prop:3d_mirrors_are_nonneg_diagrams_type_I}, \ref{prop:nonnegative_diagram_general_case}, \ref{prop:3d_mirrors_are_nonneg_diagrams_gen_case}] Let $\mathscr T$ be a reduced AD-$A$  parameter.  Then there exists a unique nonabelian diagram $\Gamma$ of the form $\Gamma(\mathcal T')$ for $\mathcal T'\in \mathcal O(\mathscr T)$ with no negative edges nor loops, and minimal number of edges, that we denote by $\Gamma_+(\mathscr T)$. Furthermore, the quiver describing the 3d mirror of the Argyres--Douglas theory defined by $\mathscr T$ is $\Gamma_+(\mathscr T)$ (with its uniquely defined dimension vector).
\end{theorem}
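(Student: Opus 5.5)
Existence and uniqueness are proved first, then the comparison with the 3d mirror.

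\textbf{Setup.} I would begin by writing down the diagram $\Gamma(\mathcal T')$ explicitly for a generalized AD-$A$ parameter $\mathcal T'=(m,k',[Y^0],[Y^\infty])$ of type I (general case later), using the recipe of \cite{boalch2020diagrams, doucot2021diagrams}. Choose a marking. There are then three pieces:
\begin{itemize}
\item a core of $m$ vertices, one per wild Stokes circle $\cir{q_i}$ of slope $k'=s/r'$, together with one vertex for the regular part at infinity;
\item legs attached to the core, given by the column heights of $[Y^0]$ and $[Y^\infty]$;
\item edge and loop multiplicities $B_{ij}$ computed from the irregularity of $q_i-q_j$ minus rank terms. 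These are affine in $s$ and $r'$, and the loops equal the multiplicities at $i=j$.
\end{itemize}
From these formulas I expect the sign of the core edges and loops to be controlled by $k'$ versus $1$. The multiplicity is nonnegative iff $k'\ge 1$, roughly $s-r'\ge 0$, with a borderline case where the core is edge-free (or minimal) exactly at a specific value of $r'$.

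\textbf{Existence and uniqueness.} Next I would use the explicit description of the orbit $\mathcal O(\mathscr T)$ from Prop.~\ref{prop:transformation_parameter_elementary_operation_type_I} and Figs.~\ref{fig:structure_orbit}, \ref{fig:structure_orbit_s=1}. Along the orbit, $s$ is fixed and $r'$ ranges over values of the form $|ls\pm r|$ reached by the $F,\widetilde F^\pm$ moves. The Young diagrams only trade columns, so the total edge count is a function of $r'$ plus leg contributions.

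Nonnegativity then forces $r'\le s$. The number of edges is minimized at the unique orbit element with the smallest admissible $r'$, namely $r'=r \bmod s$ reached by the appropriate number $\kappa$ of $\widetilde F^-$ steps and possibly one $F$. Uniqueness will follow once I check that the orbit, as a graph, contains exactly one parameter with that slope and with $\mathcal C_\infty$ as prescribed. A possible exception arises from $F$ versus $\widetilde F^-$ giving $s/(s-r')$ versus $s/(r'-s)$, which are isomorphic diagrams; these should be identified. The case $s=1$ is treated separately using Fig.~\ref{fig:structure_orbit_s=1}.

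\textbf{General case and matching the 3d mirror.} For the non-type-I case I would conjugate by the Kummer twists $T_\alpha$ and reduce to the type I computation via Prop.~\ref{prop:transf_parameter_elementary_transf_twist}, since twists do not change the diagram up to the extra regular-part vertices. For the identification with the 3d mirror, I would compute the dimension vector at the minimal element and compare it directly with the quivers in the physics literature \cite{xie20213d, beem2025simplifying}. These are given as a complete graph of $m$ (or $\gcd$-dependent) nodes with multiplicity determined by $\lfloor r/s\rfloor$ data and Young diagram tails, and the comparison is a case check.

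\textbf{Main obstacle.} I expect the hardest step to be the precise edge-multiplicity formula for twisted (ramified) Stokes circles. Here $q_i-q_j$ involve Galois orbits and the loop counts depend on $\gcd$ subtleties, and this is exactly what determines which orbit element is minimal and nonnegative. I would first handle $m=1$, where the core is a single vertex with a loop count of $s-r'-1$-type, and then extend to general $m$.
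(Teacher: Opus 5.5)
\textbf{Verdict.} Your type I argument reaches the right conclusion but gives the wrong reason for it. The general (non type I) case has a genuine gap.

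\textbf{Type I.} No minimization over $r'$ is involved here.
\begin{itemize}
\item The core multiplicities are $r'$, $s-r'$, $r'(s-r')$, $1$ and loops $\frac{(r'-1)(s-r'-1)}{2}$; the loops are quadratic, not affine. All of them are nonnegative iff $k'\geq 1$.
\item By the orbit structure, the only such elements are $\mathcal T_+=(\widetilde F^-)^\kappa\cdot\mathcal T$, of slope $s/\rho$, and $\mathcal T_-=F\cdot\mathcal T_+$, of slope $s/(s-\rho)$.
\item Uniqueness of the diagram then follows from the Fourier invariance $\Gamma(E,\nabla)=\Gamma(F\cdot(E,\nabla))$ of \cite{doucot2021diagrams}. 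Equivalently, it follows from the symmetry $r'\leftrightarrow s-r'$, $\cir{0}_0\leftrightarrow\cir{0}_\infty$ with the legs exchanged.
\end{itemize}
Your version of this step picks the wrong pair. You minimize at ``$r'=r \bmod s$'' and identify $s/(s-r')$ with $s/(r'-s)$, but these two slopes never arise from the same $r'$. Also, the smaller of $\rho$ and $s-\rho$ need not be $\rho$. Your plan to show that the orbit contains exactly one parameter of that slope fails for $s=2$: there, generically $\mathcal T_+\neq\mathcal T_-$ and both have slope $2$. Only Fourier invariance of $\Gamma$ rescues uniqueness in that case.

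\textbf{General case.} This is exactly where the minimality clause is needed, and your argument does not address it.
\begin{itemize}
\item The orbit contains $\widetilde F^-_\alpha$ for every $\alpha\in\mathbb C^*$, as well as detours through $\widetilde F^+_\beta$. Different choices give genuinely different nonnegative diagrams, so nonnegativity alone does not give uniqueness.
\item Your claim that twists do not change the diagram is false. The leg at $\cir{0}_0$ encodes $\tau(\mathcal C_0)$, the truncation at the eigenvalue $1$. For unipotent $\mathcal C_0$ and $\alpha\neq 1$ one has $\tau(\alpha\,\mathcal C_0)=\alpha\,\mathcal C_0$, which lengthens the leg.
\end{itemize}
The missing idea is the leg count coming from Prop.~\ref{prop:transf_parameter_elementary_transf_twist}.
\begin{itemize}
\item A step $\widetilde F^-_\alpha$ replaces $(\mathcal C_0,\mathcal C_\infty)$ by $(\tau_\alpha(\mathcal C_0),\varepsilon_{(ms-h),\alpha}(\mathcal C_\infty))$.
\item This always adds one vertex to $\mathbb L_\infty$. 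It removes one vertex from $\mathbb L_0$ if and only if $\alpha$ is an eigenvalue of $\mathcal C_0$.
\item Minimality therefore forces each $\alpha_i$ to be an eigenvalue of $\mathcal C_0$; in the standard case this means $\alpha_i=1$.
\item A final twist is also needed, so that passing to $\breve{\mathcal C}_0$ removes one more vertex.
\end{itemize}
Only with this do you get a well-defined $\Gamma_+(\mathscr T)=\Gamma(\widetilde{\mathcal T})$, with legs $[Y_\kappa]$ at $\cir{0}_0$ and $[\widetilde Y_\kappa,1^{N-mr}]$ at $\cir{0}_\infty$. That explicit diagram is the input for the comparison with the 3d mirror quivers. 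The comparison itself is, as you say, a case check against \cite{beem2025simplifying}.

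\textbf{Minor points.} The core has $m+2$ vertices, since it includes $\cir{0}_0$. The multiplicity formulas you flag as the main obstacle are simply quoted from \cite{doucot2021diagrams}; they are not where the difficulty lies.
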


In terms of Fig. \ref{fig:structure_orbit}, the diagram is the one of the elements $\mathcal T^+$, $\mathcal T^-$ with slope $k'>1$ in the orbit $\mathcal O(\mathcal T)$, where $\mathcal T$ denotes the non-reduced parameter corresponding to $\mathscr T$. In particular, this sheds light on the fact that the 3d mirror quivers feature two legs although the initial data consist of a single nontrivial conjugacy class: the parameter $\mathcal T^+$ corresponds to an intermediate step of the duality  \eqref{eq:duality_II_intro} so part of the Young diagram of the conjugacy class has been transferred to the second leg (cf Fig. \ref{fig:duality_complement_intermediate_step}.)

\subsection*{Outlook} This work raises several questions about possible further relations between various aspects of the study of some 4d $\mathcal N=2$ theories, and moduli spaces of irregular connections on $\mathbb P^1$. One can for instance wonder whether some other dualities obtained in the physics literature might be interpreted in a similar way. Furthermore, in our approach we determine the nonnegative diagrams, i.e. the 3d mirror quivers, by using the Fourier transform and the stationary phase formula, while in the physics approach, when the regular puncture at zero is non maximal, they are obtained using some quiver algorithms such as the decay and fission algorithm \cite{bourget2024higgs,bourget2024decay}. It would thus be interesting to understand better the relation between the quiver algorithms and the Fourier transform. 

\subsection*{Structure of the article} The article is organized as follows. In section \ref{sec:irreg_curves}, we define the irregular curves with boundary of AD-$A$. In section \ref{sec:Fourier}, we discuss the Fourier transform and the stationary phase formula for irregular curves with boundary data of AD-$A$ type. In section \ref{sec:type_I_dualities}, we establish our main results in the type I case. In section \ref{sec:dualities_general_case} we introduce Kummer twists and deal with the general case. Finally in section \ref{sec:diagrams} we show that an irregular curve with boundary data of AD-$A$ type defines a unique minimal nonnegative nonabelian Hodge diagram, which coincides with the 3d mirrors of the corresponding type $A$ Argyres--Douglas theory.

\subsection*{Acknowledgements}  I thank P. Boalch and A. Bourget for pointing to me the article \cite{beem2025simplifying}. I am funded by the PNRR Grant CF 44/14.11.2022, ``Cohomological Hall Algebras of Smooth Surfaces and Applications''

\section{Irregular curves with boundary data and type $A$ Argyres--Douglas theories}
\label{sec:irreg_curves}

In this section, we recall the notion of irregular curves with boundary data (see e.g \cite{boalch2021topology, boalch2025twisted} for more details), and discuss the form of those corresponding to the data defining a type $A$ Argyres--Douglas theory. 

\subsection{Irregular curves with boundary data}

Let $\Sigma$ be a smooth complex projective curve (here we will only be interested in $\Sigma=\mathbb P^1$).

\subsubsection{Stokes circles}

Let $a\in \Sigma$. Let $\varphi_a:\widehat \Sigma_a\to \Sigma$ be the real oriented blow-up of $\Sigma$ at $a$. The preimage $\varphi_a^{-1}(a)=:\partial_a$ is a circle parametrizing directions around $a$. Let $z_a$ be a local coordinate on $\Sigma$ in a neighbourhood of $a$ vanishing at $a$. 

\begin{definition}
The (local) \emph{exponential local system} at $a$ is the local system of sets (i.e. the covering space) $\mathcal I_a\to \partial_a$ whose local sections are germs of functions on germs of sectors of the form
\begin{equation}
\label{eq:exponential_factor}
q=\sum_{k=1}^s b_k z_a^{-k/r},
\end{equation}
where $r\geq 1$ is an integer, and $b_k\in \mathbb C$ for $k=1, \dots, s$. 
\end{definition}

Such a section $q$ of $\mathcal I_a$ determines several numbers:
\begin{itemize}
\item 
The smallest possible integer $r\geq 1$ such that $q$ can be written in this way is the \emph{ramification order} $q$, denoted by $\Ram(q)$ (in particular for $q=0$ we have $\Ram(q)=1$). 
\item  The degree $s$ of $q$ as a polynomial in $z_a^{-1/r}$, with $r=\Ram(q)$, is the \emph{irregularity} of $q$, denoted by $\Irr(q)$.
\item The quotient $s/r\in \mathbb Q_{\geq 0}$ is then called the \emph{slope} of $q$, and denoted by $\Slope(q)$ (for $q=0$ we set $\Slope(q)=0$).
\end{itemize}

\begin{definition} A \emph{Stokes circle} at $a$ is a connected component of the topological space $\mathcal I_a$, i.e. an element of $\pi_0(\mathcal I_a)$.
\end{definition}

Given a local section $q$ of $\mathcal I_a$ as above, we denote by $\cir{q}$ its connected component in $\mathcal I_a$. It is homeomorphic to a circle, hence the terminology. The corresponding subcover $\cir{q}\to\partial_a$ has finite degree $r$ (this corresponds to the fact that there are $r$ determinations of the root $z_a^{1/r}$ in any direction). 

More concretely, the set of Stokes circles at $a$ is in bijection with the set of orbits of polynomials $\sum_{k=1}^s b_k z_a^{-k/r}$ in $z_a^{-1/r}$ without constant term, under the cyclic Galois action permuting the determinations of the $r$-th root of $z$, generated by $z_a^{-1/r}\mapsto e^{-2i\pi/r}z_a^{-1/r}$.

We call the Stokes circle $\cir{0}$ at $a$ the \emph{tame circle} at $a$, and refer to other ones as \emph{wild} Stokes circles.

\subsubsection{Local irregular classes and boundary data} An irregular class is a collection of Stokes circles with multiplicities:

\begin{definition}
Let $a\in \mathbb P^1$. A (local) irregular class at $a$ is a multiset of Stokes circles at $a$, i.e. a formal sum
\begin{equation}
\label{eq:irregular_class}
\Theta_a=\sum_{i=1}^{p} n_i \cir{q_i},
\end{equation}
where $\cir{q_i}\in \pi_0(\mathcal I_a)$ are pairwise distinct Stokes circles, and $n_i\in \mathbb Z_{\geq 1}$, for $i\in \{1, \dots, p\}$.

The \emph{rank} of the irregular class $\Theta_a$ is the integer
\begin{equation}
\mathrm{rk}(\Theta_a)=\sum_{i=1}^p n_i\Ram(q_i).
\end{equation}
\end{definition}

\begin{definition}
Let $\Theta_a=\sum_{i=1}^{p} n_i \cir{q_i}$ be an irregular class at $a$. A boundary datum $\mathcal C_a$ for $\Theta_a$ is the datum of a conjugacy class
\[
\mathcal C_{\cir{q_i}_a}\subset \mathrm{GL}_{n_i}(\mathbb C),
\]
for each $i\in \{1, \dots, p\}$.
\end{definition}

The well-known Turritin--Levelt theorem implies that any rank $N$ connection $\widehat \nabla_a$ on the formal punctured disc $\widehat D^{\times}_a:=\mathrm{Spec}(\mathbb C(\!(z_a)\!))$ canonically determines a pair $(\Theta_a, \mathcal C_a)$ where $\Theta_a$ is a rank $N$ irregular class at $a$ and $\mathcal C_a$  is a boundary datum for $\Theta_a$. The connection has a regular singularity at $a$ if $\Theta_a=N\cir{0}$, and an irregular singularity otherwise.

\subsubsection{Global case} 

\begin{definition} Let $N\geq 1$ be an integer.
\begin{itemize}
\item 
A rank $N$ irregular curve is a triple $(\Sigma, \mathbf a, \bm\Theta)$, where $\Sigma$ is a smooth complex projective curve, $\mathbf a =\{a_1, \dots a_n\}$ is a finite set of points on $\Sigma$, and $\bm\Theta=(\Theta_{a_1}, \dots, \Theta_{a_n})$ is the datum of an irregular class $\Theta_{a_i}$ of rank $N$ at each $a_i$. 
\item
A rank $N$ irregular curve with boundary data is a quadruple $(\Sigma, \mathbf a, \bm\Theta, \bm{\mathcal C})$, where $(\Sigma, \mathbf a, \bm\Theta)$ is an irregular curve, and $\bm{\mathcal C}=(\mathcal C_{a_1}, \dots, \mathcal C_{a_n})$, where $\mathcal C_{a_i}$ is a boundary datum for $\Theta_{a_i}$ for each $i\in \{1, \dots, n\}$.
\end{itemize}
\end{definition}

If $\Sigma$ is a complex projective curve,  $\mathbf a=\{a_1, \dots, a_n\}\subset \Sigma$ a finite set of points on $\Sigma$,  an algebraic connection $(E,\nabla)$ of rank $N$ on $\Sigma^{\circ}:=\Sigma\smallsetminus \mathbf a$ canonically determines a rank $N$ irregular curve, as well as boundary data, by considering its restriction to the formal punctured disk around each $a_i$ and taking the local formal data $(\Theta_{a_i}, \mathcal C_{a_i})$ at $a_i$.

By the work of Boalch and Boalch--Yamakawa \cite{boalch2014geometry, boalch2015twisted}, any irregular curve $(\Sigma, \mathbf a, \bm\Theta)$ determines a (possibly empty) wild character variety $\mathcal M_B(\Sigma, \mathbf a, \bm\Theta)$. Via the irregular Riemann--Hilbert correspondence, it parametrizes isomorphism classes of connections $(E,\nabla)$ on $\Sigma^\circ$ with irregular class $\Theta_{a_i}$ at $a_i$.

Furthermore, the variety $\mathcal M_B(\Sigma, \mathbf a, \bm\Theta)$ has an algebraic Poisson structure, and its symplectic leaves correspond to fixing boundary data $\bm{\mathcal C}$ for the irregular curve $(\Sigma, \mathbf a, \bm\Theta)$. We denote by $\mathcal M_B(\Sigma, \mathbf a, \bm\Theta, \bm{\mathcal C})$ the symplectic wild character varieties obtained in this way.

\begin{remark}
We say that an irregular curve with boundary data $\bm\Sigma$ is \emph{effective} if there exists an irreducible connection $(E,\nabla)$ with data $\bm\Sigma$. In the rest of the article, irregular curves with boundary data will always be assumed effective.
\end{remark}

\begin{notation}
For our purposes here, we will only consider the genus zero case, i.e. $\Sigma=\mathbb P^1$, and it will be convenient at times to use slightly different (but clearly equivalent) points of view on genus zero irregular curves with boundary data:
\begin{itemize}
\item We defined a global irregular class $\bm\Theta$ both as a collection $(\Theta_{a_1}, \dots, \Theta_{a_n})$ of irregular classes at different points on $\mathbb P^1$. We can also view it as a multiset of Stokes circles allowed to lie above different points of $\mathbb P^1$, i.e. a formal sum 
\begin{equation}
\bm\Theta=\sum_{i=1}^p n_i\cir{q_i},
\end{equation}
where $\cir{q_i}$ is a Stokes circle at some point $\pi(\cir{q_i})=a_i$ in $\mathbb P^1$, i.e. a connected component of the global exponential local system $\mathcal I:=\sqcup_{a\in \mathbb P^1} \mathcal I_a$ (for convenience we will sometimes write Stokes circles as $\cir{q}_a$, with a subscript indicating the corresponding singularity, sometimes without subscript.)
\item Similarly, we can view the pair $(\bm\Theta, \bm{\mathcal C})$ as a collection
\[
(\bm\Theta, \bm{\mathcal C})=\left\{ (\cir{q_i}, \mathcal C_{\cir{q_i}})\;\middle\vert\; i\in \{1, \dots, p\}\right\}
\]
of pairs $(\cir{q_i}, \mathcal C_{\cir{q_i}})$.
\end{itemize}
Furthermore, the singularity locus $\mathbf a$ is determined by the global irregular class $\bm\Theta$, so an irregular curve with boundary data is actually determined by the pair $(\bm\Theta, \bm{\mathcal C})$. By a slight abuse of language, we will thus sometimes view a genus zero irregular curves with boundary data as a pair $(\bm\Theta, \bm{\mathcal C})$, with $\cir{q_i}$ a Stokes circle, and $\mathcal C_{\cir{q_i}}$ a conjugacy class for $i\in\{1, \dots, p\}$.

A last point: associating boundary data $\bm{\mathcal C}$ to a connection $(E,\nabla)$ depends on a choice of orientation of $\mathbb P^1$ (if we reverse the orientation we get the inverse conjugacy classes). Here it will be convenient to adopt the following convention: for $a\neq \infty$ we use the standard orientation on $\mathbb P^1$ to define $\mathcal C_a$, but for $a=\infty$ we take the reverse orientation. This amounts to having the chosen orientation at each $a\in \mathbb P^1$ corresponding to the standard orientation on the complex plane, using the local coordinate $z_a$ defined by $z_a:=z-a$ for $a\neq \infty$, and $z_a:=z^{-1}$ for $a=\infty$ (where $z$ denotes the standard global coordinate on $\mathbb A^1$.) 
\end{notation}

\subsection{Type $A$ Argyres--Douglas case}

In the framework of \cite{beem2025simplifying}, the initial data required to define a type $A$ Argyres--Douglas theory consist of a Higgs bundle on $\mathbb P^1$ with an irregular pole at $\infty$, a regular pole at $0$, and prescribed form at the irregular singularity. Via the wild nonabelian Hodge correspondence of \cite{biquard2004wild}, this amounts on the de Rham side to connections whose irregular curves with boundary data are of a specific form.

\subsubsection{Most general form}

For our purposes here, it will be necessary to consider slightly more general irregular curves than those obtained in this way, for the following reason. We want to show that the dualities between type $A$ Argyres--Douglas theories derived in \cite{beem2025simplifying} correspond to a composition of simple operations on irregular connections on $\mathbb P^1$. The point is that when applying these operations successively, at the intermediate steps, the irregular curves with boundary data will not correspond any more to a Higgs bundle of the form considered in \emph{loc.~cit.} (concretely the conjugacy classes can become more general).

This motivates the following definition, which encompasses all irregular curves with boundary data that we will encounter:

\begin{definition}
\label{def:generalized_ADA_type}
Let $\bm{\Sigma}=(\mathbb P^1, \mathbf a, \bm \Theta, \bm{\mathcal C})$ be a genus zero irregular curve with boundary data, and $N=\mathrm{rk}(\bm\Sigma)$. We say that $\bm{\Sigma}$ is of generalized AD-$A$ type if
\begin{itemize}
 \item $\mathbf a=\{0, \infty\}$;
 \item the singularity at 0 is regular, i.e. $\Theta_0= N \cir{0}_0$.
 \item the irregular class at infinity is of the form
 \begin{equation}
  \Theta_\infty=\cir{q_1}_\infty+\dots+\cir{q_m}_\infty+ (N-mr)\cir{0}_\infty,
 \end{equation}
 where $\cir{q_1}_\infty, \dots, \cir{q_m}_\infty$ are pairwise distinct Stokes circles with the same ramification order $r>1$, same slope $k=\frac{s}{r}>0$ with $s,r$ coprime, such that the Galois orbits of their leading terms are pairwise distinct, i.e.  for $i\in\{1, \dots, m\}$ we have
\begin{equation}
q_i=\sum_{l=1}^{s} b_{il} z^{l/r},
\end{equation}
with $b_{il}\in \mathbb C$ for $l\in \{1, \dots, s\}$, $b_{is}\neq 0$, and $b_{is}^r\neq b_{js}^r$ for distinct $i,j\in \{1, \dots, m\}$.
\end{itemize}
\end{definition}

In brief, there are two singularities: an irregular one at $\infty$, a regular one at $0$, and we assume that the nonzero Stokes circles at the irregular singularity are all of multiplicity 1, and have the same slope $k$.

\begin{definition}
If $\bm\Sigma$ is as in Def. \ref{def:generalized_ADA_type}, we say that the tuple
\begin{equation}
\mathcal T:=(m,k, \mathcal C_{\cir{0}_0},\mathcal C_{\cir{0}_\infty})
\end{equation}
is the \emph{parameter} of $\bm\Sigma$.
\end{definition}

\begin{notation}
To simplify notations, since we will the conjugacy classes associated to the tame circles at $0$ and $\infty$ are the only ones playing an important role, we will write $\mathcal C_0$, $\mathcal C_\infty$ instead of $\mathcal C_{\cir{0}_0}$, $\mathcal C_{\cir{0}_\infty}$ respectively.
\end{notation}

\begin{notation}
Using the Jordan normal form, a conjugacy class $\mathcal C\subset\mathrm{GL}_n(\mathbb C)$ is determined by the datum of its (pairwise distinct) eigenvalues $\lambda_1, \dots, \lambda_p\in \mathbb C^*$ and of the Young diagram $Y_{\lambda_i}$ encoding the nilpotent part of its restriction to the characteristic subspace associated to $\lambda_i$ for each $i\in \{1, \dots, p\}$. We will then write
\[
\mathcal C=\{(\lambda_1, [Y_{\lambda_1}]), \dots, (\lambda_p, [Y_{\lambda_p}])\}.
\]
In particular, a unipotent conjugacy class $\mathcal C$ is of the form $\mathcal C=\{(1,[Y])\}$ for some Young diagram $Y$.

For the conjugacy classes $\mathcal C_0, \mathcal C_\infty$ in a generalized AD-$A$ parameter $\mathcal T$, we will write
\begin{align*}
\mathcal C_0&=\{(\alpha_1, [Y^0_{\alpha_1}]), \dots, (\alpha_k,[Y^0_{\alpha_k}])\},\\
\mathcal C_\infty&=\{(\beta_1, [Y^{\infty}_{\beta_1}]), \dots, (\beta_l,[Y^{\infty}_{\beta_l}])\},
\end{align*}
where $\alpha_1, \dots, \alpha_k\in \mathbb C^*$ are the eigenvalues of $\mathcal C_0$, and $Y^0_{\alpha_i}$, for $i\in\{1, \dots, k\}$, the corresponding Young diagrams, and similarly $\beta_1, \dots, \beta_l\in \mathbb C^*$ are the eigenvalues of $\mathcal C_\infty$, and $Y^0_{\beta_j}$, for $j\in\{1, \dots, l\}$, the corresponding Young diagrams.
\end{notation}

\subsubsection{Particular cases} We now introduce some particular subtypes of irregular curves with boundary data of generalized AD-$A$, type, obtained by requiring $\mathcal C_0$ and $\mathcal C_\infty$ to be of a specific form.

\begin{definition}
\label{def:particular_AD-A_types}
Let $\bm \Sigma$ be an irregular curve with boundary data of generalized AD-$A$ type, and $\mathcal T=(m, k, \mathcal C_0, \mathcal C_\infty)$ its parameter. Then we say that:
\begin{itemize}

\item $\bm \Sigma$ is of standard AD-$A$ type if:
\begin{itemize}
\item $\mathcal C_0$ is unipotent, and 
\item $\mathcal C_\infty$ is either of rank 0, or a regular semisimple conjugacy class, with pairwise distinct eigenvalues all different from 1. 
\end{itemize}

\item $\bm \Sigma$ is of standard type I AD-$A$ type if:
\begin{itemize}
\item $\mathcal C_0$ is unipotent, and 
\item $\mathcal C_\infty$ is of rank 0.
\end{itemize}

\item $\bm \Sigma$ is of generalized type I AD-$A$ type if:
\begin{itemize}
\item $\mathcal C_0$ is unipotent, and 
\item $\mathcal C_\infty$ is either of rank 0, or unipotent.
\end{itemize}
\end{itemize}
If $\bm\Sigma$ is of standard (possibly type I) AD-$A$ type, the conjugacy class $\mathcal C_0$ is of the form $\mathcal C_0=\{(1, [Y])\}$, for some Young diagram $Y$. We call the triple $\mathscr T:=(m,k, [Y])$ the \emph{reduced parameter} of $\bm\Sigma$.
\end{definition}

The motivation for these definitions is that the standard, and standard type I cases correspond via the wild nonabelian Hodge correspondence to the Higgs bundles providing the initial data for the type $A$ Argyres--Douglas theories considered in \cite{beem2025simplifying}, namely:
\begin{itemize}
\item A type $A$ Argyres--Douglas theory in the terminology of \emph{loc.~cit.} corresponds to an irregular curve with boundary data of standard AD-$A$ type.
\item A type $A$ Argyres--Douglas theory  of type I in the terminology of \emph{loc.~cit.} corresponds to an irregular curve with boundary data of standard type I AD-$A$ type.
\end{itemize}
(On the other hand, the generalized type I case will appear in the intermediate steps when realizing the AD dualities from successive application of simple operations on connections  in the type I case.)

More precisely, the dictionary between our notations and those of \cite{beem2025simplifying} for the parameters defining the AD theories is as follows: the AD theory denoted by $D_p^b(\mathfrak{sl}_N, [Y])$ corresponds to irregular curves of standard (possibly type I) type, with the reduced parameter $\mathscr T=(m, k, [Y])$ related to the parameters used in \emph{loc.~cit.} as indicated in the tables of Fig. \ref{fig:dictionary_notations} below.

\begin{figure}
\centering

\begin{subfigure}[b]{0.45\textwidth}
\centering

\begin{tabular}{|c|c|}
\hline
Irreg. curve  & AD theory\\
\hline
$m$ & $m=\mathrm{gcd}(p, b)$\\
\hline
$k$ & $\frac{n+k}{n}=\frac{p}{b}$\\
\hline
$r$ & $n$\\
\hline
$s$ & $n+k=q$\\
\hline
$N$ & $N$\\
\hline
$Y$ & $Y$\\
\hline
\end{tabular}
\end{subfigure}
\hfill
\begin{subfigure}[b]{0.45\textwidth}
\centering
\begin{tabular}{|c|c|}
\hline
AD theory & Irreg. curve\\
\hline
$m$ & $m$\\
\hline
$p$ & $ms$\\
\hline
$q$ & $s$\\
\hline
$b$ & $mr$\\
\hline
$N$ & $N=\rm{rk}(Y)$\\
\hline
$Y$ & $Y$\\
\hline
$k$ & $s-r$\\
\hline

\end{tabular}
\end{subfigure}
\caption{Dictionary between parameters for type $A$ Argyres--Douglas theories in the notations of \cite{beem2025simplifying} and the parameter of the irregular curve with boundary data of standard AD-$A$ type (possibly type I) in our notation here, in both directions. The unipotent conjugacy class $\mathcal C_0$ is given by $\mathcal C_0=\{(1, [Y])\}$.}
\label{fig:dictionary_notations}
\end{figure}

Let us briefly discuss how to see this from the form of the irregular Higgs bundles playing the role of initial data for Argyres--Douglas theories given in \cite[§2.1]{beem2025simplifying}.

To this end, let us first recall how to determine the irregular curve with boundary data of the connection determined by a given irregular Higgs bundle via the wild nonabelian Hodge correspondence of \cite{biquard2004wild} (actually the case that we need is more general than the untwisted one discussed in full detail in  \emph{loc. cit}, but things remain true in the twisted case as well, see also \cite{chuang2020twisted,eper2025rank}).

Consider a meromorphic Higgs bundle $(E, \Phi)$ on $\Sigma$, and let $a$ be a pole of the Higgs field $\Phi$. Let $z$ be a local coordinate on $\Sigma$ at $a$. We assume that $(E, \Phi)$ is `good' in the sense of \cite[Remark 6]{boalch2018wild}, i.e. locally, after passing to a cyclic cover i.e. setting $w:=z^{1/r}$ for some integer $r\geq 1$, in some trivialization the Higgs field $\Phi$ has a local model of the form
\begin{equation}
\label{eq:local_model_Higgs_bundle}
\left(\frac{T_{s+1}}{w^{s+1}}+\dots +\frac{T_2}{w^2}+\frac{T_1}{w}\right)dw+ \text{holomorphic terms}
\end{equation}
where $T_1, \dots, T_{s+1}$ are constant $N$ by $N$ matrices, with $T_2, \dots, T_{s+1}$ diagonal. 

By the wild nonabelian Hodge correspondence, $(E,\Phi)$ defines a (parabolic) connection $(E', \nabla)$, admitting a local model of the form
\begin{equation}
\label{eq:local_model_connection}
\left(\frac{A_{s+1}}{w^{s+1}}+\dots +\frac{A_2}{w^2}+\frac{A_1}{w}\right)dw + \text{holomorphic terms},
\end{equation}
where $A_1, \dots, A_{s+1}$ are constant $N$ by $N$ matrices, with $A_2, \dots, A_{s+1}$ diagonal,
where the matrices $A_i$ and $T_i$ are related as follows (from Thm. 0.1 and the remarks on the nilpotent parts in \cite{biquard2004wild}):
\begin{itemize}
\item $A_i=2 T_i$ for $i\geq 2$. 
\item The eigenvalues of $A_1$, $T_1$ and the parabolic weights on the connection and Higgs bundle sides are related by Simpson's table.
\item The nilpotent parts of $A_1$ and $T_1$ are the same.
\end{itemize}
Now, the corresponding irregular class $\Theta_a$ is obtained as follows: define the \emph{irregular type} $Q$ of the local model by 
\begin{equation}
\label{eq:def_irreguar_type}
dQ=\left(\frac{A_{s+1}}{w^{s+1}}+\dots +\frac{A_2}{w^2}\right)dw,
\end{equation}
with
\[
Q=\begin{pmatrix}
\widehat{q}_1 & & \\
& \ddots &\\
& & \widehat{q}_N
\end{pmatrix}
\]
where $\widehat q_i\in w^{-1}\mathbb C[w^{-1}]$ for $i\in \{1, \dots, N\}$. Here the \emph{exponential factors} $\widehat q_i$ are not necessarily distinct; up to reordering the local basis we can write without loss of generality
\[
Q=\mathrm{diag}(\underbrace{\widetilde q_1, \dots, \widetilde q_1}_{\widetilde n_1 \text{ times}}, \dots, \underbrace{\widetilde q_k, \dots, \widetilde q_k}_{\widetilde n_k \text{ times}})
\]
with $\widetilde q_i\in w^{-1}\mathbb C[w^{-1}]$ for $i\in \{1, \dots, k\}$
with the $\widetilde q_i$ for $i\in \{1, \dots, k\}$ now pairwise distinct. 

Now, since $(E', \nabla)$ is meromorphic in $z$, if a Puiseux polar part $q\in z^{-1/r}\mathbb C[z^{-1/r}]$ is present in $Q$ with multiplicity $n$, then so are all its Galois conjugates (with the same multiplicity). Denoting by $\cir{q_1}_a, \dots,\cir{q_p}_a$ the Stokes circles corresponding to the distinct Galois orbits appearing in $Q$, and $n_i$, for $i\in \{1, \dots, p\}$ the corresponding multiplicities, we obtain a well-defined irregular class
\[
\Theta_a=\sum_{i=1}^p n_i\cir{q_i}_a,
\]

Finally, the boundary datum $\mathcal C_a$ for $\Theta_a$ is determined by the residue $A_1$, as follows. Using meromorphic gauge transformations, we can assume that $A_1$ is in the centralizer of $Q$, i.e. is a block diagonal matrix
\begin{equation}
\label{eq:additive_formal_monodromy}
A_1=\begin{pmatrix}
\Lambda_1 & & \\
& \ddots & \\
& & \Lambda_k
\end{pmatrix} 
\end{equation}
with $\Lambda_i$ a constant square matrix of size $\widetilde n_j$, for $j\in\{1, \dots, k\}$.
Then for $i\in \{1, \dots, p\}$ the conjugacy class $\mathcal C_{\cir{q_i}_a}$ associated to $\cir{q_i}_a$ is the conjugacy class of $\exp(2\sqrt{-1}\pi\Ram(q_j)\Lambda_j)$ for any $j\in \{1, \dots, k\}$ such that $\cir{\widetilde q_j}_a=\cir{q_i}_a$.

One can then apply this to the Higgs bundles described in \cite[§2.1]{beem2025simplifying} corresponding to the Argyres--Douglas theory denoted by $D_p^b(\mathfrak{sl}_N, [Y])$ there (choosing vanishing parabolic weights for the regular singularity at zero, and generic weights for the irregular singularity at infinity).

One obtains that the corresponding irregular curve with boundary data is of standard AD-$A$ type as per Def. \ref{def:particular_AD-A_types}, with parameter $\mathcal T=(m,k,\mathcal C_0, \mathcal C_\infty)$ such that $m, k, \mathcal C_0$ are given in terms of $b,p, N$ and $[Y]$ by the tables in Fig. \ref{fig:dictionary_notations}, and $\mathcal C_\infty$ is either of rank 0 (in the type I case), or generic regular semisimple, with eigenvalues different from 1. Indeed, for the regular singularity at zero $\mathcal C_0$, the residue $\Lambda$ of the Higgs field is nilpotent, given by the Young diagram $[Y]$, so the conjugacy class $\mathcal C_0$ is the conjugacy class of $\exp(2\sqrt{-1}\pi \Lambda)$, hence $\mathcal C_0$ is unipotent, and $\mathcal C_0=\{(1,[Y])\}$.

\begin{example} Let us discuss a few simple examples:

\begin{enumerate}

\item Let $r, s\geq 1$ be two coprime integers, with $s>r$. The irregular curve 
\[
(\mathbb P^1, \infty, \cir{z^{s/r}}_\infty)
\]
corresponds to the Argyres--Douglas theory often referred to as the $(A_{s-1}, A_{s-r-1})$ theory, denoted by $D_q(\mathfrak{sl}_N, [N])$ in \cite{beem2025simplifying}, with $q=s$ and $N=r$. 

Indeed, from our dictionary, that theory corresponds to irregular curves with boundary data of standard type I AD$-A$ type with reduced parameter $\mathscr T=(1, \frac{q}{N}, [N])$. In that case we have $\mathcal C_0=\{\mathrm{Id}_N\}$, so the corresponding connections have trivial monodromy at 0, i.e. there is only an apparent singularity there. Furthermore, the irregular class at infinity is of the form $\Theta_\infty=\cir{q'}_\infty$, where $\Slope(q')=\frac{q}{N}=\frac{s}{r}$.

Some particular cases of this are related to some well-known mathematical objects:
\begin{itemize}
\item The case $r=2$, $s=3$ is related to the Airy equation $y''-xy=0$, whose irregular curve is $(\mathbb P^1, \infty, \cir{\frac{2}{3} z^{3/2}}_\infty)$. In that case the wild character variety is just a point, reflecting the fact that the Airy equation is \emph{rigid}.
 
\item For $s=5$ and $r=2$, the irregular curve $\bm\Sigma$ is the standard Lax representation for the Painlevé I equation. In that case the wild character variety has a complex dimension 2, and parametrizes solutions of Painlevé I. 
 \end{itemize}
 \item Consider the irregular curve \[
(\mathbb P^1, \infty, \cir{\lambda z^3}_\infty+\cir{\mu z^3}_\infty)
\]
with $\lambda, \mu\in \mathbb C^*$, $\lambda\neq \mu$. It corresponds to an irregular curve with boundary data of standard AD-$A$ type, with reduced parameter $\mathscr T=(2, 3, [2])$, i.e. to the Argyres--Douglas theory $D_3(\mathfrak{sl}_2, [2])$ in the notations of \cite{beem2025simplifying}. On the other hand, this irregular curve is none other but the standard Lax representation for the Painlevé II equation. 
\end{enumerate}
\end{example}

\section{Fourier transform and Möbius transformation}
\label{sec:Fourier}

In this section, we review some facts about the action of the Fourier transform on irregular curves with boundary data on $\mathbb P^1$, and discuss the action of the Möbius transformation exchanging 0 and $\infty$.

\subsection{Modified formal data} 

It turns out that the action of the Fourier transform on formal data of irregular connections on $\mathbb P^1$ is formulated more conveniently in terms of slightly modified formal data. The basic reason for this is that the trivial part of (the formalization of) the connection at the singularities at finite distance does not contribute to the Fourier transform. Concretely, this means that to obtain the formal data of the Fourier transform, one first needs to quotient by these trivial parts.

\begin{definition}
\label{def:truncation_conjuacy_class}
Let $\mathcal C\subset \mathrm{GL}_n(\mathbb C)$ be a conjugacy class, and $A\in \mathcal C$. The truncation $\tau(\mathcal C)$ of $\mathcal C$ is the conjugacy class of $A_{\vert{\mathrm{Im}(A-1)}}$ in $\mathrm{GL}_{m}(\mathbb C)$, where $m:=\mathrm{rk}(A-1)$.  (this does not depend on the choice of $A$.)
\end{definition}

This can be interpreted in terms of the corresponding Young diagrams. Let us introduce the following definition:

\begin{definition}~
\label{def:truncation_Young_diagram}
Let $[Y]$ be a Young diagram. The truncation $[\tau(Y)]$ is the Young diagram obtained from $[Y]$ by deleting its first column, if $[Y]$ is nonempty, or the empty Young diagram, if $[Y]$ is empty.
\end{definition}

It is then straightforward to see that taking the truncation of a conjugacy class amounts to taking the truncation of its Young diagram associated to the eigenvalue 1 (when nonempty):

\begin{lemma}
\label{lemma:truncation_in_terms_of_Young_diag}
Let $\mathcal C=\{(\lambda_1, [Y_{\lambda_1}]), \dots, (\lambda_k, [Y_{\lambda_k}])\}\subset \mathrm{GL}_n(\mathbb C)$ be a conjugacy class. 

\begin{itemize}
\item If the eigenvalues $\lambda_i$, for $i\in \{1, \dots, p\}$ are all different from $1$, then $\tau(\mathcal C)=\mathcal C$. 
\item Otherwise, if $\lambda_i=1$ for some $i\in \{1,\dots, p\}$, then we have
\[
\tau(\mathcal C)=\{(\lambda_1, [Y_{\lambda_1}]), \dots, (1, [\tau(Y_{\lambda_i}]), \dots, (\lambda_k, [Y_{\lambda_k}])\}.
\]
\end{itemize}
\end{lemma}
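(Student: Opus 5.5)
Since the conjugacy class $\tau(\mathcal C)$ depends only on the linear algebra of a single matrix $A\in\mathcal C$, I would argue directly with the Jordan decomposition. Write $\mathbb C^n=\bigoplus_{i} V_{\lambda_i}$ for the decomposition into generalized eigenspaces of $A$. Each $V_{\lambda_i}$ is $A$-stable and hence $(A-1)$-stable, so
\[
\mathrm{Im}(A-1)=\bigoplus_i (A-1)(V_{\lambda_i}),
\]
and the restriction of $A$ to $\mathrm{Im}(A-1)$ is the direct sum of the restrictions of $A$ to the pieces $(A-1)(V_{\lambda_i})$. It is therefore enough to treat each generalized eigenspace separately.

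First consider $\lambda_i\neq 1$. On $V_{\lambda_i}$ the operator $A-1$ has the single eigenvalue $\lambda_i-1\neq 0$, so it is invertible there and $(A-1)(V_{\lambda_i})=V_{\lambda_i}$. On this piece, restricting to the image changes nothing: the block keeps eigenvalue $\lambda_i$ and Young diagram $[Y_{\lambda_i}]$. If no eigenvalue equals $1$, this gives $\tau(\mathcal C)=\mathcal C$, which is the first claim.

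Now consider $\lambda_i=1$. Then $N:=A-1$ is nilpotent on $V_1$, and its Jordan blocks have sizes given by the rows of $[Y_1]$ (with whichever convention the paper uses, rows or columns, for how the diagram encodes block sizes). Take a single Jordan block of size $\ell$, with basis $e_1,\dots,e_\ell$ satisfying $Ne_j=e_{j-1}$ and $Ne_1=0$. Its image under $N$ is $\mathrm{span}(e_1,\dots,e_{\ell-1})$, and $N$ acts on this span as a single Jordan block of size $\ell-1$; when $\ell=1$ the image is $0$. So on $V_1$, restricting $A=1+N$ to $\mathrm{Im}(N)$ gives a unipotent matrix whose block sizes are all decreased by one, with blocks of size $1$ disappearing. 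This is exactly the operation of deleting the first column of $[Y_1]$, since the first column has one box for each block. Finally I would check the rank bookkeeping: $\dim \mathrm{Im}(A-1)=\mathrm{rk}(A-1)$, which matches the size $m$ in Definition~\ref{def:truncation_conjuacy_class}.

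The only step needing care is this convention check, namely that subtracting one from every block size corresponds to removing the first column rather than the first row. I would confirm it by comparing with the paper's usage, for instance $\mathcal C_0=\{\mathrm{Id}_N\}$ having Young diagram $[N]$ (a single column of height $N$, with $N$ blocks of size $1$), which truncates to the empty diagram, as the argument above predicts. There is also a degenerate case: if $Y_1$ consists only of its first column, the eigenvalue $1$ disappears from $\tau(\mathcal C)$, and the formula should be read as dropping the pair with empty diagram.
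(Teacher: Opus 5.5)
Your proposal is correct, and it is exactly the verification the paper has in mind; the paper simply calls the lemma ``straightforward'' and gives no further argument. Your convention check is also right: the paper's brackets list column heights, with rows corresponding to Jordan blocks, so $\{\mathrm{Id}_N\}\leftrightarrow[N]$. Consequently, removing the first column shortens every Jordan block by one, as your argument shows.
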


In particular, for a unipotent conjugacy class $\mathcal C=\{(1, [Y])\}\subset \mathrm{GL}_N(\mathbb C)$, we have $\tau(\mathcal C)=\{(1, [\tau(Y)])\}$.

Another immediate observation, which will be important, is the following converse:

\begin{lemma}
\label{lemma:reconstruction_conjugacy_class_from_truncation}
A conjugacy class $\mathcal C\subset\mathrm{GL}_n(\mathbb C)$ is fully determined by the datum of its truncation $\tau(\mathcal C)$ together with its rank $N$, namely if we write 
\[
\{(\lambda_1, [Y'_{\lambda_1}]), \dots, (\lambda_k, [Y'_{\lambda_k}])\}
\]
with $\lambda_i=1$ for some $i\in \{1, \dots, k\}$ (with $Y_1$ possibly empty), we have
\[
\mathcal C=\{(\lambda_1, [Y'_{\lambda_1}]), \dots, (1, [h,Y'_{1}]), \dots, (\lambda_k, [Y'_{\lambda_k}])\}
\]
where $h:=n-\sum_{i=1}^k \mathrm{rk}(Y'_{\lambda_i})$.
\end{lemma}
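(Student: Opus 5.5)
The plan is to use the Jordan decomposition together with Lemma \ref{lemma:truncation_in_terms_of_Young_diag}. Let $A\in\mathcal C$. By Lemma \ref{lemma:truncation_in_terms_of_Young_diag}, $\tau(\mathcal C)$ has the same eigenvalues $\lambda_j\neq 1$ as $\mathcal C$, with the same Young diagrams $[Y_{\lambda_j}]$. This holds because $A-1$ is invertible on each characteristic subspace for $\lambda_j\neq 1$, so that subspace lies entirely inside $\mathrm{Im}(A-1)$. For the eigenvalue $1$, the diagram of $\tau(\mathcal C)$ is $[\tau(Y_1)]$, and we set $[Y_1]=\emptyset$ if $1$ is not an eigenvalue of $A$. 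Since everything except the diagram $[Y_1]$ is read off directly from $\tau(\mathcal C)$, it remains only to show that $[Y_1]$ is recovered from $[\tau(Y_1)]$ and $n$.

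The key combinatorial step is the following. A Young diagram $[Y]$ is determined by its first column height $h_1(Y)$ together with $[\tau(Y)]$: one simply prepends a column of height $h_1$ to the columns of $\tau(Y)$. This is well defined as a Young diagram because $h_1\geq h_2$. We then compute $h_1(Y_1)$ by a rank count. The rank $n$ of $\mathcal C$ equals $\sum_{\lambda_j\neq 1}\mathrm{rk}(Y_{\lambda_j})+\mathrm{rk}(Y_1)$, and $\mathrm{rk}(Y_1)=h_1(Y_1)+\mathrm{rk}(\tau(Y_1))$. Therefore
\[
h_1(Y_1)=n-\sum_{i}\mathrm{rk}(Y'_{\lambda_i}),
\]
where $Y'_{\lambda_i}$ denote the diagrams of $\tau(\mathcal C)$. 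This is exactly $h$ in the statement, and it gives $[Y_1]=[h,Y'_1]$.

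A mild point needs checking: when $h=0$, the notation $[0,Y'_1]$ must be read as the empty diagram, and this requires $Y'_1=\emptyset$. This is consistent, because $h_1(Y_1)=0$ forces $Y_1$ to be empty. Conversely, if $1$ is not an eigenvalue of $\tau(\mathcal C)$ but $h>0$, then $1$ is an eigenvalue of $\mathcal C$ with a single column, i.e. a Jordan block structure $[1^h]$. There is no real obstacle here. The only care needed is bookkeeping of the convention $h_1\geq$ the heights of the remaining columns, which holds automatically since the column heights of $Y_1$ were already weakly decreasing.
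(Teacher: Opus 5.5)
Your argument is correct and matches what the paper intends. The paper gives no proof, stating the lemma as the immediate converse of Lemma~\ref{lemma:truncation_in_terms_of_Young_diag}: the eigenvalues $\lambda\neq 1$ are untouched by $\tau$, and the deleted first column of $[Y_1]$ is recovered from the rank count $h_1(Y_1)=n-\sum_i\mathrm{rk}(Y'_{\lambda_i})$. One purely notational remark: the paper's brackets list column heights, so when $1$ is not an eigenvalue of $\tau(\mathcal C)$ but $h>0$, the diagram should be written $[h]$ (your $h$ Jordan blocks of size $1$), since $[1^h]$ in the paper's notation denotes a single Jordan block of size $h$.
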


We introduce a variant of global formal data of irregular connections on $\mathbb P^1$, obtained by taking the truncations of the conjugacy classes of the formal monodromies of the regular parts  at finite distance.

\begin{definition}
Let $(\bm\Theta, \bm{\mathcal C})$ be (effective) global formal data, with
\[
\bm\Theta=\sum_{i=1}^p n_i\cir{q_i}, \qquad \bm{\mathcal C}=(\mathcal C_{\cir{q_1}}, \dots, \mathcal C_{\cir{q_p}}),
\]
with $\mathcal C_{\cir{q_i}}$ a conjugacy class in $\mathrm{GL}_{n_i}(\mathbb C)$ for $i\in \{1, \dots, p\}$.
The corresponding global \emph{modified formal data} are the pair $(\bm{\breve\Theta}, \breve{\bm{\mathcal C}})$ defined by:
\[
\bm{\breve\Theta}=\sum_{i=1}^p m_i\cir{q_i}, \qquad \breve{\bm{\mathcal C}}=(\breve{\mathcal C}_{\cir{q_1}}, \dots, \breve{\mathcal C}_{\cir{q_p}}),
\]
where
\begin{itemize}
\item If $\cir{q_i}$ is a tame circle at finite distance, i.e. $\cir{q_i}=\cir{0}_a$ with $a\neq\infty$, then $m_i=\mathrm{rk}(A-1)$ for any $A\in \mathcal C_{\cir{q_i}}$, and $\breve{\mathcal C_i}:=\tau(\mathcal C_i)\subset \mathrm{GL}_{m_i}(\mathbb C)$.
\item Otherwise, $m_i:=n_i$ and $\breve{\mathcal C_i}=\mathcal C_i$.
\end{itemize}
\end{definition}

Notice that, if $a_1, \dots, a_n$ are the singularities of $\bm{\Theta}$ at finite distance, and if we view the modified global irregular class $\bm{\breve{\Theta}}$ as a collection $(\breve\Theta_{a_1}, \dots, \breve\Theta_{a_n}, \breve\Theta_{\infty})$ of local irregular classes (we can always include $\infty$ as a singularity, up to taking $\breve\Theta_\infty$ as the trivial irregular class $N \cir{0}_\infty$), then now in general we do not have any more the equality of ranks $\mathrm{rk} (\Theta_{a_i})=\mathrm{rk}(\Theta_\infty)$ for $i\in \{1, \dots, n\}$, but only the inequality
\[
\mathrm{rk}(\Theta_{a_i})\leq\mathrm{rk}(\Theta_\infty).
\]

Furthermore, since $\mathrm{rk}(\Theta_\infty)=\mathrm{rk}(\bm\Theta)$, it follows from Lemma \ref{lemma:reconstruction_conjugacy_class_from_truncation} that the non-modified irregular class $\bm{\Theta}$ is fully determined by the modified one $\bm{\breve{\Theta}}$, i.e. modified and non-modified global formal data are equivalent (cf. \cite[Corollary 3.6]{arinkin2008fourier}.)

\begin{remark}
\label{remark:link_connection_D_modules}
From a more conceptual point of view, the passage to modified formal data has to do with the notion of nearby and vanishing cycles for regular holonomic $\mathcal D$-modules and perverse sheaves in dimension 1, cf. \cite{arinkin2008fourier}. More precisely, given a connection $(E,\nabla)$ on $\mathbb A^1\smallsetminus\{a_1, \dots a_m\}$, there one can associate to it a $\mathcal D_{\mathbb A^1}$-module $M$ called its minimal extension (or middle extension, or intermediate extension). Taking the non-modifed formal data amounts to considering the formal nearby cycles of $M$ at $a_i$, while taking the modified formal data amounts to considering their formal vanishing cycles. 
\end{remark}

\subsection{Fourier transform}

The Weyl algebra $A_1=\mathbb C[z, \partial_z]$ is the algebra of differential operators on the affine line $\mathbb A^1(\mathbb C)$. The Fourier (or Laplace, or Fourier-Laplace) transform is induced by the automorphism of $A_1$ defined by
\begin{equation}
\label{eq:def_Fourier_transform}
\left\lbrace
\begin{array}{ccc}
z &\mapsto &-\partial_z,\\
\partial_z & \mapsto & z.
\end{array}
\right.
\end{equation}
 
In turn, if $M$ is an $A_1$-module, its Fourier transform is defined as the $A_1$-module $M'$ with the same underlying $\mathbb C$-vector space, and $A_1$-action induced by \eqref{eq:def_Fourier_transform}.

Due to the fact that algebraic connections on Zariski open subsets of $\mathbb P^1(\mathbb C)$ are closely related to $A_1$-modules (cf. Remark \ref{remark:link_connection_D_modules}), up to some caveats this also induces a well-defined operation on connections: if $(E,\nabla)$ is an irreducible connection on a Zariski open set $U$ that is not a rank one connection with just a pole of order 2 at infinity, there is a well-defined irreducible connection $(E', \nabla')$ on a Zariski open subset $U'$ of $\mathbb P^1$ that we call the Fourier transform of $(E,\nabla)$, and denote by $F\cdot (E,\nabla)$. 

The Fourier transform acts in a very nontrivial way, typically changing the rank, number of singularities, and pole orders of the connections. On the Betti side, it is known in some cases how to explicitly determine the Stokes data of the Fourier transform in terms of those of the initial connection, but in general it is a difficult problem.

However, at the level of formal data, the action of the Fourier transform is well-understood: the stationary phase formula \cite{malgrange1991equations,garcia2004microlocalization, fang2009calculation, sabbah2008explicit,graham2013calculation} implies that the formal data of the Fourier transform $F\cdot (E, \nabla)$ are fully determined by those of $(E, \nabla)$:

\begin{theorem}
There exists a well-defined self-bijection of the set of effective formal data of connections on $\mathbb P^1$, that we call the formal Fourier transform, and will also denote by $F$, such that if $(E,\nabla)$ has formal data $(\bm\Theta, \bm{\mathcal C})$, then $F\cdot (E,\nabla)$ has formal data $F\cdot (\bm\Theta, \bm{\mathcal C})$.
 
Furthermore, the formal Fourier transform has the following form: there is a self-bijection of the set $\pi_0(\mathcal I)$ of all Stokes circles (over all points in $\mathbb P^1)$, that we will denote also by $F$, such that if $(\bm{\breve\Theta}, \bm{\breve{\mathcal C}})$ are the modified formal data associated to $(\bm\Theta, \bm{\mathcal C})$, the modified formal data associated to $F\cdot(\bm\Theta, \bm{\mathcal C})$ are given by:

\begin{equation}
F\cdot (\bm{\breve\Theta}, \bm{\breve{\mathcal C}}):=\left\{ \left(F\cdot \cir{q_i}, (-1)^{\Irr(q_i)}\breve{\mathcal C}_{\cir{q_i}}\right) \;\middle \vert\; i\in \{1, \dots, p\} \right\} .
\end{equation} 
Moreover, the bijection $F$ on Stokes circles is given explicitly by a Legendre transform.
\end{theorem}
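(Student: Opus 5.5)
This statement is essentially a repackaging of the stationary phase formula, so I would assemble it from the literature rather than prove it from scratch. The plan has three steps.

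\textbf{Step 1: local decomposition.} For an irreducible connection $(E,\nabla)$ on a Zariski open subset of $\mathbb A^1$, I would pass to its minimal extension $M$, which is an $A_1$-module (Remark \ref{remark:link_connection_D_modules}). The formal data of the Fourier transform $M'$ are read off from the formal completions of $M'$ at the points of $\mathbb P^1$. The stationary phase formula of Malgrange, Garcia L\'opez, Fang, Sabbah and Graham-Squire gives a decomposition of these completions:
\begin{itemize}
\item the formalization of $M'$ at $\infty$ is a direct sum of local Fourier transforms $\mathcal F^{(a,\infty)}$, taken over $a\in\mathbb A^1$, of the formal vanishing cycles of $M$ at $a$;
\item it also contains $\mathcal F^{(\infty,\infty)}$ applied to the part of $\widehat M_\infty$ of slope $>1$;
\item the formalization of $M'$ at each finite point $b$ is $\mathcal F^{(\infty,b)}$ of the slope-one part of $\widehat M_\infty$ with leading term $-bz$.
\end{itemize}

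\textbf{Step 2: reduction to elementary local models.} By the Turrittin--Levelt decomposition, every formal connection is a sum of pieces $\mathrm{El}(\rho,q,R)$: a push-forward along $z\mapsto z^r$ of $e^{q}\otimes R$, with $R$ regular. Each local Fourier transform sends such a piece to a piece of the same kind. Fang and Sabbah give the explicit formulas: the exponential factor transforms by a Legendre transform, the ramification becomes $\Irr(q)\pm\Ram(q)$ according to the case, and the regular part becomes $R$ tensored with a Kummer-type rank one factor. This defines a bijection $F$ on $\pi_0(\mathcal I)$, and it is indeed a bijection because the inverse Fourier transform yields the inverse Legendre transform.

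\textbf{Step 3: monodromy, effectivity and well-definedness.}
\begin{itemize}
\item The formal monodromy of the transformed regular part is $(-1)^{\Irr(q)}$ times the original one. This sign comes from the $\sqrt{z}$-type factor in the explicit formulas; it is the factor $\mathcal L_{\chi}$ in Sabbah's formulation.
\item The choice of orientation convention at $\infty$ is what makes the conjugacy classes match without inverses.
\item Tame circles at finite points enter only through vanishing cycles, which accounts for the truncations $\tau$ in the modified data. The modified data determine the full formal data by Lemma \ref{lemma:reconstruction_conjugacy_class_from_truncation}, so the formal data of $F\cdot(E,\nabla)$ depend only on those of $(E,\nabla)$.
\item Effectivity is preserved because $F$ preserves irreducibility on the relevant class; the exception is rank one connections with a pole of order $2$ at infinity.
\item Bijectivity of the map on formal data then follows from $F^{4}=\Id$ up to the sign $z\mapsto -z$, or more directly by using the inverse transform.
\end{itemize}

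\textbf{Main obstacle.} I expect the hardest part to be the careful bookkeeping of the sign $(-1)^{\Irr(q)}$ and of the orientation conventions in the formal monodromy. The published stationary phase formulas use differing normalizations of the Fourier kernel and of orientation at $\infty$. I would fix the conventions of \eqref{eq:def_Fourier_transform} and check them on rank one test cases: $e^{\lambda z}$, regular singularities $z^{\alpha}$, and the Airy connection.
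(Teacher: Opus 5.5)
Your overall route is the paper's. The paper gives no argument of its own for this theorem; it simply invokes the stationary phase formula of Malgrange, Garc\'{\i}a L\'opez, Fang, Sabbah and Graham-Squire, and your Steps 1--3 are a more detailed version of that citation.

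The one step that does not go through as written is the uniform bookkeeping in Steps 2--3 for Stokes circles at $\infty$ of slope exactly one. These are the circles $q=bz+q'$ with $b\neq 0$ and $\Slope(q')<1$, and they are precisely the circles sent to finite points $b'\neq 0$. The local transform $\mathcal F^{(\infty,b)}$ is obtained by first removing the linear term (tensoring by $e^{-bz}$, up to sign conventions), then applying $\mathcal F^{(\infty,0)}$, then translating. Consequently:
\begin{itemize}
\item Sabbah's Kummer factor is $\mathcal L_{(-1)^{\Irr(q')}}$, not $\mathcal L_{(-1)^{\Irr(q)}}$.
\item The ramification of the image is $\Ram(q)-\Irr(q')$, not $\Irr(q)\pm\Ram(q)$; note that here $\Irr(q)=\Ram(q)$.
\end{itemize}

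You can check that the uniform sign cannot be right. The square of \eqref{eq:def_Fourier_transform} is $z\mapsto -z$, $\partial_z\mapsto -\partial_z$, so $F^2$ is pullback by $z\mapsto -z$ and preserves formal monodromies. But applying $(-1)^{\Irr}$ twice along $\cir{0}_a\mapsto\cir{\pm az}_\infty\mapsto\cir{0}_{-a}$ (with $a\neq 0$) multiplies the class by $(-1)^{0+1}=-1$. Concretely, take $e^{az}z^{\mu}$ with $\mu\notin\mathbb Z$. It has formal monodromy $e^{-2i\pi\mu}$ on $\cir{az}_\infty$ (using $z_\infty=z^{-1}$). Its Fourier transform $(z\pm a)^{-1-\mu}$ has monodromy $e^{-2i\pi\mu}$ at its finite singularity, with no sign.

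So what your argument actually establishes is the displayed formula with one correction. For a slope-one circle $\cir{q_i}_\infty$ with linear part $b_iz$, the factor $(-1)^{\Irr(q_i)}$ must be replaced by $(-1)^{\Irr(q_i-b_iz)}$, which is the irregularity of the image circle; all other cases are unchanged. The statement as printed is inaccurate in this case. However, the paper only applies it to tame circles and to circles at $0$ and $\infty$ of slope $\neq 1$, so nothing downstream is affected.

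Your proposed rank-one tests would have caught this had you used $e^{\lambda z}z^{\alpha}$ rather than $e^{\lambda z}$ alone. The latter is the excluded case, whose transform is a delta module.

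A smaller point: in Step 1, $\mathcal F^{(\infty,b)}$ computes only the formal vanishing cycles of $M'$ at $b$, not its full formal germ. The missing free regular summand has to be recovered from the rank via Lemma \ref{lemma:reconstruction_conjugacy_class_from_truncation}, as you effectively do in Step 3.
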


Concretely, this implies that to determine the formal data of the Fourier transform, one proceeds in three steps, as follows:
\begin{enumerate}
\item Pass to the modified formal data
\item Take the Legendre transform to obtain the modified formal data of the Fourier transform
\item Pass to the corresponding the non-modified formal data
\end{enumerate}

For our purposes here, it will not be necessary to discuss in detail the Legendre transform for all types of Stokes circles.

\begin{lemma}For Stokes circles at zero and infinity, we have:
\begin{itemize}
\item $F\cdot \cir{0}_0=\cir{0}_\infty$.
\item If $\cir{q}$ is a Stokes circle at 0, of slope $k=\frac{s}{r}\neq 0$, then $F\cdot \cir{q}$ is a Stokes circle at $\infty$ of slope $\frac{k}{k+1}=\frac{s}{r+s}$.
\item If $\cir{q}$ is a Stokes circle at $\infty$ of slope $k=\frac{s}{r}>1$, then $F\cdot \cir{q}$ is a Stokes circle at $\infty$ of slope $\frac{k}{k-1}=\frac{s}{s-r}>1$.
\item If $\cir{q}$ is a Stokes circle at $\infty$ of slope $k=\frac{s}{r}<1$, then $F\cdot \cir{q}$ is a Stokes circle at $0$ of slope $\frac{k}{1-k}=\frac{s}{r-s}$.
\end{itemize}
\end{lemma}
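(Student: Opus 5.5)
The plan is to read off the slope of each Fourier-transformed circle from the explicit Legendre transform in the stationary phase formula. We treat the three families of circles separately: tame circles at $0$, wild circles at $0$, and wild circles at $\infty$ with slope above or below $1$.

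For $\cir{0}_0$ there is no exponential part. The stationary phase formula (local Fourier transform $\mathcal F^{(0,\infty)}$ in the formulation of \cite{fang2009calculation, sabbah2008explicit}) sends a regular singular formal germ at $0$ to a regular singular germ at $\infty$. Hence the tame circle at $0$ maps to $\cir{0}_\infty$. The monodromy is preserved up to the sign $(-1)^{\Irr(0)}=1$, as recorded in the theorem above.

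For a wild circle we write a representative $q$ and use the Legendre transform. The Fourier transform corresponds to the kernel $e^{-z\zeta}$, with $\zeta$ the Fourier-dual coordinate. The transformed exponential factor is the critical value of $q(z)-z\zeta$, so it is determined by the system
\[
\zeta=q'(z),\qquad \widehat q(\zeta)=q(z)-z\zeta .
\]
Only leading terms are needed to get the slope, so we take $q= c\,z_a^{-k}$ with $k=s/r$ and work out leading orders.

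At $a=0$ we have $q=cz^{-k}$, so $\zeta\sim -kc\,z^{-k-1}$ and $|\zeta|\to\infty$. Thus $z\sim \zeta^{-1/(k+1)}$ and
\[
\widehat q\sim z^{-k}\sim \zeta^{k/(k+1)} .
\]
In the coordinate $z_\infty=\zeta^{-1}$ this gives slope $\frac{k}{k+1}=\frac{s}{r+s}$ at $\infty$.

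At $\infty$ we have $q=cz^{k}$, so $\zeta\sim kc\,z^{k-1}$. If $k>1$, then $\zeta\to\infty$ and $z\sim\zeta^{1/(k-1)}$, hence $\widehat q\sim \zeta^{k/(k-1)}$. This is a circle at $\infty$ of slope $\frac{s}{s-r}>1$. If $k<1$, then $\zeta\to0$ and $z\sim \zeta^{-1/(1-k)}$, hence $\widehat q\sim \zeta^{-k/(1-k)}$. This is a circle at $0$ of slope $\frac{s}{r-s}$.

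The main obstacle is to justify that these leading-order computations really give the irregularity and the ramification order of the image circle, and not merely its growth rate. We would handle it in three steps.

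First, the resulting exponents $\frac{s}{r+s}$, $\frac{s}{s-r}$, $\frac{s}{r-s}$ are already in lowest terms, because $\gcd(s,r)=1$.

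Second, we need the number of determinations to match. The $r$ determinations of $z^{1/r}$, combined with solving $\zeta=q'(z)$, give exactly $r+s$ (resp. $|s-r|$) branches of $\widehat q$. This matches the rank count in the explicit formulas of \cite{fang2009calculation, graham2013calculation}: a circle of rank $r$ and irregularity $s$ maps to a circle of rank $r+s$ or $|s-r|$.

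Third, the case $k=1$ is genuinely excluded, since then $\zeta$ tends to a finite nonzero value. This is consistent with the hypotheses.

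Having checked the ranks, we would cite the explicit stationary phase formulas of \cite{fang2009calculation, sabbah2008explicit, graham2013calculation} rather than redo the asymptotic analysis.
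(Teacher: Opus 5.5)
Your proposal is correct and follows the same route as the paper, whose proof only says that the lemma follows from the stationary phase formula and the explicit Legendre transform. Your leading-order computations, with $\widehat q = q - zq'$ having nonzero leading coefficient $(1\pm k)c$ because $k\neq 1$, simply make that citation explicit. Since the lemma only asserts slopes, and the slope of a circle is the leading exponent of a representative, your rank-matching step is more than the statement needs; it does, however, correctly identify the ramification orders $r+s$ and $|s-r|$.
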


\begin{proof}
This follows from the stationary phase formula, using the explicit form of the Legendre transform.
\end{proof}

\subsection{Möbius transformation}

We now briefly discuss the Möbius transformation $M$ exchanging $0$ and $\infty$. Let $\varphi:\mathbb P^1\to \mathbb P^1$ the automorphism of $\mathbb P^1$ defined by $z\mapsto\frac{1}{z}$.  

\begin{definition}
Let $(E,\nabla)$ be a connection on a Zariski open subset $U$ of $\mathbb P^1$. We define $M\cdot (E,\nabla)$ as the connection $\varphi_*(E,\nabla)$ on $\varphi(U)$.
\end{definition}

Concretely, applying $M$ just amounts to do the change of coordinate on $\mathbb P^1$ given by $z'=\frac{1}{z}$. In turn, it is straightforward to see that, if $\cir{q}$ is a Stokes circle at $a\in \mathbb P^1$, there is a well-defined Stokes circle $M\cdot \cir{q}$ at $\varphi(a)$, obtained as the connected component of the polar part of $\varphi_*(q)$. Moreover $M\cdot \cir{q}$ has the same ramification order, irregularity and slope as $\cir{q}$.

The formal data of $M\cdot (E,\nabla)$ are related to those of $(E,\nabla)$ as follows:

\begin{lemma}
Let $(E,\nabla)$ be a connection on a Zariski open subset of $\mathbb P^1$, with irregular curve with boundary data $\bm\Sigma=(\mathbb P^1, \mathbf a, \bm\Theta, \bm{\mathcal C})$. 
Then the irregular curve with boundary data of $M\cdot (E,\nabla)$ only depends on $\bm\Sigma$, and is given by
\[
M\cdot \bm\Sigma:=(\mathbb P^1, \varphi(\mathbf a), M\cdot(\bm\Theta, \bm{\mathcal C}))
\]
where, for $(\bm\Theta, \bm{\mathcal C})=\left\lbrace \left(\cir{q_i}, \mathcal C_{\cir{q_i}}\right)\;\middle\vert\; i\in \{1, \dots, p\}\right\rbrace$, we set

\[
M\cdot(\bm\Theta, \bm{\mathcal C}):=\left\lbrace\left(M\cdot\cir{q_i}, \mathcal C_{\cir{q_i}}\right)\;\middle\vert\; i\in \{1, \dots, p\}\right\rbrace.
\]
\end{lemma}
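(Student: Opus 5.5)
The plan is to reduce the statement to a local computation at each singular point. The Möbius transformation $\varphi(z)=1/z$ is an automorphism of $\mathbb P^1$, so $M\cdot(E,\nabla)=\varphi_*(E,\nabla)$ is a connection on $\varphi(U)$ whose singular locus is $\varphi(\mathbf a)$. Since formal data are computed by restricting to formal punctured discs, it suffices, for each $a\in\mathbf a$, to compare the restriction of $(E,\nabla)$ to $\widehat D^\times_a$ with the restriction of $\varphi_*(E,\nabla)$ to $\widehat D^\times_{\varphi(a)}$.

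\textbf{Irregular classes.} The map $\varphi$ induces an isomorphism of formal discs $\widehat D^\times_{\varphi(a)}\to\widehat D^\times_a$. In terms of the standard local coordinates, it is given by $z_a=u(z_{\varphi(a)})$ for an invertible formal series $u$ with $u(0)=0$ and $u'(0)\neq 0$. For example, for $a\neq 0,\infty$ we have $z_a=z-a$ and $z_{\varphi(a)}=1/z-1/a$. For $a=0$ we have $z_0=z$ and $z_\infty=z^{-1}$ on the target, so $z_0=z'^{-1}=z_\infty$ exactly.

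I would apply the Turrittin--Levelt decomposition $\widehat\nabla_a\cong\bigoplus_i \widehat\nabla_{q_i}\otimes R_i$ and pull it back along this coordinate change. Each exponential factor $q_i$, written in $z_a^{-1/r}$, is transformed into $\varphi_*(q_i)$. Taking the polar part in $z_{\varphi(a)}^{-1/r}$ and reabsorbing the holomorphic remainder into the regular part gives the Stokes circle $M\cdot\cir{q_i}$, which is well defined as the text explains. It keeps the same ramification order and irregularity, since the leading coefficient is only rescaled by a nonzero power of $u'(0)$. The multiplicities $n_i$ are unchanged, so $\Theta_{\varphi(a)}=\sum n_i\, M\cdot\cir{q_i}$.

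\textbf{Boundary data.} The conjugacy class $\mathcal C_{\cir{q_i}}$ is the formal monodromy of the regular factor $R_i$ around the corresponding cover of $\partial_a$, computed with the chosen orientation. The main subtlety, and the step I expect to need real care, is the orientation convention. A holomorphic coordinate change with $u'(0)\neq0$ preserves the orientation of $\partial_a$, so the monodromy class is preserved. For $a\notin\{0,\infty\}$ this settles the matter.

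For $a=0$ and $a=\infty$ the map $\varphi$ exchanges the two points, and one might fear an inversion of $\mathcal C$. However, the convention fixed in the Notation defines $\mathcal C_\infty$ using the coordinate $z_\infty=z^{-1}$ with its standard orientation. Under $\varphi$ we have exactly $z_\infty\circ\varphi=z_0$, and symmetrically $z_0\circ\varphi=z_\infty$. So the local coordinate is literally transported, with no reversal, and the conjugacy class is unchanged.

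Since all data are built intrinsically from $\bm\Sigma$, the result depends only on $\bm\Sigma$. This gives $M\cdot\bm\Sigma=(\mathbb P^1,\varphi(\mathbf a),M\cdot(\bm\Theta,\bm{\mathcal C}))$ as claimed.
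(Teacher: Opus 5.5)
Your argument is correct and matches the approach the paper has in mind. The paper states this lemma without a written proof and justifies it only by the remark that $M$ amounts to the coordinate change $z'=1/z$, with $M\cdot\cir{q}$ defined as the component of the polar part of $\varphi_*(q)$; you flesh this out pointwise via Turrittin--Levelt, and your check that $z_\infty\circ\varphi=z_0$ and $z_0\circ\varphi=z_\infty$ is exactly why the paper's orientation convention at $\infty$ leaves the conjugacy classes uninverted.
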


\section{Argyres--Douglas dualities from operations on connections in type I}
\label{sec:type_I_dualities}

\subsection{Elementary AD dualities in type I}

We are ready to study the transformation of irregular curves with boundary data of generalized type I AD-$A$ type under compositions of $F$ and $M$. 

In general, not any composition of $M$ and $F$ will preserve the property of being an irregular curve with boundary data of generalized type I AD-$A$ type. For this reason, we introduce some elementary compositions for which this will be the case.

\begin{definition}
The type I elementary AD-$A$-operations are the following operations on irregular curves with boundary data on $\mathbb P^1$:
\begin{itemize}
\item the Fourier transform $F$;
\item the composition $\widetilde F^+:=FM$;
\item the composition $\widetilde F^{-}:=MF$.
\end{itemize}
\end{definition}

\begin{definition}
Let $O\in \{F, \widetilde F^+, \widetilde F^+\}$ be an elementary type I AD-$A$-operation, and $\mathcal T=(m, k, [Y^0], [Y^{\infty}])$ a generalized type I AD-$A$ parameter. We say that $O$ is an \emph{allowed} operation on $\mathcal T$ in the following cases:
\begin{enumerate}
\item $O=F$ and $k>1$;
\item $O=\widetilde F^+$;
\item $O=\widetilde F^-$ and $k<1$.
\end{enumerate} 
\end{definition}

Our first main observation is that irregular curves with boundary data that are of generalized type I AD-$A$ type remain so under allowed elementary AD-$A$-operations, justifying our terminology.

\begin{proposition}
\label{prop:transformation_parameter_elementary_operation_type_I}
Let $\bm\Sigma$ be an irregular curve with boundary data of generalized type I AD-$A$ type, with parameter $\mathcal T=(m,k,[Y^0], [Y^\infty])$. Then, if $O$ is an allowed elementary AD-$A$-operation on $\mathcal T$, the irregular curve with boundary data $O\cdot \bm\Sigma$ is of generalized AD-$A$ type. 

Furthermore, in each case, we have the following explicit expression for the parameter $O\cdot \mathcal T$ of $O\cdot\bm\Sigma$:
\begin{enumerate}
\item If $k>1$, then the irregular curve with boundary data $F\cdot \bm\Sigma$ is  of generalized type I AD-$A$ type, with parameter
\begin{equation}
F\cdot \mathcal T=\left(m,\frac{s}{s-r},[ms-h_1(Y^0),Y^{\infty}], [\tau(Y^{0})]\right).
\end{equation}
\item The irregular curve with boundary data $\widetilde F^+\cdot \bm\Sigma$ is of generalized type I AD-$A$ type, with parameter
\begin{equation}
\widetilde F^+\cdot\mathcal T=\left(m,\frac{s}{s+r},[ms-h_1(Y^\infty),Y^{0}], \tau(Y^{\infty})\right).
\end{equation}
\item If $k<1$, the irregular curve with boundary data $\widetilde F^-\cdot \bm\Sigma$ is of  generalized type I AD-$A$ type, with parameter
\begin{equation}
\widetilde F^-\cdot\mathcal T=\left(m,\frac{s}{r-s},[\tau(Y^{0})], [ms-h_1(Y^0),Y^{\infty}]\right). 
\end{equation}
\end{enumerate}
\end{proposition}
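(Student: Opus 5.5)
The plan is to apply the three-step recipe described after the stationary phase theorem: pass to modified formal data, apply the bijection $F$ on Stokes circles together with the sign rule $(-1)^{\Irr(q)}$ on conjugacy classes, then reconstruct the non-modified data with Lemma \ref{lemma:reconstruction_conjugacy_class_from_truncation}. For $\widetilde F^{\pm}$ we insert the Möbius transformation $M$ before or after this. By the lemma on $M$, $M$ only moves circles between $0$ and $\infty$, keeping slopes and conjugacy classes. The rank of the data is $N=\mathrm{rk}(Y^0)=mr+\mathrm{rk}(Y^\infty)$.

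\emph{Case (1), $O=F$ with $k>1$.} The modified data of $\bm\Sigma$ are three pieces. At $0$ we get $\tau(\mathcal C_0)=\{(1,[\tau(Y^0)])\}$, of rank $N-h_1(Y^0)$. At infinity we get the $m$ wild circles $\cir{q_i}_\infty$, of slope $s/r$ and multiplicity one, and the tame circle $\cir{0}_\infty$ with class $\{(1,[Y^\infty])\}$. By the lemma on slopes:
\begin{itemize}
\item $F\cdot\cir{0}_0=\cir{0}_\infty$, carrying the class $\{(1,[\tau(Y^0)])\}$. The sign is trivial since $\Irr=0$.
\item Each $F\cdot\cir{q_i}_\infty$ is a circle at $\infty$ of slope $s/(s-r)$. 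Its ramification is $s-r$ and its irregularity is $s$; these are coprime because $\gcd(s,r)=1$.
\item The tame circle at $\infty$ must go to $\cir{0}_0$. This is the inverse of the first bullet, using that $F$ is a bijection and $F^2$ acts on circles as $z\mapsto -z$. It carries the class $\{(1,[Y^\infty])\}$, which is now a \emph{modified} class at $0$.
\end{itemize}
The multiplicity-one conjugacy classes of the wild circles play no role in the parameter.

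We then check the form required by Definition \ref{def:generalized_ADA_type}. The new rank is $N'=m(s-r)+\mathrm{rk}(\tau(Y^0))=m(s-r)+N-h_1(Y^0)$. Substituting $N=mr+\mathrm{rk}(Y^\infty)$ gives $N'=ms-h_1(Y^0)+\mathrm{rk}(Y^\infty)$. Lemma \ref{lemma:reconstruction_conjugacy_class_from_truncation} then says the new $\mathcal C_0$ is unipotent with Young diagram $[ms-h_1(Y^0),Y^\infty]$. The new $\mathcal C_\infty$ is $[\tau(Y^0)]$. There are no other singularities, since no other circle appears.

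The step I expect to be the main obstacle is checking the two technical conditions of the definition. The first is that the images $F\cdot\cir{q_i}$ are pairwise distinct, with pairwise distinct Galois orbits of leading terms. For this I would use the explicit leading-order Legendre transform $q=bz^{s/r}+\dots\mapsto b'z^{s/(s-r)}+\dots$ with $b'$ a fixed power of $b$ times a constant. The second is that $ms-h_1(Y^0)\ge h_1(Y^\infty)$, so that the result is really a Young diagram. This inequality should follow from effectivity, or I would justify it as an a posteriori consequence of the fact that $F\cdot\bm\Sigma$ is the formal data of an actual connection: the reconstructed class is a genuine conjugacy class, so the inserted column length is automatically consistent.

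\emph{Case (2), $O=\widetilde F^+=FM$.} First apply $M$. This produces a regular singularity at $\infty$ with class $\{(1,[Y^0])\}$, and at $0$ the wild circles of slope $s/r$ together with the tame circle carrying $\{(1,[Y^\infty])\}$. Then apply $F$. The modified tame class at $0$ is $[\tau(Y^\infty)]$, and it goes to $\cir{0}_\infty$. The wild circles at $0$ go to circles at $\infty$ of slope $s/(r+s)$. The tame circle at $\infty$ goes to $\cir{0}_0$ with modified class $[Y^0]$. The rank count is $N'=m(r+s)+\mathrm{rk}(Y^\infty)-h_1(Y^\infty)$. Reconstructing $\mathcal C_0$ adds a column of height $N'-\mathrm{rk}(Y^0)=ms-h_1(Y^\infty)$, giving the stated formula.

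\emph{Case (3), $O=\widetilde F^-=MF$ with $k<1$.} The computation is the same as in case (1), except that the wild circles now go to circles at $0$ of slope $s/(r-s)$. Applying $M$ then swaps $0$ and $\infty$, which yields the stated parameter.
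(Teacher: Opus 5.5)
Your proposal is correct and takes essentially the same route as the paper's proof: pass to modified formal data, apply the stationary phase bijection on Stokes circles (exchanging $\cir{0}_0$ and $\cir{0}_\infty$ and changing the slope of the wild circles), compute the new rank, and rebuild the tame class at the finite point via Lemma~\ref{lemma:reconstruction_conjugacy_class_from_truncation}, inserting $M$ before or after $F$ for $\widetilde F^{\pm}$. The paper leaves implicit the extra checks you flag (distinct Galois orbits of the transformed leading terms via the leading-order Legendre transform, and $ms-h_1(Y^0)\geq h_1(Y^\infty)$ as a consequence of effectivity), and your arguments for them are sound; in case (3), note that the reconstruction at $0$ must subtract the wild rank $m(r-s)$ now sitting there, which still yields the column $ms-h_1(Y^0)$.
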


Notice that in every case, the pattern is as follows: applying one elementary step removes the first column of one of the Young diagrams $[Y^0]$, $[Y^\infty]$, and adds the `complement' of that column to the other Young diagram. Which one of $[Y^0]$ and $[Y^\infty]$ decreases depends on the elementary operation.
\begin{proof}
This follows quite directly from the stationary phase formula, applied to irregular curves with boundary data of generalized $AD$-A type. Let us give the details.

By definition, an irregular curve with boundary data $\bm\Sigma=(\bm\Theta, \bm{\mathcal C})$ of generalized type I AD-$A$ type with parameter $\mathcal T$ corresponds to a global irregular class of the form
\[
\bm{\Theta}=\cir{q_1}_\infty+\dots+\cir{q_m}_\infty+ n_{\cir{0}_\infty} \cir{0}_\infty+ N\cir{0}_0,
\]
where $N=\mathrm{rk}(\bm\Theta)=\mathrm{rk}(Y^0)$, $n_{\cir{0}_\infty}=\mathrm{rk}(Y^\infty)$, and the boundary data include the conjugacy classes $\mathcal C_{\cir{0}_\infty}=\{(1, [Y^\infty])\}\subset \mathrm{GL}_{n_{\cir{0}_\infty}}(\mathbb C)$ and $\mathcal C_{\cir{0}_0}=\{(1, [Y^0])\}\subset \mathrm{GL}_N(\mathbb C)$. In particular we have the relation 
\[
N=mr+n_{\cir{0}_\infty}.
\]

Let us discuss the first case. Assume $k>1$. The modified formal data $(\bm{\breve\Theta},\bm{\breve\mathcal C})$ associated to $\bm\Sigma=(\bm\Theta,\bm{\mathcal C})$ feature the modified global irregular class
\[
\bm{\breve\Theta}=\cir{q_1}_\infty+\dots+\cir{q_m}_\infty+ n_{\cir{0}_\infty} \cir{0}_\infty+ \breve N_0\cir{0}_0,
\]
where $\breve N_0=\mathrm{rk}(\tau(\mathcal C_0))=N-h_1(Y^0)$, and include the modified conjugacy classes $\breve{\mathcal C}_{\cir{0}_0}=\tau(\mathcal C_0)=\{(1, [\tau(Y^0_\infty)])\}$, and $\breve{\mathcal C}_{\cir{0}_\infty}=\mathcal C_\infty$. 
The modified formal data $(\bm{\breve\Theta}', \bm{\breve\mathcal C}')$ associated to $F\cdot \Sigma$ are thus given by the modified irregular class
\[
\bm{\breve\Theta}'=\cir{\widetilde q_1}_\infty+\dots+ \cir{\widetilde q_m}_\infty+ n_{\cir{0}_\infty} \cir{0}_0+ \breve N_0\cir{0}_\infty
\]
where $\widetilde{q_i}$ is of slope $\frac{s}{s-r}$ for $i\in \{1, \dots, m\}$, and $\bm{\breve\mathcal C}'$ includes the conjugacy classes  
\begin{align*}
\breve{\mathcal C}'_{\cir{0}_0} &= {\mathcal C}_\infty \\
\breve{\mathcal C}'_{\cir{0}_\infty} &= \breve{\mathcal C}_0=\tau(\mathcal C_0).
\end{align*}
In particular, the rank of the Fourier transform $F\cdot \bm{\Sigma}$ is given by
\[
N'=\mathrm{rk}(F\cdot \bm{\Sigma})=m(s-r)+\breve N_0=ms+\mathrm{rk}(Y^\infty)-h_1(Y^0)
\]
Finally, the global formal data $(\bm{\Theta}',\bm{\mathcal C}')$ of $F\cdot \mathbf \Sigma$ are the corresponding non-modified formal data:  they are obtained from $(\bm{\breve\Theta}', \bm{\breve\mathcal C}')$ by replacing the unipotent conjugacy class $\breve{\mathcal C}'_{\cir{0}_0}=\{(1, [Y^\infty])\}$ by $\mathcal C'_{\cir{0}_0}:=\{(1, [h,Y^\infty])\}$, where $h=N'-\mathrm{rk}(Y^\infty)=ms-h_1(Y^0)$, so we obtain the desired result.

Let us now discuss the second case. First, applying the Möbius transformation $M$ has the effect on $\mathbf \Sigma$ of exchanging the roles of $0$ and $\infty$, hence the formal data $(\bm{\underline{\Theta}}, \bm{\underline{\mathcal C}})$ of $M\cdot \bm\Sigma$ are of the form
\[
\bm{\underline\Theta}=\cir{\underline q_1}_0+\dots+ \cir{\underline q_m}_0+ n_{\cir{0}_\infty} \cir{0}_0+ N\cir{0}_\infty,
\]
and the conjugacy classes $\underline{\mathcal C}_{\cir{0}_0}=\mathcal C_\infty$, $\underline{\mathcal C}_{\cir{0}_\infty}=\mathcal C_0$. 
Next, the modified formal data $(\underline{\bm{\breve{\Theta}}},\underline{\bm{\breve{\mathcal C}}})$ associated to $(\bm{\underline{\Theta}}, \bm{\underline{\mathcal C}})$ satisfy
\[
\bm{\breve{\underline\Theta}}=\cir{\underline q_1}_0+\dots+ \cir{\underline q_m}_0+ n' \cir{0}_0+ N\cir{0}_\infty,
\]
with $\cir{\underline q_i}$ of slope $\frac{s}{r}$ for $i\in \{1, \dots, m\}$,  $n'=\mathrm{rk}(\tau(\mathcal C_\infty))$, $\underline{\breve{\mathcal C}}_{\cir{0}_0}=\tau(\mathcal C_\infty)$, and $\underline{\breve{\mathcal C}}_{\cir{0}_\infty}=\mathcal C_0$. 
It follows that the formal data $(\bm{\breve\Theta}', \bm{\breve{\mathcal C}}')$ of $FM\cdot \bm\Sigma$ are of the form
\[
\bm{\breve{\Theta}}'=\cir{q'_1}_\infty+\dots+\cir{q'_m}_\infty+ n' \cir{0}_\infty+ N\cir{0}_0,
\]
with $q'_i$ of slope $\frac{s}{s+r}$ for $i\in\{1, \dots, m\}$, and $\breve{\mathcal C}'_{\cir{0}_\infty}=\tau(\mathcal C_\infty)$, and $\breve{\mathcal C}'_{\cir{0}_0}=\mathcal C_0$. 
In particular, the rank $N'$ of $FM\cdot \bm\Sigma$ is given by
\[
N'=m(r+s)+\mathrm{rk}(\breve{\mathcal C}'_{\cir{0}_\infty})=m(r+s)+\mathrm{rk}(\tau(\mathcal C_\infty))=ms +\mathrm{rk}(Y^0)-h_1(Y^\infty).
\]
Finally, the formal data $(\bm\Theta', \bm{\mathcal C}')=FM\cdot \bm\Sigma$ are the corresponding non-modified formal data: there are obtained by replacing $\breve{\mathcal C}'_{\cir{0}_0}=\mathcal C_0$ by $\mathcal C'_{\cir{0}_0}:=\{(1,[h,Y^0])\}$, with $h=N'-\mathrm{rk}(Y^0)=ms-h_1(Y^\infty)$, which yields the desired result.

Let us discuss the third case. Assume $k<1$. The modified formal data $(\bm{\breve\Theta},\bm{\breve\mathcal C})$ associated to $\bm\Sigma=(\bm\Theta,\bm{\mathcal C})$ feature the modified global irregular class
\[
\bm{\breve\Theta}=\cir{q_1}_\infty+\dots+\cir{q_m}_\infty+ n_{\cir{0}_\infty} \cir{0}_\infty+ \breve N_0\cir{0}_0,
\]
where $\breve N_0=\mathrm{rk}(\tau(\mathcal C_0))$, and the modified conjugacy classes $\breve{\mathcal C}_{\cir{0}_0}=\tau(\mathcal C_0)=\{(1, [\tau(Y^0_\infty)])\}$, and $\breve{\mathcal C}_{\cir{0}_\infty}=\mathcal C_\infty$. 
The modified formal data $(\bm{\breve{\underline\Theta}}, \bm{\breve{\underline{\mathcal C}}})$ associated to $F\cdot \bm\Sigma$ are given by the modified irregular class
\[
\bm{\breve{\underline\Theta}}=\cir{\underline q_1}_0+\dots+ \cir{\underline q_m}_0+ n_{\cir{0}_\infty} \cir{0}_0+ \breve N_0\cir{0}_\infty
\]
where $\underline q_i$ is of slope $\frac{s}{r-s}$ for $i\in \{1, \dots, m\}$, and $\bm{\breve{\underline{\mathcal C}}}$ includes the conjugacy classes  
\begin{align*}
\breve{\underline{\mathcal C}}_{\cir{0}_0} &= {\mathcal C}_\infty \\
\breve{\underline{\mathcal C}}_{\cir{0}_\infty} &= \breve{\mathcal C}_{\cir{0}_0}=\tau(\mathcal C_0).
\end{align*}
In particular, the rank of the Fourier transform $F\cdot \bm{\Sigma}$ is given by
\[
N'=\mathrm{rk}(F\cdot \bm{\Sigma})=\breve N_0=\mathrm{rk}(Y^0)-h_1(Y^0).
\]
Next, the global formal data $(\bm{\underline\Theta},\bm{\underline{\mathcal C}})$ of $F\cdot \mathbf \Sigma$ are the corresponding non-modified formal data:  they are obtained from $(\bm{\breve{\underline\Theta}}, \bm{\breve{\underline{\mathcal C}}})$ by replacing the unipotent conjugacy class $\breve{\underline{\mathcal C}}_{\cir{0}_0}$ by $\underline{\mathcal C}_{\cir{0}_0}:=\{(1,[h,Y^\infty_1])\}$, 
where 
\[
h=N'-\mathrm{rk}(\underline{\breve\Theta}_0)=N'-(m(r-s)+\mathrm{rk}(Y^\infty))=ms-h_1(Y^0).
\]
Finally, the (non-modified) irregular class of $MF\cdot \bm\Sigma$ is obtained from $(\bm{\underline\Theta},\bm{\underline{\mathcal C}})$ by exchanging the roles $0$ and $\infty$, which gives the desired result.
\end{proof}

\subsection{Orbits under elementary transformations}

Let us now study what happens when applying a composition of allowed elementary operations. 

\begin{definition}

Let $\bm\Sigma$ be an irregular curve with boundary data of generalized type I AD-$A$ type. We define $\mathcal O(\bm\Sigma)$ as the set of all irregular curves with boundary data of generalized type I AD-$A$ type of the form $O_k\dots O_1\cdot \bm\Sigma$, where $k\geq 1$, and for each $i\in \{1, \dots, k\}$, $O_i$ is an allowed elementary  type I AD-operation on $O_{i-1}\cdot O_1\cdot \bm\Sigma$. 

Similarly, if $\mathcal T$ is a generalized type I AD-$A$ parameter we define in the same way a set $\mathcal O(\mathcal T)$.
\end{definition}

By a slight abuse of language, we will call $\mathcal O(\bm\Sigma)$ (resp. $\mathcal O(\mathcal T)$) the orbit of $\bm\Sigma$ (resp. $\mathcal T$) under AD transformations.

\begin{theorem}
Let $\mathcal T=(m, k, \mathcal C_0, \mathcal C_\infty)$ be a generalized type I AD-$A$ parameter. Let us write $k=\frac{s}{r}$, with $s, r$ coprime, and assume that $s>1$. Let $r=\kappa s+\rho$ , with $\kappa\in \mathbb Z_{\geq 0}$ and $\rho\in \{1, \dots, s-1\}$ be the euclidean division of $r$ by $s$. 
Then the orbit $\mathcal O(\bm\Sigma)$ has the structure described on Fig. \ref{fig:structure_orbit} below.

\begin{figure}[h]
\centering
\begin{tikzpicture}
\tikzstyle{vertex}=[circle,fill=black,minimum size=6pt,inner sep=0pt]
\node[vertex] (A0) at (0,0){} ;
\node[vertex] (B0) at (3,0){};  

\draw (-1.5,0) node {$\frac{s}{\rho}$};
\draw (4.5,0) node {$\frac{s}{s-\rho}$};

\draw (-0.3,-0.3) node {$\mathcal T_+$};
\draw (3.3,-0.3) node {$\mathcal T_-$};

\node[vertex] (A1) at (0,1.5)  {};
\node (A2) at (0,2.4)  {$\vdots$};
\node[vertex] (A3) at (0,3) {};
\node[vertex] (A4) at (0,4.5)  {};
\node[vertex] (A5) at (0,6)  {};
\node (A6) at (0,6.7)  {$\vdots$};

\draw (-1.5,1.5) node  {$\frac{s}{s+\rho}$};
\draw  (-1.5,2.4)  node {$\vdots$};
\draw (-1.5,3)  node {$\frac{s}{(\kappa-1) s+\rho}$};
\draw (-1.5,4.5)  node {$\frac{s}{\kappa s+\rho}$};
\draw (-0.5,4.5)  node {$\mathcal T$};

\node[vertex] (B1) at (3,1.5)  {};
\node (B2) at (3,2.4)  {$\vdots$};
\node[vertex] (B3) at (3,3) {};
\node[vertex] (B4) at (3,4.5)  {};
\node[vertex] (B5) at (3,6)  {};
\node (B6) at (3,6.7)  {$\vdots$};

\draw (4.5,1.5) node  {$\frac{s}{s-\rho}$};
\draw  (4.5,2.4)  node {$\vdots$};
\draw (4.5,3)  node {$\frac{s}{(\kappa-1) s-\rho}$};
\draw (4.5,4.5)  node {$\frac{s}{\kappa s-\rho}$};

\draw[red, <-] (A0) to[bend left] node[midway, below]{$F$} (B0);
\draw[blue,->] (A0) to[bend right] node[midway, below]{$F$} (B0);

\draw[red,->] (A0) to[bend right] node[midway, right] {$\widetilde F^+$} (A1);
\draw[blue,->] (A1) to[bend right] node[midway, left] {$\widetilde F^-$} (A0);

\draw[red, ->] (A3) to[bend right] node[midway, right] {$\widetilde F^+$} (A4);
\draw[blue,->] (A4) to[bend right] node[midway, left] {$\widetilde F^-$} (A3);

\draw[red, ->] (A4) to[bend right] node[midway, right] {$\widetilde F^+$} (A5);
\draw[blue,->] (A5) to[bend right] node[midway, left] {$\widetilde F^-$} (A4);

\draw[blue, ->] (B0) to[bend right] node[midway, right] {$\widetilde F^+$} (B1);
\draw[red,->] (B1) to[bend right] node[midway, left] {$\widetilde F^-$} (B0);

\draw[blue,->] (B3) to[bend right] node[midway, right] {$\widetilde F^+$} (B4);
\draw[red, ->] (B4) to[bend right] node[midway, left] {$\widetilde F^-$} (B3);

\draw[blue,->] (B4) to[bend right] node[midway, right] {$\widetilde F^+$} (B5);
\draw[red, ->] (B5) to[bend right] node[midway, left] {$\widetilde F^-$} (B4);

\draw (-1.5,-1) node {$r'\equiv \rho \mod s$};
\draw (4.5,-1) node {$r'\equiv -\rho \mod s$};
\draw[dashed] (-3, 0.38) -- (7,0.38);
\draw (6, 0) node {$k'>1$};
\draw (6, 0.7) node {$k'<1$};
\end{tikzpicture}
\caption{Structure of the orbit $\mathcal O(\mathcal T)$, for $s>1$. The vertices correspond to the elements of the orbit, and the arrows correspond to elementary type I AD-$A$ operations. The blue arrows correspond to the steps where the Young diagram coming from $[Y^0]$ decreases and the one coming from $[Y^\infty]$ increases, and the red ones to the opposite steps.  On the bottom row, the Fourier transform exchanges the Young diagrams at 0 and $\infty$.  The slopes are all of the form $k'=\frac{s}{r'}$, with $r'\equiv \pm \rho \mod s$. The oriented arrows correspond to applying an elementary AD-operation. The left column corresponds to the case $r'\equiv \rho \mod s$, while the right one corresponds to the case $r'\equiv -\rho \mod s$. The bottom row corresponds to the only two slopes satisfying $k'>1$.}
\label{fig:structure_orbit}
\end{figure}
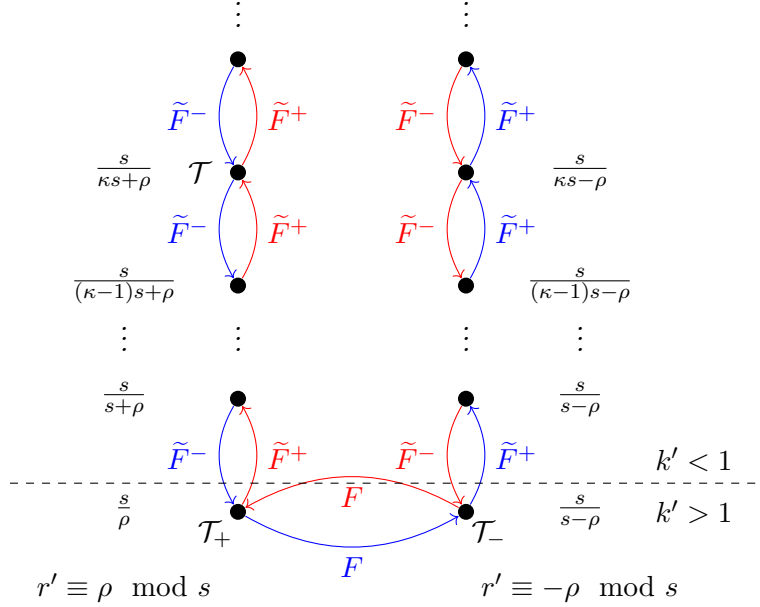

In particular, the set of all slopes of elements in the orbit $\mathcal O(\mathcal T)$ only depends on $k$, and is given by
\begin{equation}
\label{eq:form_slopes_orbit}
\mathcal O(k):=\left\lbrace \frac{s}{ls\pm \rho} \;\middle \vert\; l\in \mathbb Z_{\geq 0}, ls\pm \rho>0 \right\rbrace.
\end{equation}

Furthermore, we have the following: 
\begin{itemize}
\item If $s>2$, for any $k'\in \mathcal O(k)$, there exists a unique element $\mathcal T'\in \mathcal O(T)$ with slope $k'$.
\item If $s=2$, then:
\begin{itemize}
\item either (the generic case), for each $k'\in \mathcal O(k)$,  there exist exactly two elements $\mathcal T'\in \mathcal O(T)$ with slope $k'$;
\item or (the `symmetric' case) for each $k'\in \mathcal O(k)$
or there exists exactly one element $\mathcal T'\in \mathcal O(T)$ with slope $k'$.
\end{itemize}
\end{itemize}
\end{theorem}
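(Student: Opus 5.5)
The plan is to reduce everything to the slope dynamics, using Proposition \ref{prop:transformation_parameter_elementary_operation_type_I}. That proposition shows that each allowed elementary operation keeps $m$ and $s$ fixed, acts on $r$ in a prescribed way, and acts on the pair of Young diagrams by the column-transfer rule. We work with $r$ for fixed $s$. The operation $F$ (allowed when $r<s$) sends $r\mapsto s-r$. The operation $\widetilde F^+$ (always allowed) sends $r\mapsto r+s$. The operation $\widetilde F^-$ (allowed when $r>s$) sends $r\mapsto r-s$. Since $s>1$ and $\gcd(r,s)=1$, the case $r=s$ never occurs, so exactly one of $F$ and $\widetilde F^-$ is allowed at each step.

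\textbf{Step 1: the set of slopes.} The residue of $r$ modulo $s$ is preserved by $\widetilde F^\pm$ and negated by $F$. Hence every $r'$ reached satisfies $r'\equiv\pm\rho \pmod s$ and $r'>0$. Conversely, from $r=\kappa s+\rho$ we apply $\widetilde F^-$ $\kappa$ times to reach $r'=\rho$, which is the vertex $\mathcal T_+$ with slope $s/\rho>1$. Applying $F$ then gives $r'=s-\rho$, the vertex $\mathcal T_-$. Iterating $\widetilde F^+$ from these two vertices produces all $ls+\rho$ and $ls+(s-\rho)$. This shows that the set of slopes is exactly $\mathcal O(k)$ as in \eqref{eq:form_slopes_orbit}, and it gives the arrows of Fig.~\ref{fig:structure_orbit}. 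The colour labels come from checking, in each case of the Proposition, which diagram loses its first column; on the bottom row, $F$ swaps the roles of $0$ and $\infty$.

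\textbf{Step 2: uniqueness over each slope.} Next we show that the parameter is a function of the vertex, i.e.\ that it does not depend on the path taken. The key check is that on the bottom row $F\circ F$ is the identity on parameters, and that $\widetilde F^-\widetilde F^+$ and $\widetilde F^+\widetilde F^-$ (where allowed) are also the identity. For $\widetilde F^-\widetilde F^+$: the operation $\widetilde F^+$ sends $(Y^0,Y^\infty)$ to $([ms-h_1(Y^\infty),Y^0],\tau(Y^\infty))$. Then $\widetilde F^-$ removes the first column of the new $Y^0$, returning $Y^0$, and prepends a column of height $ms-(ms-h_1(Y^\infty))=h_1(Y^\infty)$ to $\tau(Y^\infty)$, returning $Y^\infty$.

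One subtlety must be handled here. The prepended column $ms-h_1$ has to be the \emph{first} column, i.e.\ at least as tall as the remaining ones. This follows from effectiveness, or equivalently from the rank bookkeeping $N=mr+\mathrm{rk}(Y^\infty)$ in the proof of the Proposition, and it ensures that $\tau$ genuinely inverts the prepending. The same computation applies to $F\circ F$ with $r<s$.

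Since every operation is invertible by another allowed one, the orbit graph is the undirected graph on $r'$ values described above. That graph is a path: the two columns joined by the $F$-edge at the bottom. So every vertex carries a single parameter.

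\textbf{Step 3: the coincidence $s=2$.} Two distinct vertices can have the same slope only if $ls+\rho=l's+(s-\rho)$, which forces $2\rho\equiv0\pmod s$. For $s>2$ this is impossible, since $1\le\rho\le s-1$ and $\gcd(\rho,s)=1$. For $s=2$ we have $\rho=1$, and the two columns have identical slopes $2/(2l+1)$.

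Over each such slope there are then two parameters, which may or may not coincide. They coincide at one level exactly when $\mathcal T_+=\mathcal T_-$, i.e.\ when $F$ fixes $\mathcal T_+$, and propagating along the columns by $\widetilde F^+$ gives either coincidence at every level or at none. This is the symmetric/generic dichotomy. Here I read "elements of the orbit" as parameters, so that two vertices with equal parameters count once.

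I expect the main obstacle to be the Young-diagram well-definedness in Step 2: justifying that the prepended column height $ms-h_1$ dominates the other columns, so that the truncation really inverts it.
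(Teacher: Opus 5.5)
Your proposal is correct and follows the paper's route: you read off the slope dynamics from Prop.~\ref{prop:transformation_parameter_elementary_operation_type_I}, locate $\mathcal T_+=(\widetilde F^-)^\kappa\cdot\mathcal T$ and $\mathcal T_-=F\cdot\mathcal T_+$, generate the two columns with $\widetilde F^+$, and settle $s=2$ by whether $\mathcal T_+=\mathcal T_-$; your check that $FF$ and $\widetilde F^\mp\widetilde F^\pm$ act as the identity on parameters also supplies the path-independence that the paper only asserts. One correction: the inequality $ms-h_1(Y^\infty)\geq h_1(Y^0)$ that this check needs follows from effectiveness (the new first column equals $\dim\ker(A'-1)$ for the formal monodromy $A'$ of an actual connection), but not from the rank identity $N=mr+\mathrm{rk}(Y^\infty)$, which alone does not imply it (for example $m=1$, $k=3/2$, $Y^0=[3]$, $Y^\infty=[1]$).
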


\begin{proof}
That the situation is as represented on Fig. \ref{fig:structure_orbit} follows directly by induction from the formulas for the transformation of the slope of the wild Stokes circles in Prop. \ref{prop:transformation_parameter_elementary_operation_type_I} and Prop. \ref{prop:transf_parameter_elementary_transf_twist}, namely: 
\begin{itemize}
\item $F$ is allowed when $\frac{s}{r}>1$ and induces on the slope the transformation $\frac{s}{r}\mapsto \frac{s}{s-r}>1$.
\item $\widetilde{F}^{+}$ is always allowed and induces $\frac{s}{r}\mapsto \frac{s}{s+r}<1$.
\item $\widetilde{F}^{-}$ is allowed when $\frac{s}{r}<1$ and induces then $\frac{s}{r}\mapsto \frac{s}{r-s}$.
\end{itemize}
 
Now, if $s>2$, we have $\rho\neq s-\rho$. In turn, there is a unique element $\mathcal T'_+\in \mathcal O(\mathcal T)$ of slope $\frac{s}{\rho}$, given by  $\mathcal T'_+:=(\widetilde F^{-})^\kappa \cdot \mathcal T$ and a unique element $\mathcal T'_-\in \mathcal O(\mathcal T)$ of slope $\frac{s}{\rho}$, given  by $\mathcal T'_-:=F\cdot \mathcal T'_+$.

Moreover, for any $k\geq 0$, there is a unique element in $\mathcal O(\mathcal T)$ of slope $\frac{s}{ks+\rho}$, given by $(\widetilde F^+)^k\cdot \mathcal T'_+$, and a unique element in $\mathcal O(\mathcal T)$ of slope $\frac{s}{ks-\rho}$, given by $(\widetilde F^+)^k\cdot \mathcal T'_-$.

If $s=2$, then necessarily $\rho=1=s-\rho$, so there are two elements $\mathcal T'_+$ and $\mathcal T'_-$ of slope $2$ exchanged by the Fourier transform, given by $\mathcal T'_+:=(\widetilde F^{-})^\kappa \cdot \mathcal T$ and $\mathcal T'_-:=F\cdot \mathcal T'_1$. 

Now, if $\mathcal T'_+$ and $\mathcal T'_-$ are different, for any $k\geq 0$, there are two elements in $\mathcal O(\mathcal T)$ of slope $\frac{2}{2k+1}$, given by $\widetilde F^+(\mathcal T'_+)$ and $\widetilde F^+(\mathcal T'_-)$. Otherwise, if $\mathcal T'_+=\mathcal T'_-$, then for any $k\geq 0$, there is a unique element in $\mathcal O(\mathcal T)$ of slope $\frac{2}{2k+1}$, given by $\widetilde F^+(\mathcal T'_+)$.
\end{proof}

The case $s=1$ is somewhat different:

\begin{lemma}
Let $\mathcal T=(m, k, \mathcal C_0, \mathcal C_\infty)$ be a generalized type I AD-$A$ parameter, such that $k=\frac{1}{r}$, with $r\geq 1$. 
Then for any $l\geq 1$, there is a unique element $\mathcal T'\in \mathcal O(\mathcal T)$ with slope $\frac{1}{l}$.
\end{lemma}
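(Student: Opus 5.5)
The plan is to follow the slope dynamics given by Prop. \ref{prop:transformation_parameter_elementary_operation_type_I} in the case $s=1$, exactly as in the proof of the preceding theorem, and to observe that the picture degenerates into a single chain. For $k=\frac{1}{r}$ the slope is $\leq 1$. We never have $k>1$, and $F$ is never allowed: the slopes in the orbit will all be of the form $\frac1{l}$, and the transform $\frac{s}{s-r}$ would only make sense for $r<1$. So only $\widetilde F^+$ and $\widetilde F^-$ can occur. By the proposition, $\widetilde F^+$ sends slope $\frac1l$ to $\frac{1}{l+1}$. The operation $\widetilde F^-$ is allowed when $\frac1l<1$, i.e.\ $l\geq 2$, and sends it to $\frac{1}{l-1}$. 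At slope $1$ (i.e.\ $l=1$, $k=1$) neither $F$ nor $\widetilde F^-$ is allowed, so this is the bottom of the chain.

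Existence then follows at once. Starting from $\mathcal T$ of slope $\frac1r$, the element $(\widetilde F^+)^{l-r}\cdot\mathcal T$ has slope $\frac1l$ for $l\geq r$, and $(\widetilde F^-)^{r-l}\cdot \mathcal T$ has slope $\frac1l$ for $1\leq l\leq r$. Each step is allowed, since we apply $\widetilde F^-$ only at slopes $\frac1{l'}$ with $l'\geq 2$.

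For uniqueness, I would first check that $\widetilde F^+$ and $\widetilde F^-$ are mutually inverse on parameters wherever both are allowed. On the level of operations, $\widetilde F^-\widetilde F^+=MFFM$, and $F^2$ is $[-1]^*$, which fixes $0$ and $\infty$; similarly $\widetilde F^+\widetilde F^-=FMMF=F^2$. On the level of parameters this can be verified directly from the formulas of Prop. \ref{prop:transformation_parameter_elementary_operation_type_I}. For $\widetilde F^+$ followed by $\widetilde F^-$: we start from $([Y^0],[Y^\infty])$, pass to $([ms-h_1(Y^\infty),Y^0],[\tau(Y^\infty)])$, and the first column of the new $Y^0$ has height $ms-h_1(Y^\infty)$, since this is at least $h_1(Y^0)$ by effectivity/rank count. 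Applying $\widetilde F^-$ removes that column and adds back a column of height $ms-(ms-h_1(Y^\infty))=h_1(Y^\infty)$ to $\tau(Y^\infty)$, recovering $Y^\infty$. The composite in the other order is symmetric. Hence the orbit graph is a path $\cdots\leftrightarrow \mathcal T_{l}\leftrightarrow\mathcal T_{l+1}\leftrightarrow\cdots$, where the slope determines the position. Any allowed word in $\widetilde F^\pm$ applied to $\mathcal T$ reduces, by cancelling adjacent inverse pairs, to a pure power of $\widetilde F^+$ or $\widetilde F^-$, and that power is determined by the target slope. This gives uniqueness.

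The main obstacle is this cancellation: justifying that $ms-h_1(Y^\infty)\geq h_1(Y^0)$ (or otherwise showing that the new column is indeed the first), so that the Young-diagram bookkeeping inverts correctly. I would deduce this from the rank constraint $N=mr+\mathrm{rk}(Y^\infty)$ together with effectivity, or else fall back on the operator identity $F^2=[-1]^*$, which acts trivially on these formal data.
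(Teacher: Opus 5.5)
Your proof is correct. Its existence part coincides with the paper's, $(\widetilde F^+)^{l-r}\cdot\mathcal T$ for $l\geq r$ and $(\widetilde F^-)^{r-l}\cdot\mathcal T$ for $1\leq l\leq r$, but the other two ingredients differ.

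\emph{The bottom of the chain.} You take this from the definition of ``allowed.'' The paper instead explains, via the stationary phase formula, why slope $1$ is a genuine endpoint: the wild circles are then $\cir{a_iz}_\infty$, and $F\cdot\cir{a_iz}_\infty=\cir{0}_{a_i}$, so $F$ would produce $m+2$ regular singularities and leave the AD-$A$ class.

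\emph{Uniqueness.} The paper does not argue this beyond the chain picture of Fig. \ref{fig:structure_orbit_s=1}. You prove it: $\widetilde F^-\widetilde F^+$ and $\widetilde F^+\widetilde F^-$ act as the identity on parameters, so adjacent pairs in any allowed word cancel. The shortened word stays allowed, because the intermediate parameters are unchanged, and you are left with a pure power fixed by the slope. On this point your argument is the more complete of the two.

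\emph{The obstacle you flag.} The rank relation $N=mr+\mathrm{rk}(Y^\infty)$ alone does not give $ms-h_1(Y^\infty)\geq h_1(Y^0)$, but you do not need that inequality separately. The notation $[h,Y]$ is the extension $\varepsilon_{h,1}$, whose eigenvalue-$1$ part has exactly $h$ Jordan blocks by definition. Hence its first column has height $h$ and $\tau\circ\varepsilon_{h,1}=\mathrm{id}$. That the extension exists at all (equivalently, the inequality) follows from effectivity, since it is the formal monodromy of an actual transformed connection. Your fallback $F^2=[-1]^*$ also works and shows slightly more: on curves with boundary data the composites act by $z\mapsto -z$, which fixes the parameter.
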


The situation is represented on Fig. \ref{fig:structure_orbit_s=1}.

\begin{figure}[h]
\centering
\begin{tikzpicture}
\tikzstyle{vertex}=[circle,fill=black,minimum size=6pt,inner sep=0pt]

\draw (-1.5,0) node {$1$};

\draw (-0.3,-0.3) node {$\mathcal T_1$};

\node[vertex] (A0) at (0,0)  {};
\node[vertex] (A1) at (0,1.5)  {};
\node (A2) at (0,2.4)  {$\vdots$};
\node[vertex] (A3) at (0,3) {};
\node[vertex] (A4) at (0,4.5)  {};
\node[vertex] (A5) at (0,6)  {};
\node (A6) at (0,6.7)  {$\vdots$};

\draw (-1.5,1.5) node  {$\frac{1}{2}$};
\draw  (-1.5,2.4)  node {$\vdots$};
\draw (-1.5,3)  node {$\frac{1}{r-1}$};
\draw (-1.5,4.5)  node {$\frac{1}{r}$};
\draw (-0.5,4.5)  node {$\mathcal T$};

\draw[red,->] (A0) to[bend right] node[midway, right] {$\widetilde F^+$} (A1);
\draw[blue,->] (A1) to[bend right] node[midway, left] {$\widetilde F^-$} (A0);

\draw[red, ->] (A3) to[bend right] node[midway, right] {$\widetilde F^+$} (A4);
\draw[blue,->] (A4) to[bend right] node[midway, left] {$\widetilde F^-$} (A3);

\draw[red, ->] (A4) to[bend right] node[midway, right] {$\widetilde F^+$} (A5);
\draw[blue,->] (A5) to[bend right] node[midway, left] {$\widetilde F^-$} (A4);

\end{tikzpicture}
\caption{Structure of the orbit $\mathcal O(\mathcal T)$, for $s=1$. At $\mathcal T_1$, applying $F$ yields an irregular curve with boundary data with $m+2$ regular singularities.}
\label{fig:structure_orbit_s=1}
\end{figure}
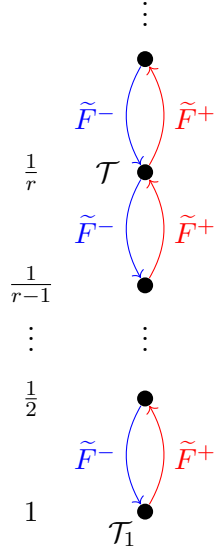

\begin{proof}
Compared to Fig. \ref{fig:structure_orbit}, in the case $s=1$ one still has the $k'>1$ part of the left side of the figure: for any $l\geq 1$, we obtain an element $\mathcal T'\in \mathcal O(\mathcal T)$ with slope $\frac{1}{l}$, by taking $\mathcal T'=(\widetilde F^+)^{l-r}$ if $l\geq r$, and $\mathcal T':=(\widetilde F^-)^{r-l}$, for $1\leq l\leq r$.

The difference  with the $s>1$ case is that now, at the parameter $\mathcal T_1=(\widetilde F^-)^{r-1}$, we are at slope $1$, so it is not allowed to take the blue arrow any more. Indeed, for an irregular curve $\bm\Sigma$ with boundary data with parameter $\mathcal T_1$, the wild Stokes circles are of slope 1, i.e. of the form $\cir{q_i}_\infty=\cir{a_i z}_\infty$, with $a_i\in \mathbb C^*$ for $i\in \{1, \dots, m\}$. In turn, the Fourier transform $F\cdot \bm\Sigma'$ is no longer of generalized AD-$A$ type. Indeed by the stationary phase formula we have $F\cdot \cir{a_i z}_\infty=\cir{0}_{a_i}$, so $F\cdot \bm\Sigma'$ has $m+2$ regular singularities, at the points $0,\infty, a_1, \dots, a_m$.  
\end{proof}

\subsection{Dualities between standard parameters}

We can now state our first main results: the dualities of \cite{beem2025simplifying} in the type I case can be all expressed as compositions of elementary type I AD-$A$ operations $F, \widetilde F^+$, and $\widetilde F^+$, i.e. they correspond to some particular paths in Fig. \ref{fig:structure_orbit}.

\begin{proposition}
\label{prop:duality_I}
Let $\bm\Sigma$ be an irregular curve with boundary data of standard type I AD-$A$ type, with reduced parameter $\mathscr T=(m,k,[Y])$. 
Then for any integer $l\geq 0$, $(\widetilde{F}^+)^l\cdot\Sigma
$ is also of standard type I AD-$A$ type, with reduced parameter
\begin{equation}
\label{eq:main_duality_add_column}
(\widetilde{F}^+)^l\cdot\mathscr T:=\left(m, \frac{s}{r+ls}, [(ms)^l,Y]\right)
\end{equation}
\end{proposition}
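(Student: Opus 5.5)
The plan is to argue by induction on $l$, using case (2) of Proposition \ref{prop:transformation_parameter_elementary_operation_type_I} (the $\widetilde F^+$ case, which is always allowed) at each step. The case $l=0$ is trivial. A standard type I AD-$A$ parameter is a generalized type I parameter with $\mathcal C_\infty$ of rank $0$, i.e.\ $[Y^\infty]$ is the empty Young diagram, and $[Y^0]=[Y]$. So we set $\mathcal T=(m,k,[Y],[\,\,])$ and track the pair of Young diagrams through repeated applications of $\widetilde F^+$.

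For the inductive step, suppose $(\widetilde F^+)^l\cdot\bm\Sigma$ has generalized parameter $\left(m,\frac{s}{r+ls},[(ms)^l,Y],[\,\,]\right)$. The slope is $\frac{s}{r+ls}$, with $s$ and $r+ls$ still coprime. Applying case (2) of Proposition \ref{prop:transformation_parameter_elementary_operation_type_I}, we need three quantities:
\begin{itemize}
\item the new slope is $\frac{s}{s+(r+ls)}=\frac{s}{r+(l+1)s}$;
\item since $[Y^\infty]$ is empty, $h_1(Y^\infty)=0$, so the new $[Y^0]$ is $[ms-0,(ms)^l,Y]=[(ms)^{l+1},Y]$;
\item the new $[Y^\infty]$ is $\tau(\emptyset)=\emptyset$, by the convention of Definition \ref{def:truncation_Young_diagram}.
\end{itemize}
The new $\mathcal C_\infty$ therefore still has rank $0$ and $\mathcal C_0$ is unipotent. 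Hence $(\widetilde F^+)^{l+1}\cdot\bm\Sigma$ is of standard type I AD-$A$ type with the claimed reduced parameter, which completes the induction.

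The one point needing care is that $[(ms)^{l+1},Y]$ is a legitimate Young diagram, i.e.\ that the prepended column height $ms$ is at least the first column height of what follows. In the notation $[h,Y]$ used in Lemma \ref{lemma:reconstruction_conjugacy_class_from_truncation}, $[Y]$ is listed by columns. The rank count in the proof of Proposition \ref{prop:transformation_parameter_elementary_operation_type_I} automatically gives a column of height $h=N'-\mathrm{rk}(Y^0)$, so the statement is consistent as long as $h\ge h_1(Y)$.

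I expect this to be the main obstacle, though it is a mild one. I would check $h_1(Y)\le ms$ using effectivity. For a unipotent class at $0$ in rank $N=mr$, the dimension of the fixed space $h_1(Y)$ is bounded by the number of Stokes directions, via the irreducibility/index constraint. More simply, the initial connection is the Fourier transform data: the truncation at $0$ feeds into the tame part at $\infty$ of rank $N'-m(r+s)\ge 0$. This gives $\mathrm{rk}(Y)-h_1(Y)\le N'-ms$, equivalently $h=N'-\mathrm{rk}(Y)\ge$ the required height. If this inequality cannot be derived cleanly, I would fall back on simply noting that the partition $[(ms)^l,Y]$ is read as the multiset of column heights, which is what Proposition \ref{prop:transformation_parameter_elementary_operation_type_I} produces regardless of ordering.
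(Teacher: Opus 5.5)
Your induction via part (2) of Proposition~\ref{prop:transformation_parameter_elementary_operation_type_I} is exactly the paper's proof: $h_1(\varnothing)=0$, and $\tau(\varnothing)=\varnothing$ keeps $\mathcal C_\infty$ of rank $0$. Your last paragraph is unnecessary, since validity of $[(ms)^l,Y]$ is already part of that proposition's conclusion and ultimately rests on effectivity: the Euler characteristic of the middle extension to $\mathbb A^1$ of the irreducible connection is $h_1(Y)-ms\le 0$, which gives $ms\ge h_1(Y)$, and for $l\ge 1$ the first column is already $ms$. However, your own rank count there is vacuous, because $N'-ms=\mathrm{rk}(Y^0)$, so it only gives $h_1(Y)\ge 0$; and the multiset fallback would not work, because truncation always removes the tallest column.
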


\begin{proof}
This follows directly by induction from applying the second part of Prop. \ref{prop:transformation_parameter_elementary_operation_type_I}. 
\end{proof}

\begin{proposition}
\label{prop:duality_II_complement_diagram}
Let $\bm\Sigma$ be an irregular curve with boundary data of type I AD-$A$ type, with reduced parameter $\mathscr T=(m,k,[Y])$.
Assume that $k=\frac{s}{r}>1$, with $s, r$ coprime, $s>1$; in particular the rank is $N=r$. Let $L$ be the number of columns of $Y$, and let us write $r=\kappa s +\rho$, with $1\leq \rho\leq s-1$. Assume that $Ls>r$, i.e. $L>\kappa$.

Then the irregular curve with boundary data  $(\widetilde{F}^+)^{(L-1)}F\cdot\bm\Sigma
$ is also of type I AD-$A$ type, with reduced parameter
\begin{equation}
\label{eq:duality_complement_diagram}
(\widetilde{F}^+)^{L-1-\kappa} F (\widetilde{F}^-)^{\kappa} \cdot\mathcal T=\left(m, \frac{s}{Ls-r}, [Y^c]\right)
\end{equation}
where the Young diagram $[Y^c]$ is defined from $[Y]$ as follows: if $L$ denotes the number of columns of $[Y]$, and $h_1\geq \dots\geq h_L$ are the heights of the columns of $[Y]$ from left to right, then $[Y^c]$ is the Young diagram with $L$ columns, whose column heights from left to right are given by $(ms-h_L)\geq \dots \geq (ms-h_1)$ (equivalently if $[Y]=[N^{l_N}, \dots, 1^{l_1}]$, then
$[Y^c]=[(ms-1)^{l_1}, \dots, (ms-N)^{l_N}]$.)
\end{proposition}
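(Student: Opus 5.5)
The plan is to iterate Prop.~\ref{prop:transformation_parameter_elementary_operation_type_I} along the path through the orbit and track the pair of Young diagrams $([Y^0],[Y^\infty])$ at each step. We start from the generalized type I parameter $\mathcal T=(m,\frac{s}{r},[Y],[\;])$, with empty diagram at infinity. Since $k=\frac sr>1$ means $r<s$, we have $\kappa=0$ and $\rho=r$. So the composition $(\widetilde F^+)^{L-1-\kappa}F(\widetilde F^-)^\kappa$ in \eqref{eq:duality_complement_diagram} reduces to $(\widetilde F^+)^{L-1}F$, and the two expressions in the statement agree. It then suffices to compute $(\widetilde F^+)^{L-1}F\cdot\mathcal T$. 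Throughout, the rank is $N=\mathrm{rk}(Y)=mr$, since $\mathcal C_\infty$ has rank $0$.

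First step. Since $k>1$, $F$ is allowed, and part (1) of the proposition gives
\[
F\cdot\mathcal T=\left(m,\tfrac{s}{s-r},[ms-h_1],[\tau(Y)]\right),
\]
where $h_1\le mr<ms$, so the new column has positive height.

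Inductive step. I claim that for $1\le j\le L$,
\[
(\widetilde F^+)^{j-1}F\cdot\mathcal T=\left(m,\tfrac{s}{js-r},[ms-h_j,\dots,ms-h_1],[\tau^j(Y)]\right).
\]
Here $\tau^j(Y)$ is $[Y]$ with its first $j$ columns deleted, so its first column has height $h_{j+1}$. Applying $\widetilde F^+$, which is always allowed, part (2) of the proposition removes the first column (height $h_{j+1}$) of $[\tau^j(Y)]$. It also prepends a column of height $ms-h_{j+1}$ to the diagram at $0$, and changes the slope to $\frac{s}{s+(js-r)}=\frac{s}{(j+1)s-r}$.

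The only point to check is that the notation $[h,Y']$ really produces a Young diagram, i.e.\ that the new column is at least as tall as the old first column. This holds because $h_{j+1}\le h_j$ gives $ms-h_{j+1}\ge ms-h_j$. Positivity of the heights holds because $h_i\le mr<ms$.

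At $j=L$ we have $\tau^L(Y)=\emptyset$, so $\mathcal C_\infty$ has rank $0$ and the parameter is again of standard type I. The diagram at zero has columns $ms-h_L\ge\dots\ge ms-h_1$, which is $[Y^c]$, and the slope is $\frac{s}{Ls-r}$; this is positive since $Ls>r$ (here $L\ge1>\kappa=0$). As a consistency check on the rank, $\mathrm{rk}(Y^c)=Lms-mr=m(Ls-r)$, which is $m$ times the ramification order, as required for type I. The equivalent description $[Y^c]=[(ms-1)^{l_1},\dots,(ms-N)^{l_N}]$ follows by regrouping columns by height.

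The main obstacle is bookkeeping. One must keep the column ordering straight in $[h,Y']$ and confirm that each intermediate step is allowed and lands in generalized type I AD-$A$ type; Prop.~\ref{prop:transformation_parameter_elementary_operation_type_I} guarantees the latter as soon as the operation is allowed. One should also observe that $\gcd(s,Ls-r)=1$ persists, so $s$ remains the irregularity.
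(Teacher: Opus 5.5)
Your proof is correct and follows the paper's route: the paper obtains this result, through the intermediate-step Lemma~\ref{thm:intermediate_duality_type_I}, by the same induction on Prop.~\ref{prop:transformation_parameter_elementary_operation_type_I}, transferring one column of $[Y]$ at a time into the complement diagram. The only difference is that the paper's lemma is written for general $\kappa$, which covers the evidently intended case $k<1$ by first applying $\widetilde F^-$ $\kappa$ times via part (3); the literal hypothesis $k>1$ forces $\kappa=0$, as you correctly observe.
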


Notice that the transformation $(\widetilde{F}^+)^{L-1-\kappa} F (\widetilde{F}^-)^\kappa$ amounts to taking $L$ steps given by the blue arrows in Fig. \ref{fig:structure_orbit}, starting from $\mathcal T$.

We can actually give a closed formula for the parameters appearing at the intermediate steps:

\begin{lemma}
\label{thm:intermediate_duality_type_I}
Keeping the notations of Thm. \ref{prop:duality_II_complement_diagram}, we have:
\begin{itemize}
\item For $0\leq l\leq \kappa$,
\[
F (\widetilde{F}^-)^{l} \cdot\mathcal T=\left(m, \frac{s}{(\kappa-l)s+\rho}, [Y_l],[\widetilde Y_l]\right).
\] 
\item Next, after applying $F$:
\[
F (\widetilde{F}^-)^{\kappa} \cdot\mathcal T=\left(m, \frac{s}{s-\rho}, [\widetilde Y_{\kappa+1}], [Y_{\kappa+1}]\right).
\] 
\item Finally, for $\kappa+2\leq l\leq L$,
\[
(\widetilde{F}^+)^{l-1-\kappa} F (\widetilde{F}^-)^{\kappa} \cdot\mathcal T=\left(m, \frac{s}{ls-r}, [\widetilde Y_l], [Y_l]\right).
\] 
\end{itemize}
where, for $l\in \{0, \dots, L\}$, $[Y_l]:=[\tau^l(Y)]$ is the Young diagram with $L-l$ columns, whose columns heights from left to right are $h_{1+l}\geq \dots \geq h_L$, and $[\widetilde Y_l]$ is the Young diagram with $l$ columns, whose column heights from left to right are $(ms-h_{l})\geq \dots \geq (ms-h_{1})$. In particular $[Y_0]=[Y]$, $[\widetilde Y_0]=[\varnothing]$, $[Y_L]=[\varnothing]$, $[\widetilde Y_L]=[Y^c]$. 
\end{lemma}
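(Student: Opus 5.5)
The plan is a direct induction along the path of blue arrows in Fig.~\ref{fig:structure_orbit}. At each step I apply the corresponding case of Prop.~\ref{prop:transformation_parameter_elementary_operation_type_I} and check that the output has the claimed form. The starting point is the non-reduced type I parameter $\mathcal T=(m,\frac{s}{r},[Y],[\varnothing])=(m,\frac{s}{\kappa s+\rho},[Y_0],[\widetilde Y_0])$. I read the first item of the statement as describing $(\widetilde F^-)^l\cdot\mathcal T$, i.e.\ the parameters before the Fourier transform is applied; this is what the slopes $\frac{s}{(\kappa-l)s+\rho}$ indicate.

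\emph{Step 1: the descending part.} Induct on $l\in\{0,\dots,\kappa\}$. The case $l=0$ is the starting point. For $l<\kappa$ the slope is $\frac{s}{(\kappa-l)s+\rho}<1$, so $\widetilde F^-$ is allowed, and case (3) of Prop.~\ref{prop:transformation_parameter_elementary_operation_type_I} applies.
\begin{itemize}
\item The slope becomes $\frac{s}{(\kappa-l-1)s+\rho}$.
\item The diagram at $0$ becomes $[\tau(Y_l)]=[Y_{l+1}]$.
\item The diagram at $\infty$ becomes $[ms-h_1(Y_l),\widetilde Y_l]=[ms-h_{l+1},\widetilde Y_l]$.
\end{itemize}
Since $h_{l+1}\le h_l$, the new column $ms-h_{l+1}$ is at least as tall as $ms-h_l$, which is the tallest column of $[\widetilde Y_l]$. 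Hence it sits correctly as the new first column, and the result is exactly $[\widetilde Y_{l+1}]$.

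\emph{Step 2: the Fourier step.} At $l=\kappa$ the slope is $\frac{s}{\rho}>1$, so $F$ is allowed. Case (1) of Prop.~\ref{prop:transformation_parameter_elementary_operation_type_I} applies.
\begin{itemize}
\item The slope becomes $\frac{s}{s-\rho}$.
\item The diagram at $0$ becomes $[ms-h_{\kappa+1},\widetilde Y_\kappa]=[\widetilde Y_{\kappa+1}]$, by the same column-ordering argument.
\item The diagram at $\infty$ becomes $[\tau(Y_\kappa)]=[Y_{\kappa+1}]$.
\end{itemize}
Note that $s-\rho=(\kappa+1)s-r$, so this agrees with the pattern of the third item at $l=\kappa+1$.

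\emph{Step 3: the ascending part.} Induct on $l$ from $\kappa+1$ to $L$. Now $[Y_l]$ sits at $\infty$ and $[\widetilde Y_l]$ at $0$, with slope $\frac{s}{ls-r}$. The operation $\widetilde F^+$ is always allowed, and case (2) of Prop.~\ref{prop:transformation_parameter_elementary_operation_type_I} applies.
\begin{itemize}
\item The slope becomes $\frac{s}{s+(ls-r)}=\frac{s}{(l+1)s-r}$.
\item The diagram at $0$ becomes $[ms-h_1(Y_l),\widetilde Y_l]=[ms-h_{l+1},\widetilde Y_l]=[\widetilde Y_{l+1}]$.
\item The diagram at $\infty$ becomes $[\tau(Y_l)]=[Y_{l+1}]$.
\end{itemize}
At $l=L$ we get $[Y_L]=[\varnothing]$, so the parameter at $\infty$ has rank $0$ and the parameter is of standard type I with diagram $[Y^c]$. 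This also yields Prop.~\ref{prop:duality_II_complement_diagram}.

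The only genuinely delicate points are bookkeeping.
\begin{enumerate}
\item Each application of the elementary step must remove the column $h_{l+1}$ and not a spurious column. This is why the process stops at $l=L$: beyond that, $h_1(\varnothing)=0$ would start adding columns of height $ms$.
\item The complementary heights $ms-h_i$ must be nonnegative and correctly ordered. Nonnegativity should be automatic, because in Prop.~\ref{prop:transformation_parameter_elementary_operation_type_I} the new column height is computed as a difference of ranks of an effective connection. For the ordering, I will verify $ms-h_{l+1}\ge ms-h_l$ explicitly at each step.
\item The hypothesis $L>\kappa$ is needed so that the Fourier step $F$ occurs within the first $L$ steps, and so that $Ls-r>0$.
\end{enumerate}
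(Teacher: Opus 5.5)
Your proposal is correct and follows essentially the paper's proof, which is exactly this induction along the blue arrows, applying cases (3), (1) and (2) of Prop.~\ref{prop:transformation_parameter_elementary_operation_type_I} in turn. You are also right to read the first item as describing $(\widetilde F^-)^l\cdot\mathcal T$, since the stray $F$ is a typo, as the slopes show; and your check that the new column $ms-h_{l+1}$ is at least as tall as the existing columns is precisely the bookkeeping the paper leaves implicit.
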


\begin{proof}
Once again, this follows directly by induction from Prop. \ref{prop:transformation_parameter_elementary_operation_type_I}.
\end{proof}

The process can be summarized as follows: the complete transformation, consisting in taking $L$ blue steps in Fig. \ref{fig:structure_orbit}, changes the Young diagram $[Y]$ to its complement $[Y^c]$. At each intermediate step, one further column of $[Y]$ is removed, and its complement is added to the second Young diagram, building $[Y^c]$ one column at a time. 

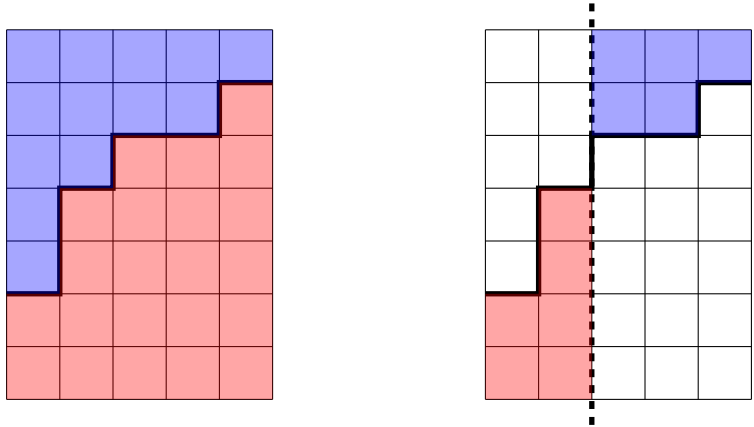
\begin{figure}[h]
\begin{tikzpicture}[scale=0.7]
\begin{scope}
\foreach \i in {0,...,5} \draw  (\i,0)--(\i, -7);
\foreach \j in {0,...,7} \draw  (0,-\j)--(5, -\j);
\draw[line width=0.7mm] (0,-5)--(1,-5)--(1,-3)--(2,-3)--(2,-2)--(4,-2)--(4,-1)--(5,-1);

\fill[blue, opacity=0.35] (0,0)--(0,-5)--(1,-5)--(1,-3)--(2,-3)--(2,-2)--(4,-2)--(4,-1)--(5,-1)--(5,0)--cycle;

\fill[red, opacity=0.35] (0,-7)--(0,-5)--(1,-5)--(1,-3)--(2,-3)--(2,-2)--(4,-2)--(4,-1)--(5,-1)--(5,-7)--cycle;
\end{scope}

\begin{scope}[xshift=9cm]
\foreach \i in {0,...,5} \draw  (\i,0)--(\i, -7);
\foreach \j in {0,...,7} \draw  (0,-\j)--(5, -\j);

\draw[line width=0.7mm] (0,-5)--(1,-5)--(1,-3)--(2,-3)--(2,-2)--(4,-2)--(4,-1)--(5,-1);

\draw[dashed, line width=0.7mm](2, 0.5)--(2,-7.5);

\fill[red, opacity=0.35] (0,-7)--(0,-5)--(1,-5)--(1,-3)--(2,-3)--(2,-7)--cycle;

\fill[blue, opacity=0.35] (2,0)--(2,-2)--(4,-2)--(4,-1)--(5,-1)--(5,0)--cycle;
\end{scope}
\end{tikzpicture}
\caption{The duality transforming the Young diagram $[Y]$ into its complement $[Y^c]$, and the corresponding intermediate steps, drawn for the example $[Y]=[5,3, 2^2, 1]$, $m=1$, $s=7$. The diagram $[Y]$ is represented in blue on the left part and the diagram $[Y^c]$ is represented in red, up to reading the columns from right to left. The right part of the figure shows an intermediate step: at each step one column of the blue diagram is removed, and its complement is added to the red diagram. The intermediate parameter thus features two nontrivial Young diagrams $[Y_l]$ and $[\widetilde Y_l]$.}
\label{fig:duality_complement_intermediate_step}
\end{figure}

\begin{remark}
Notice that the parameters appearing at the intermediate steps all feature two nontrivial Young diagrams, hence they are not of standard type I AD-$A$ type, i.e. they do not directly correspond to a type I Argyres--Douglas theory. On the other hand, all other elements in the orbit $\mathcal O(\mathcal T)$ feature a single nontrivial Young diagram, corresponding to the regular singularity at 0, i.e. they are of standard AD-$A$ type, and correspond to an Argyres--Douglas theory. 

Moreover, all parameters of standard type I in the orbit can be obtained from $\mathcal T$ by a composition of the dualities \eqref{eq:main_duality_add_column} and \eqref{eq:duality_complement_diagram}. 
\end{remark}

\section{General case}
\label{sec:dualities_general_case}

\subsection{Kummer twists}

We now discuss the general case, i.e. we do not assume that the conjugacy classes $\mathcal C_0$ and $\mathcal C_\infty$ are unipotent any more. 

To proceed as in the type I case and use the Fourier transform to change the Young diagrams, it will be necessary to do some shifts of the eigenvalues to reduce to the situation where 1 is an eigenvalue. To perform such shifts, we have to introduce another type of elementary operations on connections, the Kummer twists.

\begin{definition}
Let $\lambda\in \mathbb C$. The Kummer connection $\mathcal K_\lambda$ is the rank one algebraic connection
\[
\left(\mathcal O_{\mathbb A^1\smallsetminus \{0\}}, d-\frac{\lambda}{z}dz\right).
\]
\end{definition}

It has a regular singularity at $0$ and at $\infty$, and the conjugacy classes of its formal monodromies are $\mathcal C_{\cir{0}_0}=\mathcal C_{\cir{0}_\infty}=\{\alpha\}$, where $\alpha=e^{2i\pi\lambda}\in \mathbb C^*$ (with our convention of taking the reverse orientation for the formal monodromy at infinity), in particular up to isomorphism $\mathcal K_\lambda$ only depends on $\alpha$. This immediately implies the following:

\begin{lemma}
Let $(E,\nabla)$ be a connection on a Zariski open subset of $\mathbb P^1$, and $\bm\Sigma$ its irregular curve with boundary data. Then for $\lambda\in \mathbb C$, the irregular curve with boundary data $\bm\Sigma'$ of the tensor product
\[
(E,\nabla)\otimes \mathcal K_\lambda
\]
only depends on $\bm\Sigma$ and $\alpha=e^{2i\pi\lambda}$.
\end{lemma}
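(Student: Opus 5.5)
The plan is to work locally at each singularity, using the fact that tensoring by a connection is compatible with taking formal data: the irregular curve with boundary data of a tensor product $(E,\nabla)\otimes\mathcal K_\lambda$ is determined pointwise by the formal types of the two factors. Away from $0$ and $\infty$, $\mathcal K_\lambda$ is trivial, so the formalization of the tensor product at any $a\neq 0,\infty$ coincides with that of $(E,\nabla)$. Consequently $\Theta'_a=\Theta_a$ and $\mathcal C'_a=\mathcal C_a$ there. The whole statement then reduces to understanding the two points $0$ and $\infty$.

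At $a\in\{0,\infty\}$, I would use the Turrittin--Levelt decomposition. The formalization $\widehat{(E,\nabla)}_a$ splits as a direct sum over Stokes circles $\cir{q_i}_a$ of pieces of the form $\mathcal E^{q_i}\otimes R_i$, where $R_i$ is regular singular. The formalization of $\mathcal K_\lambda$ at $a$ is the rank one regular singular connection $d\mp\frac{\lambda}{z_a}dz_a$. Tensoring preserves the splitting and does not change any exponential factor $q_i$, since adding a logarithmic term only modifies the residue part. Hence $\Theta'_a=\Theta_a$. On each block, the formal monodromy gets multiplied by the scalar formal monodromy of $\mathcal K_\lambda$, which is $\alpha$ with the orientation conventions fixed above at both $0$ and $\infty$. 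One point needs care for a ramified circle $\cir{q}$ of ramification $r$: there, the block's monodromy is computed from $\exp(2\sqrt{-1}\pi r\Lambda)$ on the $r$-fold cover. Pulling back $\lambda\,dz/z$ to $w=z^{1/r}$ gives $r\lambda\,dw/w$, so the formal monodromy of that circle is multiplied by $e^{2i\pi r\lambda}=\alpha^r$.

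This gives the explicit answer $\mathcal C'_{\cir{q}}=\alpha^{\Ram(q)}\mathcal C_{\cir{q}}$ at $0$ and $\infty$, and $\mathcal C'_{\cir q}=\mathcal C_{\cir q}$ elsewhere. The right-hand side visibly depends only on $\bm\Sigma$ and $\alpha$, not on $\lambda$ or on $(E,\nabla)$.

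The main obstacle is the bookkeeping of conventions. I need to check that the reversed orientation at $\infty$ indeed gives $\alpha$ rather than $\alpha^{-1}$ there: the residue of $\lambda\,dz/z$ in $z_\infty=1/z$ is $-\lambda$, and reversing the orientation restores the sign. I also need to verify the ramified factor $\alpha^{\Ram(q)}$ against the normalization $\exp(2\sqrt{-1}\pi\Ram(q_j)\Lambda_j)$ of \S2. Neither affects the dependence-only-on-$(\bm\Sigma,\alpha)$ claim, but I will fix them explicitly for later use with Kummer twists.
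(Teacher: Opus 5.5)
Your proof is correct, but it takes a more explicit route than the paper. The paper treats the lemma as immediate from the remark preceding it. There, $\mathcal K_\lambda$ is a rank one connection on $\mathbb G_m$ with regular singularities and formal monodromy $\alpha$ at $0$ and $\infty$, hence determined up to isomorphism by $\alpha$ alone ($\mathcal K_\lambda\cong\mathcal K_{\lambda+n}$ via the gauge transformation $z^n$). So independence from $\lambda$ is obtained globally, and the fact that the formal data of a tensor product depend only on those of the factors is used tacitly. The explicit effect $\mathcal C_0\mapsto\alpha\,\mathcal C_0$, $\mathcal C_\infty\mapsto\alpha\,\mathcal C_\infty$ is only derived in the next lemma, by shifting residues of local models by $\lambda\,\mathrm{Id}$.

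You instead derive the full transformation rule directly from the Turrittin--Levelt decomposition, with the projection formula for ramified blocks. This buys several things:
\begin{itemize}
\item it actually justifies the locality step the paper leaves implicit;
\item it also covers wild circles at $0$ and $\infty$ (the factor $\alpha^{\Ram(q)}$, which the paper never needs since only $\mathcal C_0,\mathcal C_\infty$ enter an AD-$A$ parameter);
\item it makes $\lambda$-independence a by-product of the formula, and in effect proves this lemma and the next one together.
\end{itemize}

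Your sign check at $\infty$ is right, and cleaner than the paper's own computation, which writes $d-\frac{\lambda}{z}dz=d-\lambda\zeta\,d\zeta$ instead of $d+\frac{\lambda}{\zeta}d\zeta$. The only bookkeeping you leave implicit is the case where $0$ or $\infty$ is not in $\mathbf a$. Treat such a point as carrying $N\cir{0}$ with class $\{\mathrm{Id}\}$; your formula then also determines the singular locus of the twist.
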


\begin{definition}
Keeping the notations of the previous lemma, for  $\alpha\in \mathbb C^*$, we define the Kummer twist of $\bm\Sigma$ as $T_\alpha\cdot \bm\Sigma:=\bm\Sigma'$.
\end{definition}

\begin{lemma}
Let $\bm\Sigma$ be an irregular curve with boundary data of generalized AD-$A$ type, with parameter $\mathcal T=(m, k, \mathcal C_0, \mathcal C_\infty)$. Then $T_\alpha\cdot\bm\Sigma$ is an irregular curve with boundary data of generalized AD-$A$ type, with parameter 
\[
\mathcal T=(m, k, \alpha\;\mathcal C_0, \alpha\;\mathcal C_\infty)
\]
(this uses our convention of taking the reverse orientation for the formal monodromies at infinity.)
\end{lemma}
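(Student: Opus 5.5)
The statement concerns the Kummer twist $T_\alpha$, so the plan is to compute the formal data of $(E,\nabla)\otimes\mathcal K_\lambda$ locally at each point of $\mathbb P^1$. The first observation is that formal data are local: the formalization of a tensor product at a point $a$ is the tensor product of the formalizations. Away from $0$ and $\infty$ the Kummer connection $\mathcal K_\lambda$ is trivial, so the tensor product has no new singularities. Hence the singular locus of $T_\alpha\cdot\bm\Sigma$ is still $\{0,\infty\}$. (If $\alpha\neq 1$, $\mathcal K_\lambda$ is genuinely singular at both points, but it only produces tame circles there.)

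At $0$, the formal connection is regular with formal monodromy in $\mathcal C_0$. Tensoring with the rank one regular connection $d-\frac{\lambda}{z}dz$ gives a regular connection whose residue is shifted by $-\lambda\,\mathrm{Id}$ (or $+\lambda$, depending on the sign convention). Its monodromy is therefore multiplied by the scalar $\alpha$ (respectively $\alpha^{-1}$); I will fix the sign so that it matches the statement $\mathcal C_{\cir{0}_0}=\{\alpha\}$ for $\mathcal K_\lambda$ itself. Since a scalar commutes with everything, the Jordan type is unchanged and we get $\Theta_0=N\cir{0}_0$ with conjugacy class $\alpha\,\mathcal C_0$.

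At $\infty$, I use the Turrittin--Levelt decomposition $\widehat\nabla_\infty\cong\bigoplus_i \mathcal E^{q_i}\otimes R_i$, with $R_i$ regular. Tensoring with the rank one regular connection $\widehat{\mathcal K}_{\lambda,\infty}$ preserves each summand type: $\mathcal E^{q_i}\otimes(R_i\otimes\widehat{\mathcal K}_{\lambda,\infty})$. The Stokes circles $\cir{q_i}_\infty$ are therefore unchanged, so $m$, $k$ and the distinctness conditions of Definition \ref{def:generalized_ADA_type} persist. Each formal monodromy gets multiplied by the monodromy of $\mathcal K_\lambda$ at infinity.

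The only delicate point is the sign and orientation bookkeeping. With the paper's convention of reverse orientation at $\infty$, the monodromy of $\mathcal K_\lambda$ at $\infty$ is again $\alpha$, not $\alpha^{-1}$, as recorded just before the lemma. Hence $\mathcal C_{\cir{0}_\infty}$ becomes $\alpha\,\mathcal C_\infty$.

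For the wild circles, the boundary classes also get multiplied by a scalar (a power of $\alpha$ related to the ramification order). These classes are not part of the parameter $\mathcal T$, so they do not need to be tracked.

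Finally, effectiveness is preserved because tensoring by an invertible rank one connection preserves irreducibility. This yields the parameter $(m,k,\alpha\,\mathcal C_0,\alpha\,\mathcal C_\infty)$.
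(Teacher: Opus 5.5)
Your proposal is correct and follows essentially the paper's argument. Both work locally at $0$ and $\infty$: twisting by $\mathcal K_\lambda$ leaves the Stokes circles unchanged and shifts the tame residues by $\lambda\,\mathrm{Id}$, so $\mathcal C_0$ and $\mathcal C_\infty$ are multiplied by the scalar monodromy $\alpha$ of $\mathcal K_\lambda$, under the reverse-orientation convention at $\infty$. Phrasing this through multiplicativity of monodromy under tensor product, using the monodromy of $\mathcal K_\lambda$ recorded before the lemma (and the projection formula for the ramified Turrittin--Levelt summands), is slightly cleaner than the paper's explicit change of coordinate $\zeta=1/z$, where the sign bookkeeping is easy to get wrong.
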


\begin{proof} This follows directly from the local models \eqref{eq:local_model_connection}, and the relation between the conjugacy classes and the residues \eqref{eq:additive_formal_monodromy}. In some more detail, let $(E,\nabla)$ be an algebraic connection on a Zariski open subset of $\mathbb P^1$ with formal data $\bm\Sigma$. For the regular singularity at $z=0$, in a local trivialization we have $\nabla=d-\frac{\Lambda_0}{z}dz$, with constant residue matrix $\Lambda_0$ such that $\mathcal C_0$ is the conjugacy class of $\exp(2i\pi\Lambda_0)$. For $\lambda\in \mathbb C$ such that $e^{2i\pi\lambda}=\alpha$, tensoring $\nabla$ by $d-\frac{\lambda}{z}dz$ yields 
\[
d-\left(\frac{\Lambda_0}{z}+\frac{\lambda}{z}\mathrm{Id}\right)dz, 
\]
i.e. the residue is shifted by $\lambda\Id$. The new conjugacy class associated to $\cir{0}_0$ is thus the one of $\exp(2i\pi(\Lambda_0+\lambda\Id))$, i.e. $\alpha\;\mathcal C_0$.

Similarly, at $\infty$, using the local coordinate $\zeta:=1/z$, in a local trivialization we have $\nabla=d-A d\zeta$. Using the notations of \eqref{eq:local_model_connection} and \eqref{eq:additive_formal_monodromy}, if $\Lambda_\infty$ is the block in the residue $A_1$ associated to the vanishing exponential factor $q=0$, $\mathcal C_\infty$ is the conjugacy class of $\exp(2i\pi\Lambda_\infty)$. Now, tensoring $\nabla$ by $d-\frac{\lambda}{z}dz=d-\lambda{\zeta}d\zeta$ yields
\[
d-\left(A+\frac{\lambda}{\zeta}\mathrm{Id}\right)d\zeta, 
\]
i.e. the residue $A_1$, and in particular its block $\Lambda_\infty$, are shifted by $\lambda\Id$, hence the new conjugacy class of $\cir{0}_\infty$ is the one of $\exp(2 i\pi(\Lambda_\infty+\lambda\Id))$, i.e. $\alpha\;\mathcal C_\infty$. 
\end{proof}

It will be convenient to introduce a variant of the truncations and extensions:

\begin{definition}
\label{def:truncation_variants}
Let $\mathcal C\subset \mathrm{GL}_N(\mathbb C)$ be a conjugacy class, $\alpha\in \mathbb C^*$ and $A\in \mathcal C$. 
\begin{itemize}
\item The $\alpha$-truncation $\tau_\alpha(\mathcal C)$ of $\mathcal C$ is the conjugacy class of $A_{\vert{\mathrm{Im}(A-\alpha\Id)}}$ in $\mathrm{GL}_{m}(\mathbb C)$, where $m:=\mathrm{rk}(A-\alpha\Id)$.  (this does not depend on the choice of $A$.)
\item Let $h\geq 0$ be an integer. The $\alpha$-extension $\varepsilon_{h,\alpha}(\mathcal C)$ is the conjugacy class $\mathcal C'\subset \mathrm{GL}_{N+h}(\mathbb C)$ such that $\tau_\alpha(\mathcal C')=\mathcal C$ (this is well-defined).
\end{itemize}
\end{definition}

\begin{remark}
The usual truncation $\tau$ considered before corresponds to the particular case of $\tau_\alpha$ with $\alpha=1$. Conversely, $\tau_\alpha$ for general $\alpha$ can be expressed in terms of $\tau$ as follows: for any conjugacy class $\mathcal C$ and $\alpha\in \mathbb C^*$, and $N\geq \mathrm{rk}(\mathcal C)$, we have
\[
\tau_{\alpha}(\mathcal C)=\alpha\;\tau(\alpha^{-1}\mathcal C).
\]
\end{remark}

In terms of the Jordan forms, we have the following explicit expressions:

\begin{lemma} Let 
 $\mathcal C$ be a conjugacy class, and $\alpha\in \mathbb C$. We can  write $\mathcal C=\{(\alpha_1, [Y_{\alpha_1}]), \dots, (\alpha_p, [Y_{\alpha_k}])\}$, with that $\alpha=\alpha_i$ for some $i\in \{1, \dots, p\}$ (with $[Y_{\alpha}]$ possibly empty). Then
 \[
 \tau_\alpha(\mathcal C)=\{(\alpha_1, [Y_{\alpha_1}]), \dots,(\alpha, [\tau(Y_\alpha)]), \dots,(\alpha_p, [Y_{\alpha_k}])\}.
 \]
Conversely, for $h\geq 0$, 
 \[
 \varepsilon_{h,\alpha}(\mathcal C)=\{(\alpha_1, [Y_{\alpha_1}]), \dots,(\alpha, [h,Y_\alpha)]), \dots,(\alpha_p, [Y_{\alpha_k}])\}.
 \]
\end{lemma}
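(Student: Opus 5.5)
The statement to prove is the last lemma: the Jordan-form description of $\tau_\alpha(\mathcal C)$ and of $\varepsilon_{h,\alpha}(\mathcal C)$. My plan is to reduce everything to the case $\alpha=1$, which is already settled by Lemma~\ref{lemma:truncation_in_terms_of_Young_diag} and Lemma~\ref{lemma:reconstruction_conjugacy_class_from_truncation}.

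For $\tau_\alpha$, I will use the remark after Definition~\ref{def:truncation_variants}, namely $\tau_\alpha(\mathcal C)=\alpha\,\tau(\alpha^{-1}\mathcal C)$. I also plan to check it directly: for $A\in\mathcal C$ we have $\mathrm{Im}(A-\alpha\Id)=\mathrm{Im}(\alpha^{-1}A-\Id)$, and this subspace is $A$-stable. Multiplying by $\alpha^{-1}$ sends the eigenvalues $\alpha_j$ to $\alpha_j/\alpha$ and leaves every Young diagram unchanged. The eigenvalue $\alpha$ becomes $1$, and Lemma~\ref{lemma:truncation_in_terms_of_Young_diag} then tells us that $\tau$ deletes the first column of $[Y_\alpha]$ and does nothing on the other characteristic subspaces. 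Multiplying back by $\alpha$ gives the stated formula.

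For $\varepsilon_{h,\alpha}$, the goal is to show that $\mathcal C'=\{\dots,(\alpha,[h,Y_\alpha]),\dots\}$ satisfies $\tau_\alpha(\mathcal C')=\mathcal C$, and that it is the only such class of rank $N+h$. First, I need $[h,Y_\alpha]$ to be a Young diagram. This holds as soon as $h\geq h_1(Y_\alpha)$, which is implicit in saying $\varepsilon_{h,\alpha}$ is well-defined. Rather than stating this as a restriction, I will record it as the condition for the extension to exist. Given that, existence follows from the $\tau_\alpha$ formula just proved, since deleting the first column of $[h,Y_\alpha]$ recovers $[Y_\alpha]$.

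For uniqueness, let $\mathcal C''$ be any class of rank $N+h$ with $\tau_\alpha(\mathcal C'')=\mathcal C$. By the $\tau_\alpha$ formula, $\mathcal C''$ and $\mathcal C$ have the same eigenvalues different from $\alpha$, with the same diagrams. The $\alpha$-diagram of $\mathcal C''$, with its first column removed, equals $[Y_\alpha]$, so it has the form $[h',Y_\alpha]$ for some $h'$. Counting ranks gives $h'=h$. Equivalently, I can twist by $\alpha^{-1}$ and apply Lemma~\ref{lemma:reconstruction_conjugacy_class_from_truncation} directly.

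I expect no real obstacle. The only point needing care is the well-definedness condition $h\geq h_1(Y_\alpha)$, which is needed for $[h,Y_\alpha]$ to be a Young diagram. I will note this explicitly rather than treat it as a gap.
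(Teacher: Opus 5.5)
Your proof is correct and is essentially the argument the paper has in mind (the lemma is stated there without proof, right after the remark $\tau_\alpha(\mathcal C)=\alpha\,\tau(\alpha^{-1}\mathcal C)$): you twist by $\alpha^{-1}$, apply the $\alpha=1$ Lemmas~\ref{lemma:truncation_in_terms_of_Young_diag} and~\ref{lemma:reconstruction_conjugacy_class_from_truncation}, twist back, and get uniqueness of the extension from the rank count. Your observation that $\varepsilon_{h,\alpha}(\mathcal C)$ exists only when $h\geq h_1(Y_\alpha)$ correctly sharpens the paper's unqualified ``this is well-defined''; note also that the twist by $\alpha^{-1}$ needs $\alpha\in\mathbb C^*$, as in Definition~\ref{def:truncation_variants}, although the lemma as stated says $\alpha\in\mathbb C$.
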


\subsection{Elementary transformations}

We define generalizations of the type I elementary AD-$A$ operations, by conjugating by Kummer twists:

\begin{definition}
The elementary AD-operations are the following operations on irregular curves with boundary data on $\mathbb P^1$:
\begin{itemize}
\item $F_\alpha:=T_{\alpha}FT_{\alpha^{-1}}$;
\item $\widetilde F^+_\alpha:=T_{\alpha}FMT_{\alpha^{-1}}$;
\item $\widetilde F^-_\alpha:=T_{\alpha}MFT_{\alpha^{-1}}$;
\item $T_\alpha$.
\end{itemize}
\end{definition}

In a similar way as in type I, the irregular curves with boundary data are preserved by suitable elementary AD-$A$ operations.

\begin{proposition}
\label{prop:transf_parameter_elementary_transf_twist}
Let $\bm\Sigma$ be an irregular curve with boundary data of generalized  AD-$A$ type, with parameter $\mathcal T=(m,k,\mathcal C_0, \mathcal C_\infty)$, and $\alpha\in \mathbb C^*$. Let $[Y^0_\alpha]$ denote the Young diagram associated to $\alpha$ in the Jordan form of $\mathcal C_0$, and $h:=h_1(Y^0_\alpha)$ the height of its first column. 
\begin{enumerate}
\item If $k\geq 1$, then the irregular curve with boundary data $F_\alpha\cdot \bm\Sigma$ is  of generalized AD-$A$ type, with parameter
\begin{equation}
F_\alpha\cdot \mathcal T:=\left(m,\frac{s}{s-r},\varepsilon_{(ms-h),\alpha}(\mathcal C_{\infty}), \tau_\alpha(\mathcal C_0)\right).
\end{equation}
\item The irregular curve with boundary data $\widetilde F^+_\alpha\cdot \bm\Sigma$ is of generalized  AD-$A$ type, with parameter
\begin{equation}
\widetilde F^+_\alpha\cdot \mathcal T:=\left(m,\frac{s}{s+r},\varepsilon_{(ms-h),\alpha}(\mathcal C_{0}), \tau_\alpha(\mathcal C_\infty)\right) .
\end{equation}
\item If $k<1$, the irregular curve with boundary data $\widetilde F^-_\alpha\cdot \bm\Sigma$ is of  generalized AD-$A$ type, with parameter
\begin{equation}
\widetilde F^-_\alpha\cdot \mathcal T:=\left(m,\frac{s}{r-s},\tau_\alpha(\mathcal C_{0}), \varepsilon_{(ms-h),\alpha}(\mathcal C_\infty)\right).
\end{equation}
\end{enumerate}
\end{proposition}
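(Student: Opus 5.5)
The plan is to reduce to the computation already carried out in the proof of Prop. \ref{prop:transformation_parameter_elementary_operation_type_I}, redone with general conjugacy classes, and to conjugate by Kummer twists. Since each operation has the form $T_\alpha O T_{\alpha^{-1}}$ with $O\in\{F, FM, MF\}$, I would first apply the lemma on Kummer twists. It says that $T_{\alpha^{-1}}\cdot\bm\Sigma$ is of generalized AD-$A$ type with parameter $(m,k,\alpha^{-1}\mathcal C_0,\alpha^{-1}\mathcal C_\infty)$. The wild Stokes circles are unchanged, because twisting by $\mathcal K_\lambda$ only shifts residues. After this twist, the Young diagram of $\alpha^{-1}\mathcal C_0$ at the eigenvalue $1$ is $[Y^0_\alpha]$, whose first column has height $h$.

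Next I would apply $O$ and track the modified formal data exactly as in the type I proof, but without assuming unipotence. For $F$ with $k\geq 1$, the modified data are $\cir{q_i}_\infty$, $(N-mr)\cir{0}_\infty$ with class $\alpha^{-1}\mathcal C_\infty$, and $\breve N_0\cir{0}_0$ with class $\tau(\alpha^{-1}\mathcal C_0)$. Here $\breve N_0=N-h$. By the stationary phase lemma, $F$ sends these to $m$ circles of slope $s/(s-r)$ at $\infty$, a tame part at $0$ with class $\alpha^{-1}\mathcal C_\infty$, and a tame part at $\infty$ with class $\tau(\alpha^{-1}\mathcal C_0)$. The sign $(-1)^{\Irr}$ equals $1$ on tame circles, so the classes carry over unchanged. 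The new rank is $N'=m(s-r)+N-h=ms-h+\mathrm{rk}(\mathcal C_\infty)$. Lemma \ref{lemma:reconstruction_conjugacy_class_from_truncation} then recovers the class at $0$ as the extension of $\alpha^{-1}\mathcal C_\infty$ by a block of height $ms-h$ at eigenvalue $1$, i.e.\ $\varepsilon_{ms-h,1}(\alpha^{-1}\mathcal C_\infty)$. Cases (2) and (3) go the same way: first apply $M$ (swapping $0$ and $\infty$) or $F$ (case $k<1$), following the corresponding type I computation. The rank bookkeeping is the same as in the type I proof, since it never used unipotence, only the number $h$ and the ranks of the classes.

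Finally I would apply $T_\alpha$, which multiplies both tame classes by $\alpha$, and use the identities $\alpha\,\tau(\alpha^{-1}\mathcal C)=\tau_\alpha(\mathcal C)$ (the Remark after Def.~\ref{def:truncation_variants}) and $\alpha\,\varepsilon_{h',1}(\alpha^{-1}\mathcal C)=\varepsilon_{h',\alpha}(\mathcal C)$. Both follow from the Jordan form description in the last lemma. This produces exactly the stated parameters. I would also check that the wild circles keep their defining properties (pairwise distinct with distinct leading Galois orbits) under $F$. This follows from the Legendre transform being a bijection on Stokes circles that preserves the distinctness of leading terms; the type I proof implicitly uses the same fact.

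The main obstacle is the careful bookkeeping at the step where non-modified data are recovered. The extension must be applied at the eigenvalue $1$ \emph{before} twisting back, and one must confirm that the block added at that eigenvalue is what becomes the $\alpha$-block after the twist. A second delicate point is the boundary case $k=1$ allowed in (1). There $s=r=1$, so $s-r=0$ and the stationary phase lemma as stated (which requires $k>1$) does not apply. I would handle it separately, or read the statement as effectively requiring $k>1$.
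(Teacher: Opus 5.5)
Your strategy (conjugate by $T_{\alpha^{\pm1}}$, redo the type I bookkeeping with arbitrary classes, then twist back using $\alpha\,\tau(\alpha^{-1}\mathcal C)=\tau_\alpha(\mathcal C)$ and $\alpha\,\varepsilon_{h',1}(\alpha^{-1}\mathcal C)=\varepsilon_{h',\alpha}(\mathcal C)$) is exactly what the paper's one-line proof intends. It does give cases (1) and (3) as stated.

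\textbf{Case (2) does not come out as stated.} You claim the computation produces ``exactly the stated parameters'' there, on the grounds that ``the rank bookkeeping is the same''. It is not the same. In $\widetilde F^+_\alpha=T_\alpha FMT_{\alpha^{-1}}$ it is the class at $\infty$ that gets truncated, not the class at $0$.
\begin{enumerate}
\item After $T_{\alpha^{-1}}$ and $M$, the tame circle at $0$ carries $\alpha^{-1}\mathcal C_\infty$. Its truncation has rank $N-mr-h_1(Y^\infty_\alpha)$, where $Y^\infty_\alpha$ is the Young diagram of $\mathcal C_\infty$ at the eigenvalue $\alpha$.
\item After $F$, the new rank is $N'=m(s+r)+N-mr-h_1(Y^\infty_\alpha)$.
\item The block added at $0$ when passing back to non-modified data therefore has height $N'-N=ms-h_1(Y^\infty_\alpha)$.
\end{enumerate}
So you obtain $\varepsilon_{ms-h_1(Y^\infty_\alpha),\alpha}(\mathcal C_0)$, not $\varepsilon_{ms-h,\alpha}(\mathcal C_0)$ with $h=h_1(Y^0_\alpha)$.

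Case (2) as printed is a misprint, and a correct proof has to say so rather than claim it. Two places in the paper confirm this. Case (2) of Prop. \ref{prop:transformation_parameter_elementary_operation_type_I} uses $h_1(Y^\infty)$. In Prop. \ref{prop:duality_III_gen_case}, $\alpha=\beta_l$ is not an eigenvalue of $\mathcal C_0$, so the printed formula would add a block $[ms]$. The block actually used there is $[ms-1]$, i.e.\ $h_1(Y^\infty_{\beta_l})=1$.

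\textbf{The case $k=1$ in (1) cannot be handled separately: the statement is false there.}
\begin{itemize}
\item With $s=r=1$ the wild circles are $\cir{b_iz}_\infty$ with pairwise distinct $b_i\neq 0$, and $F\cdot\cir{b_iz}_\infty=\cir{0}_{b_i}$.
\item So $F_\alpha\cdot\bm\Sigma$ has regular singularities at $b_1,\dots,b_m$ (the Kummer twists do not remove them). It is therefore not of generalized AD-$A$ type, and the slope $s/(s-r)$ is meaningless.
\item The paper observes this itself in the $s=1$ lemma and in Remark \ref{remark:case_irregularity_1}.
\end{itemize}
The hypothesis in (1) must be $k>1$, as in the type I proposition.

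With these two corrections (the height in case (2) taken from $\mathcal C_\infty$, and $k>1$ in case (1)), your argument is complete.
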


\begin{proof}
Up to using Lemma \ref{lemma:truncation_in_terms_of_Young_diag}, the proof is the same as the one of Prop. \ref{prop:transformation_parameter_elementary_operation_type_I}.
\end{proof}

\subsection{Orbits}

As in type I, we can study the orbits under compositions of elementary operations.

\begin{definition}

Let $\bm\Sigma$ be an irregular curve with boundary data of generalized AD-$A$ type. We define $\mathcal O(\bm\Sigma)$ of all irregular curves with boundary data of generalized type I AD-$A$ type of the form $O_k\dots O_1\cdot \bm\Sigma$, where $k\geq 1$, and for each $i\in \{1, \dots, k\}$, $O_i$ is an allowed elementary AD-operation on $O_{i-1}\dots O_1\cdot \bm\Sigma$. 
If $\mathcal T$ is a generalized AD-$A$ parameter we define in the same way a set $\mathcal O(\mathcal T)$.
\end{definition}

We call again $\mathcal O(\bm\Sigma)$ (resp. $\mathcal O(\mathcal T)$) the orbit of $\bm\Sigma$ (resp. $\mathcal T$) under AD-$A$ transformations.

\begin{proposition}
Let $\mathcal T=(m, k, \mathcal C_0, \mathcal C_\infty)$ be a generalized AD-$A$ parameter. Let us write $k=\frac{s}{r}$, with $s, r$ coprime, and assume that $s>1$. Let $r=\kappa s+\rho$ , with $\kappa\in \mathbb Z_{\geq 0}$ and $\rho\in \{1, \dots, s-1\}$ be the euclidean division of $r$ by $s$. Then the set of all slopes of the wild Stokes circles of elements in the orbit $\mathcal O(\mathcal T)$ only depends on $k$, and is given by
\begin{equation}
\label{eq:form_slopes_orbit_general_case}
\mathcal O(k):=\left\lbrace \frac{s}{ls\pm \rho} \;\middle \vert\; l\in \mathbb Z_{\geq 0}, ls\pm \rho>0 \right\rbrace.
\end{equation}
\end{proposition}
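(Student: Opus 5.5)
The plan is to reduce the statement to the type I case. By Prop. \ref{prop:transf_parameter_elementary_transf_twist}, each elementary operation acts on the slope exactly as its type I counterpart, independently of $\alpha$ and of the conjugacy classes:
\begin{itemize}
\item $F_\alpha$ sends $\frac{s}{r}\mapsto\frac{s}{s-r}$ and is allowed only when $k>1$;
\item $\widetilde F^+_\alpha$ sends $\frac{s}{r}\mapsto\frac{s}{s+r}$ and is always allowed;
\item $\widetilde F^-_\alpha$ sends $\frac{s}{r}\mapsto\frac{s}{r-s}$ and is allowed only when $k<1$;
\item $T_\alpha$ leaves the slope unchanged.
\end{itemize}
Indeed, the Kummer twists $T_{\alpha^{\pm1}}$ do not touch the wild Stokes circles at $\infty$, by the lemma on Kummer twists. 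So the set of slopes reached is the orbit of $k$ under these three partial maps on $\mathbb Q_{>0}$. In particular it depends only on $k$, since allowedness is decided by the slope alone and the irregularity $s$ is preserved.

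First I check that every reachable slope lies in $\mathcal O(k)$. Write $r'$ for the current ramification order. The three maps send $r'$ to $s-r'$, $s+r'$ or $r'-s$, so $r'\equiv\pm\rho \pmod s$ is preserved. Positivity of the new denominator is built into the allowedness conditions. Since $\gcd(s,r')=1$ is preserved, the slope $\frac{s}{r'}$ is already in lowest form. Hence every reachable slope has the form $\frac{s}{ls\pm\rho}$ with $ls\pm\rho>0$.

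Next I check that every element of $\mathcal O(k)$ is reached. Starting from $r=\kappa s+\rho$, apply $\widetilde F^-_\alpha$ $\kappa$ times. Each step is allowed, because while the denominator is at least $s+\rho>s$ the slope is $<1$. This reaches slope $\frac{s}{\rho}>1$, passing through $\frac{s}{ls+\rho}$ for $0\le l\le\kappa$. Applying $F_\alpha$ then gives $\frac{s}{s-\rho}$, and further powers of $\widetilde F^+_\alpha$ starting from $\frac{s}{\rho}$ and from $\frac{s}{s-\rho}$ give every $\frac{s}{ls+\rho}$ and every $\frac{s}{ls-\rho}$ with $l\ge1$.

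The one point needing care is that at each step the result is again of generalized AD-$A$ type, so that the next operation can be applied. This is exactly the content of Prop. \ref{prop:transf_parameter_elementary_transf_twist}, which I take for any choice of $\alpha$, for instance $\alpha=1$. A second subtlety is the boundary case $k=1$ in part (1) of that proposition. It cannot occur here: since $s>1$ and $\gcd(s,r')=1$, we always have $r'\neq s$.
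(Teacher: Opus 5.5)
Your proof is correct and takes the same route as the paper, whose proof just says ``completely similar to the type I case'': you track the action of $F_\alpha$, $\widetilde F^\pm_\alpha$ and $T_\alpha$ on the slope via Prop.~\ref{prop:transf_parameter_elementary_transf_twist}, and you reach every slope through $(\widetilde F^-_\alpha)^\kappa$, then $F_\alpha$, then powers of $\widetilde F^+_\alpha$. Your invariant $r'\equiv\pm\rho \pmod s$, used for the inclusion into $\mathcal O(k)$, together with your remark that $k=1$ cannot occur, simply makes explicit what the paper leaves to ``induction''.
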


\begin{proof}
Completely similar to the type I case. 
\end{proof}

\begin{remark}
As far as the slopes are concerned, the situation of Fig. \ref{fig:structure_orbit} remains true in the general case. However, unlike in the type I case, there is no longer a unique element in the orbit corresponding to each vertex of the figure, since we now have a choice of twist $\alpha$ at each step. 
\end{remark}

\subsection{Dualities between standard parameters}

As in type I, in the general case we can obtain more general dualities by composing elementary transformations. In particular, we can recover in this way the third main duality given in \cite{beem2025simplifying}, as follows:

\begin{proposition}
\label{prop:duality_III_gen_case}
Let $\bm\Sigma$ be an irregular curve with boundary data of standard AD-$A$ type, with reduced parameter $\mathscr T=(m, k, [Y])$, parameter $\mathcal T=(m, k, \mathcal C_0, \mathcal C_\infty)$, and rank $N$. Let $\kappa=N-rm$, and $\beta_1, \dots, \beta_\kappa$ be the eigenvalues of $\mathcal C_\infty$. Then $\widetilde{F}^+_{\beta_\kappa}\dots\widetilde{F}^+_{\beta_1}\cdot \bm\Sigma$ has parameter of the form
\[
\widetilde{F}^+_{\beta_\kappa}\dots\widetilde{F}^+_{\beta_1}\cdot \mathcal T=\left(m, \frac{s}{\kappa s+r}, \mathcal C'_0, \{\;\}\right)
\]
where $\{\;\}$ denotes the trivial (rank 0) conjugacy class, and
\begin{equation}
{\mathcal  C}'_0 =\{(1, [Y]),(\beta_1,[ms-1]), \dots,(\beta_\kappa,[ms-1])\}.
\end{equation}
In particular, the nilpotent part of $\mathcal C'_0$ only depends on $\mathscr T$, and is given by the Young diagram
\begin{equation}
[(ms-1)^\kappa, Y].
\end{equation}
\end{proposition}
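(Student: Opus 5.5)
The plan is to argue by induction on $j\in\{0,\dots,\kappa\}$, applying part (2) of Prop. \ref{prop:transf_parameter_elementary_transf_twist} at each step with $\alpha=\beta_j$. The statement to be proved at stage $j$ is that
\[
\mathcal T_j:=\widetilde{F}^+_{\beta_j}\dots\widetilde{F}^+_{\beta_1}\cdot\mathcal T
\]
has the following form:
\begin{itemize}
\item its slope is $\frac{s}{js+r}$;
\item its conjugacy class at zero is $\mathcal C_0^{(j)}=\{(1,[Y]),(\beta_1,[ms-1]),\dots,(\beta_j,[ms-1])\}$;
\item its conjugacy class at infinity is $\mathcal C_\infty^{(j)}$, the regular semisimple class with eigenvalues $\beta_{j+1},\dots,\beta_\kappa$, each of multiplicity one.
\end{itemize}
Since $\mathcal T$ is of standard AD-$A$ type, the case $j=0$ holds by definition: $\mathcal C_0=\{(1,[Y])\}$, and $\mathcal C_\infty$ is regular semisimple with pairwise distinct eigenvalues $\beta_1,\dots,\beta_\kappa$, all different from $1$. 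Its rank is $N-mr=\kappa$, because the $m$ wild circles each contribute $r$ to the rank.

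For the inductive step, apply $\widetilde F^+_{\beta_{j+1}}$ to $\mathcal T_j$. This operation is always allowed, and the slope becomes $\frac{s}{s+(js+r)}=\frac{s}{(j+1)s+r}$. Here we need the numerator and denominator to be coprime, which holds since $\gcd(s,(j+1)s+r)=\gcd(s,r)=1$.

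For the conjugacy classes, the step is as follows. The class $\mathcal C_\infty^{(j)}$ contains $\beta_{j+1}$ with Young diagram $[1]$. Hence $\tau_{\beta_{j+1}}$ removes that single eigenvalue, and the new class at infinity is the one with eigenvalues $\beta_{j+2},\dots,\beta_\kappa$.

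At zero, however, the formula in part (2) has to be read with care. As stated, it uses $h=h_1(Y^0_\alpha)$, but the proof in Prop. \ref{prop:transformation_parameter_elementary_operation_type_I}(2) shows that the relevant height is the first column of the diagram of the eigenvalue $\alpha$ in the class at infinity. The reason is that under $M$ the class at infinity moves to zero, where it gets truncated. I would therefore redo that computation twisted by $T_{\beta_{j+1}^{-1}}$, which turns $\beta_{j+1}$ into the eigenvalue $1$.

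\begin{itemize}
\item After twisting and applying $M$, the class at zero is $\beta_{j+1}^{-1}\mathcal C_\infty^{(j)}$. Its truncation drops exactly one dimension, so its rank is $\kappa-j-1$.
\item The Fourier transform produces rank $N'=m(r+js+s)+(\kappa-j-1)$.
\item The old rank was $N_j=m(js+r)+(\kappa-j)$, which equals $\mathrm{rk}(\mathcal C_0^{(j)})$ by induction. Extending the class at zero to rank $N'$ therefore adds a Jordan block of size $h=N'-N_j=ms-1$ for the eigenvalue $1$.
\item Since $\beta_{j+1}\ne1$ and $\beta_{j+1}\neq\beta_i$ for $i\le j$, the twisted class $\beta_{j+1}^{-1}\mathcal C_0^{(j)}$ has no prior Jordan block at eigenvalue $1$. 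The extension thus creates a new eigenvalue $1$ with diagram $[ms-1]$.
\item Twisting back by $T_{\beta_{j+1}}$ gives $\mathcal C_0^{(j+1)}$, as claimed.
\end{itemize}

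At $j=\kappa$ the class at infinity has rank zero, which gives the stated parameter. Forgetting the eigenvalues, the nilpotent parts of the blocks assemble into $[(ms-1)^\kappa,Y]$, where the $\kappa$ rows of length $ms-1$ are placed above those of $Y$, whose rows have length at most $ms-1$. This last point needs a remark: only the multiset of block sizes is intended, so the union is read as a partition.

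The main obstacle is the bookkeeping just described. One has to correctly identify which diagram's first column appears in the extension, namely the eigenvalue-$\beta_{j+1}$ part at infinity (of height $1$), and confirm that the extended eigenvalue at zero is new. This is where the distinctness of the $\beta_i$ and the fact that they all differ from $1$, both coming from the standard hypothesis, are used.
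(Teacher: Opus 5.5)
Your induction is the paper's own argument: part (2) of Prop.~\ref{prop:transf_parameter_elementary_transf_twist} applied at each step with $\alpha=\beta_{j+1}$. Several of your steps are correct:
\begin{itemize}
\item In part (2) the height entering the extension must be the first column of the $\alpha$-diagram of $\mathcal C_\infty$ (here $1$), not of $\mathcal C_0$. The literal formula would extend by $ms$ and give the wrong rank, and the paper's one-line proof silently needs your correction.
\item The rank count $N'-N_j=ms-1$ is right.
\item So is the check that $1$ is not an eigenvalue of $\beta_{j+1}^{-1}\mathcal C_0^{(j)}$.
\end{itemize}

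The error is in what the extension produces. You write that it ``adds a Jordan block of size $h=ms-1$''. In the paper's conventions, however, a bracket lists \emph{column heights}: truncation deletes the first column, i.e.\ shortens every Jordan block by one, and $[h,Y']$ prepends a column of height $h$. Moreover, $\varepsilon_{h,\alpha}(\mathcal C)$ is characterised by $\tau_\alpha(\varepsilon_{h,\alpha}(\mathcal C))=\mathcal C$. If $\alpha$ is not an eigenvalue of $\mathcal C$, the new $\alpha$-part $B$ must therefore satisfy $\mathrm{rk}(B-\alpha\Id)=0$, i.e.\ $B=\alpha\Id_h$. A block $J_h(\alpha)$ with $h>1$ is impossible, since its truncation $J_{h-1}(\alpha)$ is nonzero. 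The vanishing-cycle description agrees: the new eigenvalue-$1$ part is $\ker(\mathrm{can})$, on which $T=1+\mathrm{var}\circ\mathrm{can}$ is the identity. So $(\beta_i,[ms-1])$ is the scalar class $\beta_i\Id_{ms-1}$, i.e.\ $ms-1$ Jordan blocks of size one; this is what Remark~\ref{remark:hypermultiplets} means by ``columns of height $ms-1$''. Your formula for $\mathcal C'_0$ is correct as written, but the class your argument actually describes is a different one.

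The same misreading invalidates your final paragraph. Since the $\beta_i$-parts are scalar, the literal nilpotent part of $\mathcal C'_0$ has the Jordan type of $[Y]$ plus $\kappa(ms-1)$ blocks of size one. That is $[Y]$ with its first column lengthened by $\kappa(ms-1)$; there are no ``rows of length $ms-1$'' to stack.

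The diagram $[(ms-1)^\kappa,Y]$ arises instead by concatenating the \emph{column} lists of the diagrams attached to the various eigenvalues. In the paper's convention it is the Jordan type of the unipotent class to which classes of this shape generically degenerate as $\beta_i\to1$. That class is induced from the Levi subgroup $\GL_{\mathrm{rk}(Y)}\times\GL_{ms-1}^{\kappa}$, where Jordan types add row-wise and so column lists concatenate. This is the reading under which the ``in particular'' holds and matches the physics comparison.

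Your side claim that the rows of $Y$ have length at most $ms-1$ is also unjustified, since one only has the bound $N=mr+\kappa$. This is harmless if the result is read as a multiset of columns.
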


\begin{proof} Using the second part of Prop.  \ref{prop:transf_parameter_elementary_transf_twist}, it is straightforward to show by induction that for $l\in\{1, \dots, \kappa\}$, we have
\[
\widetilde{F}^+_{\beta_l}\dots\widetilde{F}^+_{\beta_1}\cdot \mathcal T=\left(m, \frac{s}{l s+r}, \widetilde{\mathscr C}'_l,\mathscr C'_l \right)
\]
where for $l\in \{0, \dots, \kappa\}$
\begin{align*}
{\mathscr  C}_l &:=\{(1, [Y]),(\beta_1,[ms-1]), \dots,(\beta_l,[ms-1])\},\\
\widetilde{\mathscr C}_l& :=\{(\beta_{l+1},[1]), \dots,(\beta_\kappa,[1])\}
\end{align*}
(in particular $\mathscr  C_0=\mathcal C_\infty$, $\mathscr  C_\kappa=\{\;\}$, $\widetilde{\mathscr C}_\kappa=\mathcal C'_0$.)
\end{proof}

\begin{remark}
\label{remark:hypermultiplets}
Notice that here $\mathcal C'_0$ is not unipotent, hence the parameter $\widetilde{F}^+_{\beta_\kappa}\dots\widetilde{F}^+_{\beta_1}\cdot \mathcal T$ is not of standard AD-$A$ type, i.e. does not directly correspond to an Argyres--Douglas theory. However, this is consistent with the results of \cite{beem2025simplifying} in the following way: there in §4.4 it is observed that there is a discrepancy between the naive flavour symmetries on both sides, which requires adding some free hypermultiplets. In our framework this corresponds to the fact that the columns of height $ms-1$ in the Young diagram $[(ms-1)^\kappa, Y]$ of $\mathcal C'_0$ do not correspond to the eigenvalue 1.
\end{remark}

\section{Nonabelian Hodge diagrams and 3d mirrors}
\label{sec:diagrams}

In this section, we discuss the relation between the nonabelian Hodge diagrams associated to irregular curves with boundary data of standard AD-$A$ type via the construction of \cite{boalch2020diagrams, doucot2021diagrams}, and the quivers describing the 3d mirrors of the corresponding Argyres--Douglas theories. 

More precisely, we show that if $\bm\Sigma$ is an irregular curve with boundary data of standard AD-$A$ type, the corresponding 3d mirror quiver is a nonabelian Hodge diagram: it is not the diagram $\Gamma(\bm\Sigma)$ associated to $\bm\Sigma$, but the unique one associated to another irregular curve with boundary data $\bm\Sigma'$, of generalized AD-$A$ type, belonging to the orbit of $\bm\Sigma$ under elementary AD operations, having no negative edge-loops and minimal number of vertices. 

\subsection{Nonabelian Hodge diagrams of AD-$A$ type}

Let us briefly recall the construction of nonabelian Hodge diagrams for irregular connections on $\mathbb P^1$. By diagram, we mean the following generalization of a graph:

\begin{definition}[\cite{boalch2020diagrams}]
A \textit{diagram} is a pair $\Gamma=(V,B)$ where $V$ is a finite set (the set of vertices) and $B=(B_{ij})_{i,j\in V}$ is a symmetric square matrix with integer values, 
such that $B_{ii}$ is even for any $i\in V$.
A \textit{dimension vector} $\mathbf d$ for $\Gamma$ is an element of $\mathbb Z_{\geq 0}^{V}$. 
\end{definition}

For $i, j\in V$ such that $i\neq j$, we view the integer $B_{ij}$ as a (possibly negative) number of edges between $i$ and $j$, and for $i\in V$, we view $\frac{B_{ii}}{2}$ as a (possibly negative) number of edge-loops at the vertex $i$.

The main result of \cite{doucot2021diagrams} is the following: if $(E,\nabla)$ is a connection on a Zariski open subset of $\mathbb P^1$, there is an explicit way to associate to it a nonabelian Hodge diagram $\Gamma(E, \nabla)$, in such a way that the diagram is invariant under the Fourier transform, i.e.
\[
\Gamma(E,\nabla)=\Gamma(F\cdot (E,\nabla)).
\]
The diagram only depends on the formal data $(\bm\Theta, \bm{\mathcal C})$ of $(E,\nabla)$, and is expressed most conveniently in terms of its modified formal data $(\bm{\breve\Theta}, \bm{\breve{\mathcal C}})$, so we write $\Gamma(E,\nabla)=\Gamma(\bm{\breve\Theta}, \bm{\breve{\mathcal C}})$. 

Furthermore, writing $\Gamma(E,\nabla)=(V,B)$,  for any choice of minimal marking (see below) of $\bm{\breve{\mathcal C}}$, we obtain a dimension vector $\mathbf d\in \mathbb Z^V$ for $\Gamma(E,\nabla)$, and the dimension of the (symplectic) wild character variety $\mathcal M_B(E,\nabla)$ is given by the formula
\[
2-(\mathbf d, \mathbf d),
\]
where $(\cdot, \cdot)$ is the bilinear form on $\mathbb Z^V$ defined by the Cartan matrix $C:=2\Id-B$.\\

Let us briefly sketch the construction: the diagram $\Gamma(\bm{\breve\Theta}, \bm{\breve{\mathcal C}})$ consists of a \emph{core} $\Gamma_c(\bm{\breve\Theta})$, only depending on the global modified irregular classs $\bm{\breve\Theta}$, to which are glued  \emph{legs} encoding the conjugacy classes in $\bm{\breve{\mathcal C}}$ (they corresponds to the ``tails'' in the physics terminology).

The core diagram $\Gamma_c(\bm{\breve\Theta})$ is constructed as follows. For any two Stokes circles $\cir{q}, \cir{q'}\in \pi_0(\mathcal I)$ (possibly identical), we define an integer $B_{\cir{q}\cir{q'}}=B_{\cir{q'}\cir{q}}\in \mathbb Z$, such that $B_{\cir{q}\cir{q}}$ is even (we will not recall the detailed definition here, let us just mention that the motivation is that when $\cir{q}, \cir{q'}$ are both at infinity, $B_{\cir{q}\cir{q'}}$ is closely related to the number of nontrivial entries between $\cir{q}$ and $\cir{q'}$ in the Stokes matrices, in the explicit presentations of wild character varieties.)

Now, writing
\[
\bm{\breve\Theta}=\sum_{i=1}^p m_i\cir{q_i},
\]
with $m_i\geq 1$ for $i\in\{1, \dots, p\}$, the core diagram $\Gamma_c(\bm\Theta)$ is the diagram whose vertices are identified with the Stokes circles $\cir{q_i}$, for $i\in \{1, \dots, p\}$ and whose edge/loop multiplicities are defined as $B_{\cir{q_i}, \cir{q_j}}$, for $i,j\in \{1, \dots, p\}$.

Next, writing $\bm{\breve{\mathcal C}}=(\breve{\mathcal C}_1, \dots, \breve{\mathcal C}_p)$, with $\breve{\mathcal C}_i\subset\mathrm{GL}_{m_i}(\mathbb C)$, for $i\in \{1, \dots, p\}$, the full diagram is obtained by gluing to the vertex $\cir{q_i}$ a leg encoding the conjugacy class $\breve{\mathcal C}_{\cir{q_i}}$, for all $i\in \{1, \dots, p\}$. To define these legs, let us first recall the notion of marking:

\begin{definition}
Let $\mathcal C\subset\GL_N(\mathbb C)$ be a conjugacy class. 
A marking of $\mathcal C$ is a tuple $\xi=(\xi_1, \dots, \xi_k)$, where $\xi_i\in \mathbb C^*$ for $i\in \{1, \dots, k\}$, such that
\[
(A-\xi_1)\dots (A-\xi_k)=0
\]
for $A\in \mathcal C$.
\end{definition}

We say that a marking $\xi=(x_1, \dots, \xi_w)$ is \emph{minimal} if its number of elements $k$ is minimal. In that case the polynomial $(X-\xi_1)\dots(X-\xi_w)$ is the minimal polynomial of $\mathcal C$, i.e. a minimal marking amounts to a choice of ordering of its eigenvalues, counted with multiplicity.

\begin{definition}
Let $\mathcal C\subset\GL_N(\mathbb C)$ be a conjugacy class, and $\xi=(x_1, \dots, \xi_w)$ a marking of $\mathcal C$. The \emph{leg} associated to $(\mathcal C, \xi)$ is the pair $(\mathbb L, \mathbf d)$, where $\mathbb L$ is the Dynkin diagram of type $A_w$ (see the figure below),
\vspace{0.3cm}
\begin{center}
\begin{tikzpicture}
\tikzstyle{vertex}=[circle,fill=black,minimum size=6pt,inner sep=0pt]
\foreach \name/\x in {2/2,3/3,w/5}
    {\node[vertex] (\name) at (1.5*\x,0){};
    \draw (1.5*\x,0) node[below] {$d_{\name}$};}
\node[vertex] (1) at (1.5,0){};
\draw (1.5,0) node[below]{$N$};
\draw (1)--(2)--(3);
\draw (1.5*3.5,0) node {$\dots$};
\node[circle,fill=black,minimum size=6pt,inner sep=0pt] (4) at (1.5*4,0){};
\draw (4)--(w);
\end{tikzpicture}
\end{center}
and $\mathbf d=(d_1, \dots, d_w)\in \mathbb Z^w$ is the dimension vector for $\mathbb L$ given by
\begin{equation}
d_i=\left\lbrace
\begin{array}{ll}
N & \text{ for } i=1,\\
\mathrm{rk}\left((A-\xi_1)\dots(A-\xi_{i-1})\right) & \text{ for } i\in \{2, \dots, w\}, \text{ where } A\in\mathcal C.
\end{array}
\right.
\end{equation}

\end{definition}

In particular, if we require the marking to be minimal, then $\mathbb L$ is fully determined by $\mathcal C$. If moreover $\mathcal C$ has a single eigenvalue (in particular if $\mathcal C$ is unipotent), then the dimension vector $\mathbf d$ is also fully determined by $\mathcal C$. 

Now, choose a minimal marking $\bm\xi$ of $\breve{\bm{\mathcal C}}$, i.e. the datum of a minimal marking $\xi_i$ of $\breve{\mathcal C}_{\cir{q_i}}$ for each $i\in \{1, \dots, p\}$.
For $i\in \{1, \dots, p\}$, let $(\mathbb L_i, \mathbf d_i)$ be the leg associated to $(\breve{\mathcal C}_{\cir{q_i}}, \xi_i)$.
We obtain the full diagram $\Gamma(\bm{\breve\Theta}, \bm{\breve{\mathcal C}})$, together with a dimension vector $\mathbf d$ for it, by fusing the extremity with dimension $m_i$ of the leg $(\mathbb L_i, \mathbf d_i)$ with the core vertex $\cir{q_i}$, for each $i\in \{1, \dots, p\}$. Notice that the diagram itself does not depend on the choice of marking. \\

To draw the diagrams in the case of irregular curves with boundary data of AD-$A$ type that we are interested in here, we just need the expressions for the edge/loop multiplicities $B_{\cir{q}, \cir{q'}}$ for the types of Stokes circles appearing in irregular curves with boundary data of generalized AD-$A$ type.

\begin{lemma}
Consider a global modified irregular class of AD-$A$, type, i.e.
\[
\bm{\breve\Theta}=\cir{q_1}_\infty+\dots \cir{q_m}_\infty+ \breve n_{\cir{0}_0} \cir{0}_0+ n_{\cir{0}_\infty}\cir{0}_\infty.
\]
with $\cir{q_i}$ of slope $k=\frac{s}{r}$ with $s,r$ coprime for $i\in \{1, \dots, m\}$. We have:
\begin{itemize}
\item $B_{{\cir{q_i}}_\infty {\cir{q_i}}_\infty}=(r-1)(s-r-1)$.
\item $B_{{\cir{0}}_\infty {\cir{0}}_\infty}=0$.
\item $B_{{\cir{0}}_0 {\cir{0}}_0}=0$.
\item $B_{{\cir{q_i}}_\infty {\cir{q_j}}_\infty}=r(s-r)$ if $i\neq j$.
\item $B_{{\cir{q_i}}_\infty {\cir{0}}_\infty}=(s-r)$
\item $B_{{\cir{q_i}}_\infty {\cir{0}}_0}=r$
\item $B_{{\cir{0}}_\infty {\cir{0}}_0}=1$.
\end{itemize}
In other words, the core nonabelian Hodge diagram $\Gamma_c(\bm{\breve\Theta})$ is given by Fig. \ref{fig:core_diagram_AD_type}.
\end{lemma}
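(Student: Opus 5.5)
The plan is to specialise the general definition of the integers $B_{\cir{q}\cir{q'}}$ from \cite{doucot2021diagrams} (see also \cite{boalch2020diagrams}) to the Stokes circles occurring in $\bm{\breve\Theta}$, and then compute the resulting irregularities directly from Puiseux expansions. I take the definition in the following form, which should be checked against \cite{doucot2021diagrams}, including the correct constant in the diagonal term.
\begin{itemize}
\item For two circles $\cir{q},\cir{q'}$ at $\infty$,
\[
B_{\cir{q}\cir{q'}}=\Irr\big(\mathrm{Hom}(\cir{q},\cir{q'})\big)-\Ram(q)\Ram(q')+\delta_{\cir{q}\cir{q'}},
\]
where $\Irr(\mathrm{Hom}(\cir{q},\cir{q'}))$ is the sum of the slopes of all differences $q'_b-q_a$. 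Here $q_a$ runs over the $\Ram(q)$ Galois determinations of $q$, and similarly for $q'_b$.
\item For a circle $\cir{q}$ at $\infty$ and the tame circle $\cir{0}_a$ at a finite point $a$, one has $B_{\cir{q}\cir{0}_a}=\Ram(q)$, because the Fourier dual picture puts these circles at the same point.
\item For tame circles at finite distance, $B_{\cir{0}_a\cir{0}_a}=0$.
\end{itemize}
The seven values in the statement then follow by computing the irregularities case by case.

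First I treat the diagonal term $B_{\cir{q_i}\cir{q_i}}$. Set $w=z^{1/r}$ and let $\zeta$ range over the $r$-th roots of unity. The $r^2-r$ nonzero differences are $q_i(\zeta w)-q_i(w)$ with $\zeta\neq1$. Each has leading term $b_{is}(\zeta^s-1)w^s$, which is nonzero because $\gcd(s,r)=1$ forces $\zeta^s\neq 1$. So each difference has slope exactly $s/r$, and $\Irr(\mathrm{End}\cir{q_i})=(r^2-r)\frac{s}{r}=(r-1)s$. Hence
\[
B_{\cir{q_i}\cir{q_i}}=(r-1)s-r^2+1=(r-1)(s-r-1).
\]
This value is automatically even, since one of $r$, $s$ is odd and $s-r-1\equiv s+r+1\pmod 2$.

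Next I handle the off-diagonal terms at infinity. For $i\neq j$, the hypothesis $b_{is}^r\neq b_{js}^r$ in Def. \ref{def:generalized_ADA_type} guarantees that every difference $q_j(\zeta' w)-q_i(\zeta w)$ has nonzero leading coefficient. So all $r^2$ differences have slope $s/r$, giving $\Irr=rs$ and $B_{\cir{q_i}\cir{q_j}}=rs-r^2=r(s-r)$. Between $\cir{q_i}_\infty$ and $\cir{0}_\infty$ there are $r$ differences of slope $s/r$, so $\Irr=s$ and $B=s-r$. For the tame circle at infinity, $\Irr=0$, so $B_{\cir{0}_\infty\cir{0}_\infty}=0-1+1=0$. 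The remaining entries come from the finite-distance rules above: $B_{\cir{q_i}_\infty\cir{0}_0}=\Ram(q_i)=r$, $B_{\cir{0}_\infty\cir{0}_0}=1$, and $B_{\cir{0}_0\cir{0}_0}=0$.

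The main obstacle is pinning down the exact normalisation of the diagonal term in the general definition, namely the ``$+1$'' (or equivalently a ``$2-\dots$'' form). I will fix it by consistency checks: with the right normalisation the tame self-loop must vanish, and the Airy case $r=2$, $s=3$ must give $B=0$ and a rigid diagram, i.e.\ a single vertex $\mathbf d$ with $2-(\mathbf d,\mathbf d)=0$. Once the normalisation is fixed, the only genuine input is the coprimality and leading-orbit argument above, which ensures no cancellation in leading terms.
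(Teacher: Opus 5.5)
Your proposal is correct and is essentially the paper's argument: the paper simply invokes the general edge/loop formulas of \cite{doucot2021diagrams}, and you specialise them and carry out the irregularity counts explicitly, using coprimality of $s,r$ for the loops and $b_{is}^r\neq b_{js}^r$ for the edges between wild circles. The diagonal normalisation $\Irr(\mathrm{Hom})-\Ram\,\Ram+1$, which you propose to fix via the tame-loop and Airy checks, is indeed the right one. The one weak step is your heuristic for $B_{\cir{q}\cir{0}_a}=\Ram(q)$. The Fourier transform brings $\cir{q_i}_\infty$ and $\cir{0}_0$ to a common point only when $k>1$, and it never does so for $\cir{0}_\infty$ and $\cir{0}_0$. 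So the values $r$ and $1$ should be taken directly from the reference's formula, or checked against the dimension count, rather than justified that way.
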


\begin{figure}[H]
\begin{tikzpicture}
\tikzstyle{vertex}=[circle,fill=black,minimum size=6pt,inner sep=0pt]
\node[vertex] (T0) at (0,0){};
\node[vertex] (Tinf) at (4,0){};
\node[vertex] (W1) at (0,2){};
\node[vertex] (W2) at (2,3){};
\node[vertex] (W3) at (4,2){};
\draw (T0) --node[midway, below]{$1$}(Tinf);
\draw (T0) --node[midway, left]{$r$}(W1);
\draw (T0) --node[near start, right]{$r$}(W2);
\draw (T0) --node[near start, below right]{$r$}(W3);
\draw (Tinf) --node[midway, right]{$s-r$}(W3);
\draw (Tinf) --node[near start, above left]{$s-r$}(W2);
\draw (Tinf) --node[near start, left]{$s-r$}(W1);
\draw (W1) --node[near end, left]{$r(s-r)$}(W2);
\draw (W1) --node[midway, above]{$r(s-r)$}(W3);
\draw (W2) --node[near start, right]{$r(s-r)$}(W3);
\draw (W1) to[out=90, in=45] (-0.5,2.5) to[out=225,in=180] (W1);
\draw (W2) to[out=45, in=0] (2,3.7) to[out=180,in=135] (W2);
\draw (W3) to[out=0, in=-45] (4.5,2.5) to[out=135,in=90] (W3);
\draw (-1.4, 2.8) node {$\frac{(r-1)(s-r-1)}{2}$};
\draw (5.4, 2.8) node {$\frac{(r-1)(s-r-1)}{2}$};
\draw (2, 4.2) node {$\frac{(r-1)(s-r-1)}{2}$};
\draw (-0.5, -0.5) node {$\cir{0}_0$};
\draw (4.5, -0.5) node {$\cir{0}_\infty$};
\draw (-1, 2) node {$\cir{q_1}_\infty$};
\draw (5, 2) node {$\cir{q_m}_\infty$};
\draw (2.7, 3.2) node {$\cir{q_i}_\infty$};
\end{tikzpicture}
\caption{The core nonabelian Hodge diagram of an irregular curve with boundary data of AD-$A$ type with paremeter $\mathcal T=(m, \frac{s}{r}, \mathcal C_0, \mathcal C_\infty)$ with $s,r$ coprime (drawn for $m=3$). The core vertices are in one-to-one correspondence with the Stokes circles $\cir{q_1}_\infty, \dots, \cir{q_1}_\infty$, $\cir{0}_0$, $\cir{0}_\infty$. The numbers at the middle of the edges/loops indicate the multiplicities (when no loops are drawn the multiplicity is zero). The full diagram is obtained by gluing to the vertex $\cir{0}_0$ a leg encoding $\tau(\mathcal C_0)$, and to $\cir{0}_\infty$ a leg encoding $\mathcal C_\infty$.}
\label{fig:core_diagram_AD_type}
\end{figure}

\begin{proof}
This just follows from the formulas for the edge/loop multiplicities in \cite{doucot2021diagrams}.
\end{proof}

Notice that the core diagram has no negative edges/loops if and only if $k\geq 1$.\\

\begin{lemma}
If $\mathcal T=(m, \frac{s}{r}, \mathcal C_0, \mathcal C_\infty)$ is an AD-$A$ parameter, and $\bm\Sigma$ is an irregular curve with boundary data with parameter $\mathcal T$, then the nonabelian Hodge diagram $\Gamma(\bm\Sigma)$ does not depend on $\bm\Sigma$ (so we can write $\Gamma(\bm\Sigma)=\Gamma(\mathcal T)$), and is constructed as follows:
\begin{enumerate}
\item take the core diagram of Fig. \ref{fig:core_diagram_AD_type};
\item glue to the vertex $\cir{0}_0$
the leg $\mathbb L_0$ associated to $\tau(\mathcal C_0)$;
\item glue to the vertex $\cir{0}_\infty$ the leg $\mathbb L_\infty$ associated to $\mathcal C_\infty$.
\end{enumerate}
In particular, if the conjugacy class $\tau(\mathcal C_0)$ (resp. $\mathcal C_\infty)$) has rank zero, the vertex $\cir{0}_0$ (resp. $\cir{0}_\infty$) has multiplicity zero, so we remove it.

Furthermore, a choice of minimal markings $\xi_0, \xi_\infty$ of $\tau(\mathcal C_0)$, $\mathcal C_\infty$ respectively determines a dimension vector $\mathbf d$ for $\Gamma(\mathcal T)$, taking the dimensions associated to the vertices $\cir{q_i}_\infty$ for $i\in \{1, \dots, m\}$ to be 1. 
\end{lemma}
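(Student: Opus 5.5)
The statement is essentially a specialization of the general construction of \cite{doucot2021diagrams} to modified formal data of generalized AD-$A$ type, so the plan is to unwind that construction step by step. First compute the modified formal data of $\bm\Sigma$. By Definition \ref{def:generalized_ADA_type} and the definition of modified formal data, they are
\[
\bm{\breve\Theta}=\cir{q_1}_\infty+\dots+\cir{q_m}_\infty+\breve N_0\cir{0}_0+(N-mr)\cir{0}_\infty,
\]
with $\breve N_0=\mathrm{rk}(\tau(\mathcal C_0))$. The conjugacy classes are $\breve{\mathcal C}_{\cir{0}_0}=\tau(\mathcal C_0)$ and $\breve{\mathcal C}_{\cir{0}_\infty}=\mathcal C_\infty$. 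The classes attached to the wild circles are classes in $\mathrm{GL}_1(\mathbb C)$, since each $\cir{q_i}_\infty$ has multiplicity one. The tame circle at $\infty$ is not truncated, because truncation only applies at finite distance.

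Next, since $\Gamma(\bm\Sigma)=\Gamma(\bm{\breve\Theta},\bm{\breve{\mathcal C}})$ is, by construction, the core $\Gamma_c(\bm{\breve\Theta})$ with legs glued, the core part is read off from the preceding lemma on the multiplicities $B_{\cir{q}\cir{q'}}$. This gives Fig. \ref{fig:core_diagram_AD_type}. The key observation is that these multiplicities depend only on $m$, $s$, $r$, and not on the coefficients $b_{il}$ of the $q_i$. Hence the core depends only on $\mathcal T$.

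The legs come next. The leg at $\cir{0}_0$ is that of $\tau(\mathcal C_0)$, and the leg at $\cir{0}_\infty$ is that of $\mathcal C_\infty$, both determined (as diagrams) by the conjugacy classes, hence by $\mathcal T$. The legs at $\cir{q_i}_\infty$ come from rank one classes $\{\lambda\}$, whose minimal marking has length one. So the leg is a single vertex of dimension $1$ fused with the core vertex, which adds nothing. This shows that $\Gamma(\bm\Sigma)$ depends only on $\mathcal T$ and has the stated description. If $\tau(\mathcal C_0)$ or $\mathcal C_\infty$ has rank zero, the corresponding circle has multiplicity zero in $\bm{\breve\Theta}$, so it is not a vertex at all. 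This is the removal statement.

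Finally, for the dimension vector, the general construction attaches to each leg the ranks $d_i$ defined from the marking. Choosing minimal markings $\xi_0,\xi_\infty$ therefore fixes these values on the two tame legs, with first entries $\breve N_0$ and $N-mr$. The remaining core vertices $\cir{q_i}_\infty$ receive their multiplicity $m_i=1$. The only potential subtlety is checking that the formulas of \cite{doucot2021diagrams} really are insensitive to the subleading coefficients of the $q_i$ and to the distinct-leading-term condition. This amounts to verifying that the cited multiplicities use only slopes and ramification. I would confirm this by citing the explicit formulas there, which depend only on the pairwise data of irregularities, ramification orders and the slope of $q_i-q_j$. The latter is again $s/r$, because $b_{is}^r\neq b_{js}^r$.
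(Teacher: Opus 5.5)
Your proposal is correct and follows the same route as the paper, which simply unwinds the definition. Like the paper, you pass to modified formal data (truncating only the tame circle at $0$), note that the wild circles have multiplicity one and so carry dimension $1$ with trivial legs, and glue the legs of $\tau(\mathcal C_0)$ and $\mathcal C_\infty$ to the core of Fig.~\ref{fig:core_diagram_AD_type}. Your extra check that the core multiplicities depend only on $(m,s,r)$ is a worthwhile addition that the paper leaves implicit. Strictly, it concerns the differences between all Galois conjugates of $q_i$ and $q_j$ (handled by $b_{is}^r\neq b_{js}^r$), and for the loops the differences between distinct conjugates of $q_i$ (handled by coprimality of $s$ and $r$).
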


In particular, in the type I case, the dimension vector $\mathbf d$ is fully determined by $\mathcal T$.

\begin{proof}
This follows immediately from the definition of nonabelian Hodge graphs. Indeed, denoting by $(\bm{\breve\Theta}, \bm{\breve{\mathcal C}})$ the modified formal data corresponding to $\bm\Sigma$, in $\breve{\bm\Theta}$ the irregular Stokes circles $\cir{q_i}_\infty$ for $i\in \{1, \dots, m\}$ have multiplicity 1, so the corresponding vertices have associated dimension 1. Furthermore, in $\bm{\breve{\mathcal C}}$ we have $\bm{\breve{\mathcal C}}_{\cir{0}_0}=\tau(\mathcal C_0)$, so the leg $\mathbb L_0$ has to be the one of the truncation $\tau(\mathcal C_0)$. 
\end{proof}

We say that a diagram $\Gamma=(B,V)$ is of AD-$A$ type if it is of the form $\Gamma(\mathcal T)$ for some AD-$A$ parameter $\mathcal T$. Clearly, given a diagram $\Gamma$, one can  determine whether it is of AD-$A$ type, and when so explicitly determine a type I AD-$A$ parameter $\mathcal T$ such that $\Gamma=\Gamma(T)$ ($\mathcal T$ is unique if $s<r$, and if $r<s$ it is unique up to the simultaneous exchange   $s\leftrightarrow s-r$ and $\mathbb L_0\leftrightarrow \mathbb L_\infty$).

\subsection{Nonnegative AD-$A$ nonabelian Hodge diagrams}

\subsubsection{Type I} Let us first discuss the type I case.

\begin{proposition}
\label{prop:nonnegative_diagram_type_I}
Let $\mathcal T=(m,k, [Y^0], [Y^\infty])$ be a generalized type I AD-$A$ parameter, such $k=\frac{s}{r}$ with $s,r$ coprime and $s>1$. Then there is a unique nonabelian Hodge  diagram of the form $\Gamma(\mathcal T')$, with $\mathcal T' \in \mathcal O(\mathcal T)$ with no negative edge/loops that we denote by $\Gamma_+(\mathcal T)$. Explicitly, we have
\[
\Gamma_+(\mathcal T)=
\Gamma((\widetilde{F}^-)^\kappa\cdot \mathcal T),
\]
where $\kappa:=\lfloor k^{-1}\rfloor$.
\end{proposition}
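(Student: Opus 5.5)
The plan is to reduce the statement to the combinatorics of slopes in the orbit and the sign pattern of the edge multiplicities of the core diagram. By the lemma on edge/loop multiplicities, the only entries of $\Gamma(\mathcal T')$ that can be negative are $B_{\cir{q_i}\cir{q_i}}=(r'-1)(s-r'-1)$, $B_{\cir{q_i}\cir{q_j}}=r'(s-r')$ and $B_{\cir{q_i}\cir{0}_\infty}=s-r'$, where $k'=s/r'$ is the slope of $\mathcal T'$. The legs are Dynkin diagrams of type $A$ and contribute no negative entries. So $\Gamma(\mathcal T')$ has no negative edges or loops if and only if $r'\leq s$ (when $r'\le s-1$ all three entries are $\ge 0$; when $r'=s$ we must have $s=r'=1$, which is excluded). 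Conversely, if $r'>s$, then for $m\geq 2$ the entry $r'(s-r')$ is negative. For $m=1$, the edge $\cir{q_1}\cir{0}_\infty$ has multiplicity $s-r'<0$ whenever the vertex $\cir{0}_\infty$ is present. If that vertex is absent, the loop multiplicity $(r'-1)(s-r'-1)$ is still negative, since $r'\ge 2$ and $s-r'-1<0$. Hence $\Gamma(\mathcal T')$ is nonnegative if and only if $k'>1$.

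Next I would invoke the theorem describing the orbit (Fig. \ref{fig:structure_orbit}). Since $s>1$, the slopes in $\mathcal O(\mathcal T)$ exceeding $1$ are exactly $s/\rho$ and $s/(s-\rho)$. These are realized by $\mathcal T_+=(\widetilde F^-)^\kappa\cdot\mathcal T$ and $\mathcal T_-=F\cdot\mathcal T_+$, where $\kappa=\lfloor r/s\rfloor=\lfloor k^{-1}\rfloor$, obtained by following the blue arrows $\widetilde F^-$ down the left column. For $s>2$ these two elements are the only ones with $k'>1$. For $s=2$ there may be two elements with slope $2$, but they are again $\mathcal T_+$ and $F\cdot\mathcal T_+$, exchanged by $F$. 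In every case, the parameters in the orbit with nonnegative diagram are exactly $\mathcal T_+$ and $F\cdot\mathcal T_+$.

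The remaining point, and the only conceptual one, is to show that these two parameters give the same diagram. This follows from the Fourier invariance $\Gamma(E,\nabla)=\Gamma(F\cdot(E,\nabla))$ of \cite{doucot2021diagrams}, which holds at the level of formal data. I would also check it directly as a sanity test, using Prop. \ref{prop:transformation_parameter_elementary_operation_type_I}(1). Under $F$ the slope changes from $s/\rho$ to $s/(s-\rho)$, which exchanges $r'\leftrightarrow s-r'$ in the core, so that $r'$ and $s-r'$ swap roles on the edges to $\cir{0}_0$ and $\cir{0}_\infty$. At the same time the legs $\tau(\mathcal C_0)$ and $\mathcal C_\infty$ are exchanged, because $F\cdot\mathcal T$ has $\tau(\mathcal C'_0)=\mathcal C_\infty$ and $\mathcal C'_\infty=\tau(\mathcal C_0)$. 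The loop multiplicity is symmetric under this exchange, so the two diagrams are isomorphic. This gives uniqueness, with $\Gamma_+(\mathcal T)=\Gamma((\widetilde F^-)^\kappa\cdot\mathcal T)$.

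The main obstacle I expect is the careful treatment of the degenerate vertices when ruling out the $k'<1$ elements. When $m=1$ and $\mathcal C_\infty$ has rank zero, the only possibly negative entry is the loop. So I would handle that case through $(r'-1)(s-r'-1)<0$ for $r'>s$, as above, being careful with the boundary case $r'=s+1$.
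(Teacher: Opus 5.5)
Your proposal is correct and follows essentially the same route as the paper. Nonnegativity of $\Gamma(\mathcal T')$ is equivalent to $k'>1$, the orbit structure of Fig.~\ref{fig:structure_orbit} shows that the only such elements are $(\widetilde F^-)^\kappa\cdot\mathcal T$ and its Fourier transform, and Fourier invariance of the diagram gives uniqueness. (Your case split on $m$ is unnecessary: the loop $(r'-1)(s-r'-1)$ at each always-present wild vertex is already negative for every $r'>s$.)
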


\begin{proof}
The integer $\kappa:=\lfloor k^{-1}\rfloor$ is the remainder in the euclidean division of $r$ by $s$, i.e. we have $r=\kappa s+\rho$, with $1\leq \rho\leq s-1$. The result follows directly from the structure of the orbit given by Fig. \ref{fig:structure_orbit}: in any case, there are at most two elements $\mathcal T'\in O(\mathcal T)$ with slope $k'\geq 1$, one of them being given by $(\widetilde{F}^-)^\kappa\cdot \mathcal T)$, and when they are exactly two such elements, they are exchanged by $F$. The conclusion follows by invariance of the diagram under Fourier transform.
\end{proof}

We can determine the nonnegative diagram explicitly:

\begin{proposition}
\label{prop:explicit_nonnegative_diagram_type_I}
Let $\mathcal T=(m,k,[Y^0], [Y^\infty])$ be a generalized type I AD-$A$ parameter. Let us write $k=\frac{s}{r}$, and $r=\kappa s+\rho$ with $\kappa\in \mathbb Z$, $\rho\in \{1, \dots, s-1\}$. Then we have $\Gamma_+(\mathscr T)=\Gamma(\mathcal T')$ with
\[
\mathcal T'=\left(m, \frac{s}{\rho}, [Y^0_\kappa], [\widetilde Y^0_\kappa, Y^\infty]\right),
\]
where the Young diagrams $[Y^0_\kappa]$, $[\widetilde Y^0_\kappa, ]$ are defined as in Thm. \ref{thm:intermediate_duality_type_I}. In particular if $k>1$, we have $\Gamma_+(\mathcal T)=\Gamma(\mathcal T)$.
\end{proposition}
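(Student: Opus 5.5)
The plan is to compute $(\widetilde F^-)^\kappa\cdot\mathcal T$ explicitly by iterating part (3) of Prop. \ref{prop:transformation_parameter_elementary_operation_type_I}, and then to invoke Prop. \ref{prop:nonnegative_diagram_type_I}, which identifies $\Gamma_+(\mathcal T)$ with $\Gamma((\widetilde F^-)^\kappa\cdot\mathcal T)$. First, the slopes. Each application of $\widetilde F^-$ sends $\frac{s}{r'}$ to $\frac{s}{r'-s}$. Starting from $r=\kappa s+\rho$, after $l$ steps the slope is therefore $\frac{s}{(\kappa-l)s+\rho}$, which is $<1$ for $l<\kappa$. Hence each step is allowed, and after $\kappa$ steps we reach slope $\frac{s}{\rho}>1$.

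Next, the Young diagrams, handled by induction on $l\in\{0,\dots,\kappa\}$. The claim is
\[
(\widetilde F^-)^l\cdot\mathcal T=\left(m,\frac{s}{(\kappa-l)s+\rho},[Y^0_l],[\widetilde Y^0_l,Y^\infty]\right).
\]
Here $[Y^0_l]=[\tau^l(Y^0)]$, and $[\widetilde Y^0_l]$ has column heights $ms-h_l\geq\dots\geq ms-h_1$, where $h_1\geq h_2\geq\dots$ are the column heights of $Y^0$ (with $h_i=0$ beyond the last column). Part (3) of Prop. \ref{prop:transformation_parameter_elementary_operation_type_I} removes the first column of the diagram at $0$, which at step $l$ has height $h_{l+1}$. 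It then prepends a column of height $ms-h_{l+1}$ to the diagram at $\infty$. Since $ms-h_{l+1}\geq ms-h_l$, this new column really is the leftmost, so the diagram at $\infty$ becomes $[ms-h_{l+1},\widetilde Y^0_l,Y^\infty]=[\widetilde Y^0_{l+1},Y^\infty]$.

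The one point to verify is that $ms-h_{l+1}$ also dominates every column of $Y^\infty$, so that the notation $[\widetilde Y^0_\kappa,Y^\infty]$ is a genuine Young diagram with the $\widetilde Y$ columns first. I will handle this with a rank count, and I expect this to be the main obstacle. For a parameter of slope $s/r'$ we have $\mathrm{rk}(Y^0)=mr'+\mathrm{rk}(Y^\infty)$, so every column of $Y^\infty$ has height at most $\mathrm{rk}(Y^\infty)\leq N-mr'$. One then needs $ms-h_1(Y^0)\geq h_1(Y^\infty)$, which should follow from effectivity (irreducibility) of the connection. If this proves delicate, the fallback is to read $[\widetilde Y^0_\kappa,Y^\infty]$ as the diagram whose column multiset is the union of the two, which is how part (3) of the Proposition should be interpreted anyway, since prepending a column of height $h$ to $Y$ means the diagram with columns $\{h\}\cup\mathrm{cols}(Y)$.

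Finally, setting $l=\kappa$ gives exactly the stated $\mathcal T'$. If $k>1$ then $\kappa=0$, so $\mathcal T'=\mathcal T$ and $\Gamma_+(\mathcal T)=\Gamma(\mathcal T)$.
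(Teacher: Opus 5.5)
Your proposal is correct and is essentially the paper's argument: the paper also iterates part (3) of Prop.~\ref{prop:transformation_parameter_elementary_operation_type_I} $\kappa$ times, exactly as in Lemma~\ref{thm:intermediate_duality_type_I}, and combines the result with Prop.~\ref{prop:nonnegative_diagram_type_I}. The column-ordering issue you flag as the main obstacle is harmless: during these $\kappa$ steps only the diagram at $0$ is ever truncated, so reading $[\widetilde Y^0_\kappa, Y^\infty]$ as a multiset of columns suffices, and in any case $ms-h_1(Y^0)\geq h_1(Y^\infty)$ holds automatically, because the extended class in part (3) is the formal monodromy of an actual connection, given effectivity.
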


\begin{proof}
This is completely similar to Thm. \ref{thm:intermediate_duality_type_I}, using the third part of Prop. \ref{prop:transformation_parameter_elementary_operation_type_I}.
\end{proof}

\begin{remark}
\label{remark:case_irregularity_1}
In the case $s=1$, i.e. $k=\frac{1}{r}$, then for the parameter $\mathcal T'=\Gamma((\widetilde{F}^-)^{r-1}\cdot \mathcal T)$ the wild Stokes circles are of slope 1, i.e. of the form $\cir{q_i}_\infty=\cir{a_i z}_\infty$, with $a_i\in \mathbb C^*$ for $i\in \{1, \dots, m\}$. Now, if $\bm\Sigma'$ is an irregular curve with boundary data with parameter $\mathcal T'$, then $F\cdot \bm\Sigma'$ is no longer of generalized AD-$A$ type. Indeed by the stationary phase formula we have $F\cdot \cir{a_i z}_\infty=\cir{0}_{a_i}$, so $F\cdot \bm\Sigma'$ has $m+2$ regular singularities, at the points $0,\infty, a_1, \dots, a_m$. In particular, $\Gamma_+(\mathcal T):=\Gamma(\bm\Sigma')=\Gamma(F\cdot\bm\Sigma')$ is a star-shaped graph. 
\end{remark}

We notice that the nonnegative diagrams correspond to the 3d mirror quivers of the corresponding Argyres--Douglas theories given in the physics literature.

\begin{proposition} 
\label{prop:3d_mirrors_are_nonneg_diagrams_type_I}
Let $\mathscr T=(m,k, [Y])$ be a reduced type I AD parameter. Then in all cases considered in \cite{beem2025simplifying}, the quiver describing the 3d mirror of the corresponding Argyres--Douglas theory is the nonnegative nonabelian Hodge graph $\Gamma_+(\mathscr T)$, with its uniquely defined dimension vector. 
\end{proposition}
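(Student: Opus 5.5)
The plan is to compute $\Gamma_+(\mathscr T)$ explicitly and compare it, case by case, with the 3d mirror quivers tabulated in \cite{beem2025simplifying} and the references therein (in particular \cite{xie20213d, bourget2024decay}).

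\textbf{Step 1: the nonnegative diagram.} Starting from the reduced parameter $\mathscr T=(m,\frac sr,[Y])$, I take the non-reduced type I parameter $\mathcal T=(m,\frac sr,[Y],[\varnothing])$. If $s>1$, I write $r=\kappa s+\rho$. By Prop. \ref{prop:explicit_nonnegative_diagram_type_I}, $\Gamma_+(\mathscr T)=\Gamma(\mathcal T')$ with
\[
\mathcal T'=\left(m,\tfrac{s}{\rho},[Y_\kappa],[\widetilde Y_\kappa]\right).
\]
Here $[Y_\kappa]$ is $[Y]$ with its first $\kappa$ columns removed, and $[\widetilde Y_\kappa]$ has columns $ms-h_\kappa\ge\dots\ge ms-h_1$.

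The diagram then reads off from the core diagram of Fig. \ref{fig:core_diagram_AD_type} with $(s,r)$ replaced by $(s,\rho)$:
\begin{itemize}
\item $m$ vertices of dimension $1$, each carrying $\frac{(\rho-1)(s-\rho-1)}2$ loops;
\item $\rho(s-\rho)$ edges between any two of these vertices;
\item $\rho$ edges from each of them to $\cir0_0$, and $s-\rho$ edges from each of them to $\cir0_\infty$;
\item one edge between $\cir0_0$ and $\cir0_\infty$;
\item a leg at $\cir0_0$ encoding $\tau(Y_\kappa)$, with the vertex $\cir0_0$ itself of dimension $\mathrm{rk}(\tau(Y_\kappa))$;
\item a leg at $\cir0_\infty$ encoding $\widetilde Y_\kappa$.
\end{itemize}
All edge multiplicities are nonnegative because $s/\rho>1$. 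The dimension vectors along the legs are the partial ranks $\mathrm{rk}(\tau^j(\cdot))$.

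If $s=1$, I instead use Remark \ref{remark:case_irregularity_1}, which gives a star-shaped quiver.

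\textbf{Step 2: the physics side.} I rewrite the 3d mirror quivers of $D_p^b(\mathfrak{sl}_N,[Y])$ in our variables using the dictionary of Fig. \ref{fig:dictionary_notations}: $p=ms$, $b=mr$, $q=s$. These quivers consist of $m$ U(1) gauge nodes (or one node when $m=1$) with multiple edges and adjoint loops, joined to two $T_\sigma$-type tails. The matching then reduces to three checks:
\begin{itemize}
\item the edge counts among the U(1) nodes and from them to the tail heads are $\rho(s-\rho)$, $\rho$ and $s-\rho$, possibly after swapping the two tails;
\item the two tails are the legs of $\tau(Y_\kappa)$ and $\widetilde Y_\kappa$;
\item the ranks agree.
\end{itemize}

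It is useful to verify first, through the identity $2-(\mathbf d,\mathbf d)$, that the quaternionic dimension of the wild character variety equals the Coulomb-branch dimension of the mirror, i.e. the Higgs-branch dimension of the AD theory. This is a consistency check that fixes normalizations before comparing the quivers literally.

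\textbf{Main obstacle.} The hardest step is the literal comparison with the physics quivers when $[Y]$ is not full, because the mirror is then produced there by decay-and-fission algorithms rather than by a closed formula. My approach would be to establish the match for the maximal puncture $[Y]=[1^N]$ from the closed formulas in \cite{xie20213d}. I would then show that the column-by-column transfer of Lemma \ref{thm:intermediate_duality_type_I} agrees with the effect on tails described in \cite{beem2025simplifying}: both sides should implement partial closure of the regular puncture as an identical truncation of the $T_\sigma$ tails. If that structural argument fails, I would fall back on checking the families treated explicitly in \cite{beem2025simplifying} one by one, which is what the qualifier ``in all cases considered'' in the statement permits.
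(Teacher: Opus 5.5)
Your plan is essentially the paper's own proof: read off $\Gamma_+(\mathscr T)=\Gamma(\mathcal T')$ from Prop.~\ref{prop:explicit_nonnegative_diagram_type_I} and the core diagram, match it with the full-puncture mirrors (Figs.~2 and~3 of \cite{beem2025simplifying}, via $\kappa,\rho,s,ms\leftrightarrow\mu,\nu,q,p$), and for non-full punctures check the examples treated explicitly there, as the paper does for $D_3(\mathfrak{sl}_7,[2,1^5])$ against Eq.~(4.9) of that reference. One correction to your optional dimension check: $2-(\mathbf d,\mathbf d)$ is the complex dimension of the quiver variety, i.e.\ of the \emph{Higgs} branch of the mirror quiver (whose quaternionic dimension is the Coulomb-branch rank of the AD theory), not of the mirror's Coulomb branch, so as you phrase it the check would fail --- e.g.\ for $D_2(\mathfrak{sl}_3,[1^3])$, where $\Gamma_+$ is the triangle of three $\mathrm U(1)$ nodes, the wild character variety has quaternionic dimension $1$ while the mirror Coulomb branch has quaternionic dimension $2$.
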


\begin{proof}
In the case $k>1$ and $[Y]=[1^N]$, from Prop. \ref{prop:explicit_nonnegative_diagram_type_I} and Fig. \ref{fig:core_diagram_AD_type}, we immediately obtain the 3d mirror described on Fig. 2 in \cite{beem2025simplifying} (since to get the leg associated to $\cir{0}_0$, we have to pass to the modified global irregular class, i.e. to $[\tau(Y)]=[1^{N-1}]$).
For $k<1$ and $[Y]=[1^N]$, similarly we obtain the type I specialization of Fig. 3 in \cite{beem2025simplifying} (i.e. with $N-nm=0$ is the notations of \emph{loc.~cit.}), using that we have the following dictionary between notations

\begin{center}
\vspace{0.3cm}
\begin{tabular}{|c|c|}
\hline
Irreg. curve  & AD theory\\
\hline
$\kappa$ & $\mu$\\
\hline
$\rho$ & $\nu$\\
\hline
$s$ & $q$\\
\hline
$ms$ & $p$\\
\hline
\end{tabular}
\end{center}
For more general Young diagrams, $\Gamma_+(\mathscr T)$ matches with the 3d mirrors found by quiver subtraction. For instance for instance for the Argyres--Douglas theory $D_3(\mathfrak{sl}_7, [2, 1^5])$ considered in Example 5 of \cite{beem2025simplifying}, the corresponding reduced parameter is $\mathscr T=(1, \frac{3}{7},[2, 1^5])$, from Prop. \ref{prop:explicit_nonnegative_diagram_type_I} the parameter corresponding to the minimal diagram is $\mathcal T'=(1, 3, [1^3], [2,1])$, hence $\Gamma_+(\mathscr T)=\Gamma(\mathcal T')$ is the 3d mirror shown in Eq. (4.9) of \emph{loc.~cit.}
\end{proof}

\subsubsection{General case}

Let us now discuss the general case. Here since there are several possible choices of twist $\alpha$, in general there is no unique parameter with nonnegative nonabelian Hodge diagram. However, in the standard case corresponding to Argyres--Douglas theories, there is a unique preferred choice leading to a nonnegative diagram with minimal number of vertices.

\begin{proposition}
\label{prop:nonnegative_diagram_general_case}
Let $\mathcal T=(m,k, \mathcal C_0, \mathcal C_\infty)$ be a generalized  AD-$A$ parameter, such $k=\frac{s}{r}$ with $s,r$ coprime and $s>1$. Then there is a unique nonabelian Hodge  diagram of the form $\Gamma(\mathcal T')$, with $\mathcal T' \in \mathcal O(\mathcal T)$, such that $\Gamma(\mathcal T')$ has no negative edges/loops, and such that the number of vertices of $\Gamma(\mathcal T')$ is minimal. We denote it by $\Gamma_+(\mathcal T)$. Explicitly, we have
\[
\Gamma_+(\mathcal T)=
\Gamma(T_{\alpha_{\kappa+1}}\widetilde{F}_{\alpha_\kappa}^-\dots \widetilde{F}_{\alpha_1}^- \cdot \mathcal T),
\]
where $\kappa:=\lfloor k^{-1}\rfloor$, and  any $\alpha_1, \dots,\alpha_\kappa$ which are chosen to be eigenvalues of $\mathcal C_0$ `as much as possible', that is if $(\beta_1, \dots, \beta_L)$ is the list of eigenvalues of $\mathcal C_\infty$ counted with multiplicities, i.e. such that the number of columns in the Young diagram encoding the nilpotent part of $(A-h_1\Id)\dots (A-\alpha_\kappa\Id)$ is minimal. 
\end{proposition}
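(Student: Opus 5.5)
First I reduce the nonnegativity condition to a condition on the slope, then choose the twists to minimize the number of vertices. By the lemma computing the core diagram (Fig.~\ref{fig:core_diagram_AD_type}), $\Gamma(\mathcal T')$ has no negative edges or loops exactly when the slope $k'=\frac{s}{r'}$ of $\mathcal T'$ satisfies $k'\geq 1$, i.e. $r'<s$. Legs only add nonnegative edges, so this is the only constraint. By the proposition on slopes in the orbit (general case), every $\mathcal T'\in\mathcal O(\mathcal T)$ has slope $\frac{s}{ls\pm\rho}$. The slope condition $k'\geq 1$ therefore leaves only the two slopes $\frac{s}{\rho}$ and $\frac{s}{s-\rho}$, as in Fig.~\ref{fig:structure_orbit}.

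Next I determine which parameters reach these slopes. The slope transformations of Prop.~\ref{prop:transf_parameter_elementary_transf_twist} are independent of the twists. Hence any path from $\mathcal T$ to slope $\frac{s}{\rho}$ must, after cancelling back-and-forth moves, consist of $\kappa$ steps $\widetilde F^-_{\alpha_i}$ followed by a twist. Slope $\frac{s}{s-\rho}$ is reached from these by one $F_\alpha$. Here $\kappa=\lfloor k^{-1}\rfloor$ from the Euclidean division $r=\kappa s+\rho$.

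Since $F_\alpha=T_\alpha F T_{\alpha^{-1}}$, and the diagram is invariant under $F$ (\cite{doucot2021diagrams}) and under Kummer twists (a twist only multiplies conjugacy classes by a scalar, which leaves the legs unchanged), it suffices to consider $T_{\alpha_{\kappa+1}}\widetilde F^-_{\alpha_\kappa}\cdots\widetilde F^-_{\alpha_1}\cdot\mathcal T$. I will argue that paths with detours do not produce new diagrams. One way is to note that $\widetilde F^+_\beta$ followed by $\widetilde F^-_\beta$ returns the same class data, by the truncation and extension formulas. The other is simply to allow arbitrary sequences and minimize over them; this is the approach I will take, to avoid proving the cancellation claim.

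Then I compute the diagram explicitly. Iterating part (3) of Prop.~\ref{prop:transf_parameter_elementary_transf_twist}, each step $\widetilde F^-_{\alpha_i}$ replaces $\mathcal C_0$ by $\tau_{\alpha_i}(\mathcal C_0)$ and $\mathcal C_\infty$ by $\varepsilon_{ms-h,\alpha_i}(\mathcal C_\infty)$. The core diagram is the same for all choices of twists, with $m$ wild vertices and the two tame vertices. So the number of vertices is $m+2$ plus the lengths of the two legs. These lengths are the minimal-polynomial degrees of $\tau(\mathcal C'_0)$ and $\mathcal C'_\infty$, together with the removal of a tame vertex when its rank is $0$.

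The main obstacle is the combinatorial minimization over $(\alpha_1,\dots,\alpha_\kappa)$. Each $\tau_{\alpha_i}$ with $\alpha_i$ an eigenvalue of the current $\mathcal C_0$ deletes a column and lowers the rank. An extension $\varepsilon_{\cdot,\alpha_i}$ at infinity with a fresh eigenvalue adds a new Jordan block and hence increases the degree of the minimal polynomial. Choosing $\alpha_i$ among the existing eigenvalues of $\mathcal C_\infty$, or of $\mathcal C_0$, avoids creating new eigenvalues.

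I would first treat the standard case. There $\mathcal C_0$ is unipotent and $\mathcal C_\infty$ is regular semisimple, so the optimal choice is $\alpha_i=1$ as long as columns of $[Y]$ remain. I would show by an exchange argument that each deviation from this greedy choice increases the leg length by at least one. Finally, uniqueness of the diagram follows from the fact that the optimal choices yield the same multiset of column data up to permutation of eigenvalues, and the diagram (without dimension vector) depends only on the minimal-polynomial degrees.
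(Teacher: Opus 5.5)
Your outline matches the paper's: slope $\geq 1$ forces the bottom row of Fig.~\ref{fig:structure_orbit}, you reduce to the path $\widetilde F^-_{\alpha_\kappa}\cdots\widetilde F^-_{\alpha_1}$, and you read off the legs from part (3) of Prop.~\ref{prop:transf_parameter_elementary_transf_twist}. But the two facts that carry the argument are wrong in your version.

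\textbf{Twist invariance fails.} $\Gamma$ is not invariant under Kummer twists. The leg at $\cir{0}_0$ is built from $\tau(\mathcal C_0)$, the truncation at the eigenvalue $1$. So $T_\gamma$ changes its length by one, and changes the rank at $\cir{0}_0$, according to whether $1$ is an eigenvalue of $\gamma\,\mathcal C_0$. For example, for unipotent $\mathcal C_0$ with $L$ columns the leg has $L-1$ vertices, but it has $L$ vertices after twisting by $\gamma\neq 1$. This is exactly why the statement contains the final twist $T_{\alpha_{\kappa+1}}$. It is chosen so that $1$ is an eigenvalue of the final $\mathcal C_0$, so that passing to $\breve{\mathcal C}_0$ removes one more vertex. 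Your minimization never involves $\alpha_{\kappa+1}$; by your invariance claim it would be irrelevant. So your argument yields neither the minimum nor the displayed formula. Where you did use twist invariance, namely to discard slope $\frac{s}{s-\rho}$, $F$-invariance alone suffices. $F=F_1$ is allowed there, and it sends each such element of $\mathcal O(\mathcal T)$ to one of slope $\frac{s}{\rho}$ with the same diagram.

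\textbf{The accounting at $\infty$ is wrong, and so is your ``main obstacle''.} $\varepsilon_{ms-h,\alpha}$ prepends a column to $[Y^\infty_\alpha]$. So, when $ms-h>0$, the multiplicity of $\alpha$ in the minimal polynomial of $\mathcal C_\infty$, and hence the length of $\mathbb L_\infty$, goes up by exactly one. This happens whether or not $\alpha$ was already an eigenvalue of $\mathcal C_\infty$, so choosing $\alpha_i$ among eigenvalues of $\mathcal C_\infty$ gains nothing. The only lever is at $0$: $\tau_{\alpha_i}$ lowers the degree of the minimal polynomial of $\mathcal C_0$ by one iff $\alpha_i$ is an eigenvalue of the current $\mathcal C_0$. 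Write $\delta$ for the degree of the minimal polynomial. The number of vertices of $\Gamma(T_\gamma\widetilde F^-_{\alpha_\kappa}\cdots\widetilde F^-_{\alpha_1}\cdot\mathcal T)$ is then
\[
m+\delta(\mathcal C_0)+\delta(\mathcal C_\infty)+\kappa-\#\{i:\ \alpha_i \text{ is an eigenvalue of the current } \mathcal C_0\}-\epsilon_\gamma,
\]
where $\epsilon_\gamma=1$ iff $1$ is an eigenvalue of the final twisted $\mathcal C_0$. This per-step count is the paper's proof:
\begin{itemize}
\item The greedy choice is available at every step, since the current $\mathcal C_0$ has positive rank.
\item It is optimal term by term, so no exchange argument is needed.
\item The diagram is unique because the leg lengths are then fixed. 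It is not because the resulting conjugacy classes agree; different eigenvalue choices generally give different classes.
\end{itemize}

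\textbf{Detours.} You announce a minimization over arbitrary sequences, noting correctly that back-and-forth moves with different twists do not cancel. You then only carry it out over $(\alpha_1,\dots,\alpha_\kappa)$. The paper's argument also only treats the direct path, so the two points above are the decisive problems.
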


\begin{proof}

The reasoning is similar to Prop. \ref{prop:nonnegative_diagram_type_I}. The main point is that requiring the number of vertices of $\Gamma(\mathcal T')$ is minimal necessitates taking the $\alpha_i$ to be eigenvalues of $\mathcal C_0$. This follows directly from the following observation: by the third part Prop. \ref{prop:transf_parameter_elementary_transf_twist}, applying $(\widetilde{F}^-_\alpha)$ transforms $\mathcal C_0$ into its truncation $\tau_\alpha(\mathcal C_0)$, and $\mathcal C_\infty$ into $\varepsilon_{(ms-h),\alpha}(\mathcal C_\infty)$. At the level of the corresponding legs $\mathbb L_0$ and $\mathbb L_\infty$, this has the following effect: it always adds one vertex to $\mathbb L_\infty$, and for $\mathbb L_0$, it removes a vertex if $\alpha$ is an eigenvalue of $\mathcal C_0$, and does not change $\mathbb L_0$ otherwise (in particular this implies that $\Gamma(\mathcal T')$ does not depend on the choice of eigenvalue at each step). Similarly, to have a diagram with a minimal number of vertices, the final twist $\widetilde{F}_{h_\kappa}$ is required, so that, when constructing the diagram, passing to the modified conjugacy class $\breve{\mathcal C}_0$ removes one further vertex from  the leg $\mathbb L_0$. 
\end{proof}

\begin{remark}
If $\mathcal C_0$ is unipotent (so in particular in the standard case), there is a unique choice of $\alpha_1, \dots, \alpha_{\kappa+1}$, so the dimension vector of $\Gamma_+$ is also uniquely determined.
\end{remark}

This allows us to obtain an explicit formula for the minimal diagram in the standard case:

\begin{proposition}
\label{prop:explicit_nonnegative_diagram_general_case}
Let $\mathscr T=(m,k,[Y])$ be a standard AD-$A$ parameter. Let us write $k=\frac{s}{r}$, and $r=\kappa s+\rho$ with $\kappa\in \mathbb Z$, $\rho\in \{1, \dots, s-1\}$. Then we have $\Gamma_+(\mathscr T)=\Gamma(\widetilde{\mathcal T})$ with
\[
\widetilde{\mathcal T}=\left(m, \frac{s}{\rho}, [Y_\kappa], [\widetilde Y_\kappa, 1^{N-mr}]\right),
\]
where $N=\rm{rk}(Y)$, and the Young diagrams $[Y_\kappa]$, $[\widetilde Y_\kappa]$ are defined as in Lemma \ref{thm:intermediate_duality_type_I}.
\end{proposition}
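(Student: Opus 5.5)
The plan is to start from the explicit description in Prop. \ref{prop:nonnegative_diagram_general_case} and then compute the resulting parameter step by step using Prop. \ref{prop:transf_parameter_elementary_transf_twist}.

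Throughout, $\mathscr T=(m,k,[Y])$ is standard, so that:
\begin{itemize}
\item $\mathcal C_0=\{(1,[Y])\}$;
\item $\mathcal C_\infty$ is regular semisimple of rank $N-mr$, with eigenvalues $\beta_1,\dots,\beta_{N-mr}$ that are pairwise distinct and all different from $1$.
\end{itemize}

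\textbf{Step 1: choice of twists.} Since $\mathcal C_0$ is unipotent, the requirement in Prop. \ref{prop:nonnegative_diagram_general_case} that the $\alpha_i$ be eigenvalues of $\mathcal C_0$ forces $\alpha_1=\dots=\alpha_{\kappa+1}=1$. This is the uniqueness noted in the remark following that proposition. Hence
\[
\Gamma_+(\mathscr T)=\Gamma\big((\widetilde F^-_1)^\kappa\cdot\mathcal T\big),
\]
because $T_1$ is the identity. Each $\widetilde F^-_1=MF$ is allowed: the slopes along the way are $\frac{s}{(\kappa-l)s+\rho}<1$ for $0\le l<\kappa$, and the last step lands on slope $\frac{s}{\rho}>1$, exactly as in Fig. \ref{fig:structure_orbit}.

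\textbf{Step 2: induction on $l\in\{0,\dots,\kappa\}$.} I will show that
\[
(\widetilde F^-_1)^l\cdot\mathcal T=\left(m,\tfrac{s}{(\kappa-l)s+\rho},\{(1,[Y_l])\},\{(1,[\widetilde Y_l]),(\beta_1,[1]),\dots,(\beta_{N-mr},[1])\}\right).
\]
The inductive step uses the third part of Prop. \ref{prop:transf_parameter_elementary_transf_twist} with $\alpha=1$.
\begin{itemize}
\item The height of the first column of $[Y_l]$ is $h_{l+1}$, so $\tau_1$ removes that column from the diagram at $0$.
\item The map $\varepsilon_{ms-h_{l+1},1}$ prepends a column of height $ms-h_{l+1}$ to the eigenvalue-$1$ block at $\infty$, and leaves the $\beta_j$ blocks unchanged.
\item Since $h_{l+1}\le h_l$, the new column is at least as high as the previous ones. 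So the new eigenvalue-$1$ diagram is exactly $[\widetilde Y_{l+1}]$, with heights $(ms-h_{l+1})\ge\dots\ge(ms-h_1)$.
\end{itemize}
This is the same bookkeeping as in Lemma \ref{thm:intermediate_duality_type_I}, with the semisimple part at $\infty$ simply carried along.

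\textbf{Step 3: identifying the diagram.} The resulting parameter has the non-unipotent class
\[
\mathcal C'_\infty=\{(1,[\widetilde Y_\kappa]),(\beta_j,[1])_j\}
\]
at infinity. The statement instead writes the unipotent diagram $[\widetilde Y_\kappa,1^{N-mr}]$, so I must check that both give the same diagram and dimension vector. This is the main point to get right. By the lemma describing $\Gamma(\mathcal T)$, the diagram depends only on:
\begin{itemize}
\item the core, which is determined by $m,s,\rho$ and the ranks;
\item the legs of $\tau(\mathcal C_0)$ and $\mathcal C_\infty$, together with their dimension vectors given by ranks of products $(A-\xi_1)\cdots(A-\xi_{i-1})$.
\end{itemize}
For $\mathcal C'_\infty$ I will take the minimal marking $(1,\dots,1,\beta_1,\dots,\beta_{N-mr})$, with $\kappa$ copies of $1$; note that $[\widetilde Y_\kappa]$ has $\kappa$ columns.
\begin{itemize}
\item The successive ranks first drop by the column heights $ms-h_\kappa,\dots,ms-h_1$.
\item They then drop by $1$ at each factor $(A-\beta_j)$, since each $\beta_j$-block is $1\times 1$.
\end{itemize}
This is exactly the rank sequence of the unipotent class with column heights $(ms-h_\kappa,\dots,ms-h_1,1,\dots,1)$, which is $[\widetilde Y_\kappa,1^{N-mr}]$. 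It is a genuine Young diagram because $ms-h_i\ge 1$.

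Hence the legs at $\infty$ coincide with their dimension vectors. The leg at $0$ is the leg of $\tau(\{(1,[Y_\kappa])\})$ in both cases, so $\Gamma_+(\mathscr T)=\Gamma(\widetilde{\mathcal T})$ with matching dimension vectors.

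The only delicate point is being careful that the choice of minimal marking does not affect the diagram, as noted before the lemma on $\Gamma(\mathcal T)$. The chosen ordering is the one that makes the dimension vector match.
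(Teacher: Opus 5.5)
Your proof is correct and follows the same route as the paper: unipotence of $\mathcal C_0$ forces every twist to be $1$, so $\Gamma_+(\mathscr T)=\Gamma\big((\widetilde F^-)^\kappa\cdot\mathcal T\big)$. Induction with the third part of the elementary-operation proposition then yields $[Y_\kappa]$ at $0$, and $[\widetilde Y_\kappa]$ together with the semisimple part at $\infty$. Your extra check that the leg of $\{(1,[\widetilde Y_\kappa]),(\beta_j,[1])_j\}$, marked with the $1$'s first, coincides with that of $[\widetilde Y_\kappa,1^{N-mr}]$ usefully makes explicit what the paper's proof compresses into ``the Young diagram of the nilpotent part''. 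Note only that $ms-h_i\ge 1$ is not automatic (one can have $h_1=ms$), but zero-height columns are simply dropped, so nothing changes.
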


\begin{proof}
The reduced parameter $\mathscr T=(m,k,[Y])$ corresponds to a non-reduced parameter $\mathscr T=(m,k,\mathcal C_0, \mathcal C_\infty)$, with $\mathcal C_0=\{(1, [Y])\}$, and $\mathcal C_\infty$ of rank $N-mr$ and regular semisimple, so with nilpotent part given by the Young diagram $[1^{N-mr}]$. Then, by Prop. \ref{prop:nonnegative_diagram_general_case}, we have $\Gamma_+(\mathcal T)=\Gamma(\mathcal T')$ with $\mathcal T'=(\widetilde{F}^-)^\kappa$. Then, by induction using the third part of Prop. \ref{prop:transformation_parameter_elementary_operation_type_I}, we obtain that the Young diagrams of the nilpotent parts of the conjugacy classes of $\mathcal T'$ are $[Y_\kappa]$ at $0$ and $[\widetilde Y_\kappa, 1^{N-mr}]$ at $\infty$.
\end{proof}

As in type I, in the general case, the 3d mirror quivers for type $A$ Argyres--Douglas correspond to the minimal diagrams:

\begin{proposition} 
\label{prop:3d_mirrors_are_nonneg_diagrams_gen_case}
Let $\mathscr T=(m,k, [Y])$ be a reduced AD parameter. Then in all cases considered in \cite{beem2025simplifying}, the quiver describing the 3d mirror of the corresponding Argyres--Douglas theory is the minimal positive nonabelian Hodge graph $\Gamma_+(\mathscr T)$, with its uniquely defined dimension vector. 
\end{proposition}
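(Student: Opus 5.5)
The statement is a comparison between the explicit minimal diagram of Prop.~\ref{prop:explicit_nonnegative_diagram_general_case} and the 3d mirror quivers in \cite{beem2025simplifying}. So the plan is a case-by-case identification, following the proof of Prop.~\ref{prop:3d_mirrors_are_nonneg_diagrams_type_I} but with the extra leg at infinity.

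\textbf{Step 1: the explicit form of $\Gamma_+(\mathscr T)$.} By Prop.~\ref{prop:explicit_nonnegative_diagram_general_case}, $\Gamma_+(\mathscr T)=\Gamma(\widetilde{\mathcal T})$ with
\[
\widetilde{\mathcal T}=\left(m,\tfrac{s}{\rho},[Y_\kappa],[\widetilde Y_\kappa,1^{N-mr}]\right).
\]
Its slope is $s/\rho>1$, so by Fig.~\ref{fig:core_diagram_AD_type} the core has nonnegative multiplicities:
\begin{itemize}
\item $m$ vertices of dimension $1$, pairwise joined by $\rho(s-\rho)$ edges;
\item $\frac{(\rho-1)(s-\rho-1)}{2}$ loops at each of these vertices;
\item $\rho$ edges from each of them to $\cir{0}_0$ and $s-\rho$ edges to $\cir{0}_\infty$;
\item one edge between $\cir{0}_0$ and $\cir{0}_\infty$.
\end{itemize}
To this core I glue the leg of $\tau(Y_\kappa)$ at $\cir{0}_0$ and the leg of the conjugacy class at infinity at $\cir{0}_\infty$. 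The class at infinity has unipotent part $[\widetilde Y_\kappa]$ together with $N-mr$ further eigenvalues, one per Jordan block of size $1$.

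I need the leg at $\infty$ concretely. The minimal marking lists the eigenvalue $1$ with multiplicity $\widetilde Y_\kappa$'s number of columns, which is $\kappa$. It then lists the $N-mr$ distinct generic eigenvalues once each. Hence the leg has ranks that decrease first along the columns of $[\widetilde Y_\kappa]$ and then by one at each remaining step. I will compute these dimensions explicitly from the leg definition.

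\textbf{Step 2: the case $[Y]=[1^N]$.} Here $[Y_\kappa]=[1^{N-\kappa}]$ and $[\widetilde Y_\kappa]=[(ms-1)^\kappa]$. I then compare with Fig.~3 of \cite{beem2025simplifying} for general $N-nm$, using the dictionary $\kappa\leftrightarrow\mu$, $\rho\leftrightarrow\nu$, $s\leftrightarrow q$, $ms\leftrightarrow p$. I also use Fig.~\ref{fig:dictionary_notations} to match $N-mr$ with the number of extra $U(1)$ nodes (equivalently, the length of the tail) on the infinity side. The checks are: the edge multiplicities between the $m$ ``bouquet'' nodes, the loops, the tails, and the dimension vector obtained from the leg ranks.

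For $k>1$ we have $\kappa=0$, and the match is with Fig.~2 of \emph{loc.~cit.} with the extra tail attached. For $k<1$ the match is with Fig.~3.

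\textbf{Step 3: general $[Y]$, and the main obstacle.} For general $[Y]$ I expect the main difficulty, since \emph{loc.~cit.} obtains those mirrors by quiver subtraction or decay-and-fission rather than by a closed formula. The plan is to check the explicitly worked examples there, as was done for $D_3(\mathfrak{sl}_7,[2,1^5])$.

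For a structural argument I would argue as follows. Quiver subtraction for a non-maximal puncture replaces the full leg at $0$ by the leg of $[Y]$. In our formula, $[Y]$ is distributed between $[Y_\kappa]$ at $0$ and $[\widetilde Y_\kappa]$ at $\infty$, so one must check that these produce the same ranks along the two legs. I would verify this one column at a time, using Lemma~\ref{thm:intermediate_duality_type_I}.

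Finally, the dimension vector is unique by the remark following Prop.~\ref{prop:nonnegative_diagram_general_case}, since $\mathcal C_0$ is unipotent. The phrase ``with its uniquely defined dimension vector'' is therefore justified.
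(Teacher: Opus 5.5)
Your Steps 1 and 2 match the paper: read off $\Gamma_+(\mathscr T)$ from Prop.~\ref{prop:explicit_nonnegative_diagram_general_case} and Fig.~\ref{fig:core_diagram_AD_type}, then compare with the full-puncture figures of \cite{beem2025simplifying}. (For $k>1$ the paper matches directly with Fig.~1 there, the non-type-I figure, rather than Fig.~2 plus a tail.)

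The gap is Step~3. Checking the explicitly worked examples proves the claim only for those examples. Your structural argument compares against the output of quiver subtraction, for which, as you note yourself, there is no closed form. So the column-by-column check has nothing definite to be checked against. Its premise is also off: passing from the full puncture to $[Y]$ does not just swap the leg at $0$. For $D_3(\mathfrak{sl}_7,[1^7])$, $\Gamma_+$ has legs of ranks $(4,3,2,1)$ at $\cir{0}_0$ and $(4,2)$ at $\cir{0}_\infty$. For $D_3(\mathfrak{sl}_7,[2,1^5])$, the example the paper matches with Eq.~(4.9) of \cite{beem2025simplifying}, they are $(3,2,1)$ and $(3,1)$. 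Both legs change.

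The missing idea is the reduction the paper uses instead. By construction $\Gamma_+$ depends only on the orbit under elementary AD-$A$ operations: it is the minimal nonnegative diagram among all $\Gamma(\mathcal T')$ with $\mathcal T'$ in the orbit, and $\Gamma$ is Fourier-invariant. The theories with non-maximal punctures discussed in \S4.3 of \cite{beem2025simplifying} are ones that are dual to theories with the full puncture $[1^N]$. By Props.~\ref{prop:duality_I}, \ref{prop:duality_II_complement_diagram} and \ref{prop:duality_III_gen_case}, these dualities are compositions of elementary AD-$A$ operations. So the two parameters lie in the same orbit and have the same $\Gamma_+$. Since dual theories share their 3d mirror, the identification of Step~2 for the full-puncture dual carries over, with no quiver-subtraction computation at all.

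This is also why the statement is restricted to the cases considered in \cite{beem2025simplifying}. Your Step~3 aims at arbitrary $[Y]$, which is more than is claimed and more than your inputs can deliver.
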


\begin{proof}
In the case $k>1$ and $[Y]=[1^N]$, from Prop. \ref{prop:explicit_nonnegative_diagram_general_case} and Fig. \ref{fig:core_diagram_AD_type}, we immediately obtain the 3d mirror described on Fig. 1 in \emph{loc.~cit.}
For $k<1$ and $[Y]=[1^N]$, similarly we obtain the quiver of Fig. 3 in \emph{loc. cit}. Since the minimal quiver is invariant under elementary AD-$A$ operations, this also holds true for the cases with more general punctures discussed in §4.3 of \emph{loc.~cit.} which are dual to a case with a regular puncture.
\end{proof}

\end{document}